\pgfplotsset{compat=newest}
\pgfplotsset{plot coordinates/math parser=false}
\newlength\figureheight
\newlength\figurewidth
\let\originalleft\left
\let\originalright\right
\renewcommand{\left}{\mathopen{}\mathclose\bgroup\originalleft}
\renewcommand{\right}{\aftergroup\egroup\originalright}
\newcommand{\R}{{\mathbb{R}}}
\renewcommand{\d}{\mathrm{d}}
\renewcommand{\Re}{\mathop{\mathrm{Re}}}
\newcommand{\norm}[1]{\|{#1}\|}
\newcommand{\range}[1]{[{#1}]}
\DeclareMathOperator{\poly}{poly}
\DeclareMathOperator{\polylog}{polylog}
\newtheoremstyle{tight}
  {1.5pt} 
  {1.5pt} 
  {\itshape} 
  {} 
  {\bfseries} 
  {.} 
  {.5em} 
  {} 
\theoremstyle{tight}
\newtheorem{theorem}{Theorem}
\newtheorem{lemma}{Lemma}
\newtheorem{problem}{Problem}
\newtheorem{remark}{Remark}
\newtheorem{definition}{Definition}
\newcommand{\eq}[1]{(\ref{eq:#1})}
\newcommand{\alg}[1]{\hyperref[alg:#1]{Algorithm~\ref*{alg:#1}}}
\newcommand{\defn}[1]{\hyperref[defn:#1]{Definition~\ref*{defn:#1}}}
\renewcommand{\sec}[1]{\hyperref[sec:#1]{Section~\ref*{sec:#1}}}
\newcommand{\thm}[1]{\hyperref[thm:#1]{Theorem~\ref*{thm:#1}}}
\newcommand{\lem}[1]{\hyperref[lem:#1]{Lemma~\ref*{lem:#1}}}
\newcommand{\cor}[1]{\hyperref[cor:#1]{Corollary~\ref*{cor:#1}}}
\newcommand{\prb}[1]{\hyperref[prb:#1]{Problem~\ref*{prb:#1}}}
\newcommand{\fig}[1]{\hyperref[fig:#1]{Figure~\ref*{fig:#1}}}
\newcommand{\appx}[1]{\hyperref[appx:#1]{Appendix~\ref*{appx:#1}}}
\let\@@magyar@captionfix\relax
\numberwithin{equation}{section}
\newcommand{\be}{\begin{equation}}
\newcommand{\ee}{\end{equation}}
\begin{document}

\title{Improved quantum algorithms for linear and nonlinear differential equations}
\author{Hari Krovi}
\email{hari.krovi@riverlane.com, hkrovi@gmail.com}
\orcid{0000-0001-9675-9959}
\affil[]{ Riverlane Research, Cambridge, MA}

\maketitle

\begin{abstract}
We present substantially generalized and improved quantum algorithms over prior work for inhomogeneous linear and nonlinear ordinary differential equations (ODE). Specifically, we show how the norm of the matrix exponential characterizes the run time of quantum algorithms for linear ODEs opening the door to an application to a wider class of linear and nonlinear ODEs. In \cite{BCOW17}, a quantum algorithm for a certain class of linear ODEs is given, where the matrix involved needs to be diagonalizable. The quantum algorithm for linear ODEs presented here extends to many classes of non-diagonalizable matrices including singular matrices. The algorithm here is also exponentially faster than the bounds derived in \cite{BCOW17} for certain classes of diagonalizable matrices.

Our linear ODE algorithm is then applied to nonlinear differential equations using Carleman linearization (an approach taken recently by us in \cite{Liue2026805118}). The improvement over that result is two-fold. First, we obtain an exponentially better dependence on error. This kind of logarithmic dependence on error has also been achieved by \cite{Xue_2021}, but only for homogeneous nonlinear equations. Second, the present algorithm can handle any sparse matrix (that models dissipation) if it has a negative log-norm (including non-diagonalizable matrices), whereas \cite{Liue2026805118} and \cite{Xue_2021} additionally require normality.
\end{abstract}

\section{Introduction}
Differential equations lie at the heart of many important problems in several fields including fluid mechanics, plasma physics and quantum physics. Indeed, simulating the Schr\"{o}dinger equation, a central equation in quantum physics, was one of the first motivations to build a quantum computer and it is a linear homogeneous differential equation. Simulating the Schr\"{o}dinger equation or, in other words, Hamiltonian simulation can be done efficiently for several classes of Hamiltonians and important techniques were developed in a long line of work \cite{Lloyd_Ham_sim, BAC07, Ham_sim_opt, Low2019hamiltonian, chakraborty_et_al:LIPIcs:2019:10609}. The quantum algorithmic techniques developed in those works extend beyond Hamiltonian simulation. Some of those techniques such as block encoded matrix operations will be used in this paper. The technique of block encoded matrix operations is further developed in \cite{vanApeldoorn2020quantumsdpsolvers, 10.1145/3313276.3316366} by extending the block encoding to functions of a sparse matrix.

Another important development in quantum algorithms is the algorithm to solve linear systems of equations \cite{HHL09}. This algorithm has been applied to numerous problems such as in machine learning and differential equations \cite{Ber14, BCOW17,CLO20, CL19,Liue2026805118, LPG20}. The run-time of the original algorithm \cite{HHL09} has an exponentially better dependence on dimension and it scales quadratically as function of the condition number $\kappa$ of the linear system. This dependence on the condition number has been improved in subsequent work to $\kappa\log\kappa$ \cite{ambainis:LIPIcs:2012:3426}. In \cite{CKS15}, the dependence on the solution error has been exponentially improved and made logarithmic. In a recent series of papers, the scaling has been improved to $\kappa\log1/\epsilon$ \cite{Subasi,Dong_Lin,Lin2020optimalpolynomial,QLSA_linear_kappa}. The Quantum Linear Systems Algorithm (QLSA) with these improvements will be used here as well.

Quantum algorithms for differential equations have been designed in several papers starting with the early work on nonlinear differential equations in \cite{LO08}. In \cite{Ber14}, a quantum algorithm to solve linear inhomogeneous equations is presented that uses high-order methods. In \cite{BCOW17}, using truncated Taylor series, an exponential improvement in the dependence of the solution error was obtained. In \cite{CL19}, a quantum algorithm to solve time dependent linear differential equations was presented using spectral methods. These spectral methods are also applied to develop a quantum algorithm to solve linear partial differential equations in \cite{CLO20}.

Recently, in \cite{Liue2026805118}, a quantum algorithm to solve dissipative nonlinear differential equations is given using the so called Carleman linearization. This algorithm is efficient when the ratio of nonlinearity to dissipation is less than $1$. In \cite{LPG20}, a quantum algorithm for nonlinear differential equations is given when the time scale of simulation is short. However, this was heuristically derived without a rigorous analysis. Recently, using homotopic perturbation methods, another quantum algorithm for nonlinear differential equations is given in \cite{Xue_2021}, where the dependence on the error is exponentially better than in \cite{Liue2026805118}. However, the ratio of nonlinearity to dissipation is required to be smaller than in \cite{Liue2026805118}. The algorithm of \cite{Xue_2021} is also only for homogeneous differential equations with the assumption of normality. In the regime of nonlinearity considered in \cite{Xue_2021}, homogeneous differential equations have an exponential decay of solutions leading to a run-time that is exponential in $T$ (the simulation time).

Other works that address aspects of differential equations but with an application to plasma physics are \cite{ESP19}, where a quantum algorithm to solve a linearized Vlasov equation is given. In \cite{DS20, ESP21}, a formulation of several interesting problems in plasma physics that could be amenable to quantum algorithms through linearization is given and in \cite{Jos20}, a quantum algorithm to solve classical dynamics using the Koopman operator approach is given. In \cite{novikau2021quantum}, a quantum algorithm to solve cold plasma waves is given. In \cite{Hubisz21}, a connection between Wigner-Weisskopf theory of spontaneous decay and non-Hermitian evolution is made and an implementation using ``non-Hermitian" quantum circuits is explored.
 
In this paper, we first present a quantum algorithm to solve time-independent linear inhomogeneous equations that improves on the results of \cite{BCOW17} in two ways. In order to do this, we extend their algorithm to include many classes of non-diagonalizable and even singular matrices that are not covered by their algorithm. As was pointed out in \cite{BCOW17}, every matrix is close to a diagonalizable matrix (since the latter are dense in the set of all matrices). But using the closest diagonalizable matrix leads to an exponentially worse error. Here, we are able to extend to many classes of non-diagonalizable matrices keeping the logarithmic dependence on error. We give a characterization of matrices for which the algorithm is efficient by bounding the gate and query complexity. Diagonalization was needed in \cite{BCOW17} in deriving a bound on the condition number and a bound on the solution error. To circumvent the need for diagonalization, we provide a different and improved analyses for both these parts.

The second main difference between the two algorithms is that even when we restrict to diagonalizable matrices there are cases when the bound derived for the algorithm of \cite{BCOW17} is exponentially worse than ours. Specifically, for diagonalizable matrices that have non-positive log-norm but an exponentially high condition number $\kappa_V$, where $V$ is the matrix of eigenvectors, the gate complexity bound for the algorithm of \cite{BCOW17} is exponential since it has at least a linear dependence on $\kappa_V$. In our algorithm, this dependence is removed leading to an exponential improvement in this case. More generally, for all diagonalizable matrices with bounded norm of the matrix exponential (i.e., $\|\exp(At)\|$ is bounded) but with a high $\kappa_V$, our algorithm is exponentially better than the bound in \cite{BCOW17}. Numerical simulations for a specific example of the actual condition number of \cite{BCOW17} show that the bounds derived there might be very loose for such matrices.

Despite these differences, the algorithm presented here has many common elements with \cite{BCOW17}. Like that work, we use truncated Taylor series and construct a quantum linear system whose solution gives a quantum state proportional to the solution of the linear differential equation. We also make use of the ``ramp" to boost the success probability using the technique introduced in \cite{Ber14} and used in subsequent papers. However, the linear system is different in our case and we implement it using the techniques of block encoding. The reason we pick a different linear system is that it is easier to analyze using the techniques developed here.

We then apply our linear ODE algorithm to solve nonlinear differential equations by using Carleman linearization. We improve on our prior work \cite{Liue2026805118} by exponentially improving the dependence on the error. The main reason that \cite{Liue2026805118} has a polynomial dependence on the error is the use of Euler method. The linear ODE obtained by Carleman linearization need not be diagonalizable and hence \cite{BCOW17} is not immediately applicable. But the Carleman ODE has negative log-norm if the dissipation matrix of the nonlinear system also has that property. This makes it possible to use our present algorithm to achieve an exponential improvement in the error even though the matrix may not be diagonalizable. This also means that our linear ODE algorithm applied to Carleman linearized ODEs can handle non-normal and non-diagonalizable matrices. In principle, we should be able to generalize this further since all we need for our linear ODE algorithm is a general property that the norm of the matrix exponential is bounded. However, it seems hard to characterize dissipation matrices of nonlinear equations that lead to Carleman ODEs with this general property. 

Since the first version of this paper appeared, there have been some more results which use the characterization of run time in terms of matrix exponential introduced here. In \cite{reaction_diffusion, costa2023improving}, the authors apply Carleman linearization to reaction-diffusion equations which form a special class of nonlinear differential equations. In \cite{time_marching,berry2022quantum}, the authors design a quantum algorithm for time-dependent ODEs where the run-time bound depends on the matrix exponential and in \cite{jennings2023cost,jennings2023efficient} detailed run-time costs and optimizations of the algorithms for linear systems and differential equations is presented. In \cite{ode_theory}, some limitations of quantum algorithms for ODEs are explored. In \cite{jin2022quantum,LCHS,an2023quantum} the authors use a Hamiltonian simulation approach to non-unitary dynamics.

This paper is organized as follows. In \sec{problem_statement}, we precisely state the problem that our algorithm solves as well as some background on notation and results on block encoded matrices that we use later on in the paper. In \sec{norm_exp}, we discuss the behaviour of the norm of the matrix exponential, its dependence on quantities such as spectral abscissa and log-norm. We then give a characterization of matrices (even non-diagonalizable ones) that are well-behaved for our algorithm. Then in \sec{algorithm}, we present the steps of the algorithm. In \sec{analysis}, we present the analysis of the algorithm by deriving bounds on the solution error, condition number of the linear system, the probability of success and the implementation of the algorithm. In \sec{main}, we prove the correctness of the algorithm and bound its gate and query complexity. In \sec{nonlinear}, we apply the linear ODE algorithm to nonlinear differential equations and show an exponential improvement in error. We also show that more general classes of nonlinear differential equations can be solved efficiently than the ones considered earlier. Finally, in \sec{conclusions}, we present some conclusions and open questions.

\section{Problem statement and preliminaries}\label{sec:problem_statement}
In this section, we set notation and give some background and results from the literature on block encoding. We also precisely define the input model and the problem our algorithm solves.

First, we set notation and define the quantities we use in this paper. We also give results from prior work that we use later on in the paper. In this paper, $\|\cdot\|$ denotes $l_2$ norm of a vector or a matrix. For an arbitrary matrix $A$, an eigenvalue is a root of the characteristic polynomial and singular value is the square root of an eigenvalue of $A^\dag A$. We use the standard notation $g(n)=O(f(n))$ for a function $f(n)$ to mean $g(n)\leq c f(n)$ for some $c$ independent of $n$. To distinguish this $O$ from the $O$ used for oracles, we always use a subscript for the oracles (defined below).

The specific problem we consider is to obtain the solution of a linear ordinary differential equation as a quantum state. We state it below more precisely.
\begin{problem}\label{prb:ODE}
Assume we are given a stable, sparse $d\times d$ matrix $A$, consider the linear ordinary differential equation
\begin{equation}
    \frac{dx}{dt}=Ax + b\,,\,x(0)=x_0\,,
\end{equation}
 and access to $A$, $b$ and $x_0$ via oracles $O_A$, $O_b$ and $O_x$ as in \defn{oracles} below. Produce a quantum state proportional to $x(T)$ for some $T>0$ to within an error $\epsilon$ (in $l_2$ norm).
\end{problem}

The input model we consider is the same as the one considered in prior work on differential equations \cite{Ber14, BCOW17, CL19, CLO20, Liue2026805118}. The matrix $A$ is assumed to be sparse with at most $s_r$ nonzero entries in any row and at most $s_c$ nonzero entries in any column.
\begin{definition}\label{defn:oracles}
We assume that $A\in \mathbb{C}^{2^n\times 2^n}$ can be accessed through the following oracles.
\begin{align}
    &O_r\ket{i,k}=\ket{i,r_{ik}}\,,\\
    &O_c\ket{i,k}=\ket{i,c_{ik}}\,,\\
    &O_A\ket{i,j}\ket{0}^{\otimes b}=\ket{i,j,a_{ij}}\,,
\end{align}
where $r_{ij}$ (respectively $c_{ji}$) is the $j^{th}$ nonzero entry in the $i^{th}$ row (resp. column). If there are fewer than $j$ entries, then $r_{ij}$ (resp. $c_{ji}$) is set to $j+2^n$. Here $a_{ij}$ is a $b$-bit binary representation of the $(i,j)$ matrix entry of $A$.

Similarly, we have oracles $O_b$ and $O_x$ to prepare a normalized version of $b$ and $x_0$. We assume that their norms are known. Specifically, let $O_x$ be any unitary that maps $\ket{1}\ket{\phi}$ to $\ket{1}\ket{\phi}$ for any state $\ket{\phi}$ and $\ket{0}\ket{0}$ to $\ket{0}\ket{\bar{x}_{0}}$, where $\bar{x}_{0}=x_{0}/\|x_{0}\|$. Let $O_b$ be a unitary that similarly maps $\ket{1}\ket{\phi}$ to $\ket{1}\ket{\phi}$ for any state $\ket{\phi}$ and $\ket{0}\ket{0}$ to $\ket{0}\ket{\bar{b}}$, where $\bar{b}=b/\|b\|$.
\end{definition}

Next, we give here some results on block encoded matrices that will be useful for us. Our main tool to construct a quantum algorithm is to use matrix arithmetic to construct the linear system corresponding to a given ODE. The linear system can be broken up into pieces involving subtraction, multiplication and inversion of matrices and we give a list of results from the literature that bound the complexity of these operations.

First, we define block encoding. The idea behind block encoding is to embed an arbitrary sparse matrix into a unitary matrix so that one can use embedding to perform arithmetic operations (see \cite{10.1145/3313276.3316366} for more details).
\begin{definition}\label{defn:block_encoding}
Suppose $A$ is an $a$-qubit operator, $\alpha,\epsilon\in \mathbb{R}_+$ and $b\in \mathbb{N}$, then an $a+b$ qubit unitary $U$ is said to be an $(\alpha, b, \epsilon)$ block encoding of $A$ if
\begin{equation}
    \|A-\alpha(\bra{0}^b\otimes I)U(\ket{0}^b\otimes I)\|\leq \epsilon\,.
\end{equation}
\end{definition}
\begin{definition}\label{defn:Hermitian_complement}
We denote by $\bar{A}$ the Hermitian complement of $A$ defined below 
\begin{equation}
    \bar{A}=\begin{pmatrix}0&A\\A^\dag &0
    \end{pmatrix}\,.
\end{equation}
\end{definition}
Next, we state the relevant results from the literature that we use in our algorithm. First, we need the following lemma that describes a way to block-encode a sparse matrix.
\begin{lemma}[\cite{10.1145/3313276.3316366}, Lemma 48]\label{lem:sparse_matrix}
Suppose $A$ is an $s_r$-row-sparse and $s_c$-column-sparse $a$-qubit matrix where each element of $A$ has absolute value at most 1 and where we have oracle access to the positions and values of the nonzero entries, then we have an implementation of $(\sqrt{s_rs_c},a+3,\epsilon)$ block encoding of $A$ with a single use of $O_r$ and $O_c$ and two uses of $O_A$ and using $O(a+\log^{2.5}(s_rs_c/\epsilon))$ elementary gates and $O(b+\log^{2.5}(s_rs_c/\epsilon))$ ancillas.
\end{lemma}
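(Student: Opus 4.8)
The plan is to construct the block-encoding unitary $U$ explicitly from the three oracles and then verify \defn{block_encoding} directly by computing a single matrix element of $U$. The guiding identity is that for an $s_r$-row-sparse, $s_c$-column-sparse matrix the entry $A_{ij}$ can be recovered as a product of two ``diffusion'' amplitudes (one over the nonzero positions of column $j$, one over those of row $i$) together with the value $A_{ij}$ itself written into an amplitude, and these three factors multiply to $A_{ij}/\sqrt{s_r s_c}$, which is exactly the normalization demanded by the claimed subnormalization $\alpha=\sqrt{s_r s_c}$.

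Concretely, I would work with the system register holding the index, one ancilla index register, a flag qubit, and a scratch register for the value, and build $U$ in three stages. (i) A column-diffusion step: prepare $\frac{1}{\sqrt{s_c}}\sum_{l=1}^{s_c}\ket{l}$ on the ancilla and apply $O_c$ so that $\ket{0}\ket{j}\mapsto \frac{1}{\sqrt{s_c}}\sum_{l}\ket{c_{lj}}\ket{j}$, loading into the ancilla a uniform superposition of the nonzero row-positions of column $j$. (ii) A value-injection step: use $O_A$ to write $a_{i'j}$ into scratch, apply a single-qubit rotation on the flag controlled on that value so that the $\ket{0}$ component of the flag picks up the amplitude $A_{i'j}$ (magnitude and phase), then uncompute the scratch with a second $O_A$. (iii) A row-diffusion step: swap the ancilla and system registers and apply the inverse of the analogous $O_r$-based diffusion, which contributes the overlap $\frac{1}{\sqrt{s_r}}$ exactly when $j$ is a nonzero column-position of row $i'$ (it is, since $A_{i'j}\neq 0$) and returns the ancilla to $\ket{0}$. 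This uses one $O_c$, one $O_r$, and two $O_A$, matching the query count.

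The verification step is to evaluate $\bra{0}^{b}\bra{i}\,U\,\ket{0}^{b}\ket{j}$ and check that the surviving term is exactly the $i'=i$ contribution, giving $\frac{1}{\sqrt{s_c}}\cdot A_{ij}\cdot\frac{1}{\sqrt{s_r}}=A_{ij}/\sqrt{s_r s_c}$; all cross terms must cancel, which is where the row/column consistency of the sparse-access data and the padding convention ($r_{ik}=k+2^n$ for absent entries) have to be handled carefully so that padded ``dummy'' positions never contribute to the $\ket{0}$-flagged, $\ket{0}$-ancilla block. Finally I would bound the error and the gate/ancilla counts: the construction is exact except for the controlled rotation in step (ii), whose angle (an $\arcsin$, together with the phase) must be computed reversibly from the $b$-bit value $a_{i'j}$ and synthesized to accuracy $\epsilon$. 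Computing this angle and applying the rotation with the standard reversible-arithmetic and rotation-synthesis circuits costs $O(\log^{2.5}(s_r s_c/\epsilon))$ gates and ancillas on top of the $b$-qubit value register, giving the stated $O(a+\log^{2.5}(s_r s_c/\epsilon))$ gate count (the $a$ coming from the diffusion preparations) and $O(b+\log^{2.5}(s_r s_c/\epsilon))$ ancillas; the $a+3$ block-encoding qubits are the duplicate index register together with the flag and sparsity-structure qubits.

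I expect the main obstacle to be the bookkeeping in the verification step rather than any single hard inequality: ensuring that the column-diffusion and row-diffusion operators reference mutually consistent lists of nonzero positions so that exactly one path survives and contributes $A_{ij}$, and that padded entries are mapped outside the encoded block. The quantitative $\log^{2.5}$ factor is not conceptually deep but requires invoking a concrete circuit for computing the rotation angle to $\epsilon$ precision, so I would cite or reconstruct that arithmetic rather than derive the exponent from scratch.
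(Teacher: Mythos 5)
The paper does not prove this lemma at all---it is imported verbatim as Lemma~48 of \cite{10.1145/3313276.3316366}---so there is no in-paper argument to compare against, and your reconstruction is essentially the standard construction from that reference: two sparse-index diffusion unitaries glued by a value-controlled rotation, with the subnormalization $\sqrt{s_r s_c}$ arising as the product of the two $1/\sqrt{s}$ overlaps, and with the stated query, gate, and ancilla counts. The only cosmetic differences are that the cited proof splits the entry as $\sqrt{a_{ij}}$ between the row and column sides rather than injecting the full value on one side (both are valid since $|a_{ij}|\leq 1$), and that the off-diagonal terms you worry about do not need to ``cancel''---for $i'\neq i$ they sit on ancilla/system basis states orthogonal to $\bra{0}^{b}\bra{i}$ and are simply projected out.
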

Next, we need the following result from \cite{chakraborty_et_al:LIPIcs:2019:10609} to invert a block encoded matrix.
\begin{lemma}[\cite{chakraborty_et_al:LIPIcs:2019:10609}, Lemma 9]\label{lem:matrix_inversion}
Let $A$ be a matrix with condition number $\kappa\geq 2$. Let $H$ be the Hermitian complement of $A$ and suppose that $I/\kappa\leq H\leq I$. Let
\begin{equation}
    \delta = o(\epsilon/\kappa^2\log^3(\kappa^2/\epsilon))\,.
\end{equation}
If $U$ is an $(\alpha, a, \delta)$ block encoding of $H$ that has gate complexity $T_U$, then we can implement a 
\begin{equation}
    (2\kappa,a+O(\log(\kappa^2\log(1/\epsilon))),\epsilon)
\end{equation}
block encoding of $H^{-1}$ with gate complexity
\begin{equation}
    O(\alpha\kappa(a+T_U)\log^2(\kappa^2/\epsilon))\,.
\end{equation}
\end{lemma}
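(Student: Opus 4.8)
The plan is to realize $H^{-1}$ as a polynomial in $H$ and to implement that polynomial by quantum singular value transformation (QSVT) applied to the given block encoding $U$, following the technique developed in \cite{10.1145/3313276.3316366, CKS15}. Since $H$ is the Hermitian complement $\bar A$ and, by hypothesis, its relevant spectrum lies in $[1/\kappa,1]$, the subnormalized operator $H/\alpha$ that sits in the top-left block of $U$ has singular values in $[1/(\alpha\kappa),1/\alpha]$. On this interval the function we wish to apply is $x\mapsto 1/x$, which attains values up to $\alpha\kappa$; to keep the transformed operator block-encodable we must absorb a subnormalization factor, and this is precisely the origin of the $2\kappa$ appearing in the output block encoding.

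First I would construct the approximating polynomial. Using the Chebyshev-truncated integral representation of $1/x$ as in \cite{CKS15}, one obtains an odd polynomial $P$ with $\|P\|_{[-1,1]}\le 1$ that approximates the rescaled inverse $x\mapsto \tfrac{1}{2\kappa\alpha x}$ to additive error $\tilde\epsilon$ on $[1/(\alpha\kappa),1/\alpha]$ and whose degree is $d=\tilde O(\alpha\kappa)$, i.e.\ $O\big(\alpha\kappa\log(\alpha\kappa/\tilde\epsilon)\big)$. The key facts to verify are that $P$ can be taken odd, so that it induces a genuine singular-value transformation sending each singular value $\sigma$ of $A$ to $\approx 1/(2\kappa\alpha\sigma)$ and thereby produces $H^{-1}$ in the appropriate off-diagonal block, and that its sup-norm stays below $1$ so that the QSVT phase factors exist.

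Next I would invoke the eigenvalue-transformation theorem: given an odd degree-$d$ polynomial and the $(\alpha,a,\delta)$ block encoding $U$ of $H$, one implements a block encoding of $P(H/\alpha)$ using $d$ alternating applications of $U$ and $U^\dagger$ interleaved with single-qubit phase rotations controlled on the block-encoding ancillas. Each application costs $T_U+O(a)$ gates, so the query count is $d$ and the gate complexity is $d\cdot(T_U+O(a))=O\big(\alpha\kappa(a+T_U)\log^2(\kappa^2/\epsilon)\big)$ once $\tilde\epsilon$ is fixed as below. Expressing the polynomial as a linear combination of Chebyshev terms introduces only $O(\log d)=O(\log(\kappa^2\log(1/\epsilon)))$ additional ancilla qubits, matching the stated ancilla count, and the construction outputs a $(2\kappa,\cdot,\epsilon)$ block encoding of $H^{-1}$.

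The hard part will be the error analysis that pins down the stringent requirement $\delta=o\big(\epsilon/\kappa^2\log^3(\kappa^2/\epsilon)\big)$ on the input block-encoding error. Two error sources must be balanced against $\epsilon$: the approximation error $\tilde\epsilon$ of $P$, which is harmless once set polylogarithmically small; and the perturbation error from running QSVT on the \emph{approximate} operator $\tilde H$ (within $\delta$ of $H$) rather than on $H$ itself. For the latter I would use the robustness of QSVT under perturbations of the block encoding, which amplifies the input error by roughly the polynomial degree, together with the fact that $1/x$ has derivative of order $\kappa^2$ near the small end of the spectrum and that the output is subnormalized by $2\kappa$. Tracking these factors — degree $\tilde O(\kappa)$, the $\kappa^2$ sensitivity of the inverse, and the $\log$ overheads of the Chebyshev expansion — is what forces $\delta$ down to $o\big(\epsilon/\kappa^2\log^3(\kappa^2/\epsilon)\big)$, and making these polylogarithmic powers line up correctly is the most delicate step.
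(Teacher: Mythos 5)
The paper does not prove this lemma at all: it is imported verbatim as Lemma~9 of \cite{chakraborty_et_al:LIPIcs:2019:10609}, so there is no in-paper proof to compare against. Your sketch is a faithful reconstruction of the standard argument in that reference --- approximate $x\mapsto \tfrac{1}{2\kappa\alpha x}$ on $[1/(\alpha\kappa),1/\alpha]$ by an odd, sup-norm-bounded polynomial of degree $O(\alpha\kappa\log(\kappa^2/\epsilon))$ \`a la \cite{CKS15} and apply it to the block encoding --- and the normalization bookkeeping (the $2\kappa$ subnormalization, the $O(\log(\kappa^2\log(1/\epsilon)))$ ancillas, the role of the Hermitian complement in making the spectrum symmetric so that an odd polynomial suffices) is correct. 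Two points are asserted rather than derived and would need to be filled in to make this a complete proof: the second factor of $\log(\kappa^2/\epsilon)$ in the gate count, which in \cite{chakraborty_et_al:LIPIcs:2019:10609} comes from implementing the polynomial as a linear combination of Chebyshev terms rather than from the degree alone, and the propagation of the input error $\delta$ through the degree-$d$ transformation, which is exactly the computation that produces the threshold $\delta=o(\epsilon/\kappa^2\log^3(\kappa^2/\epsilon))$; you correctly identify the relevant factors (degree, the $\kappa^2$ sensitivity of $1/x$ near the bottom of the spectrum, and the logarithmic overheads) but do not carry out the multiplication. As a reconstruction of a cited black-box result this is the right approach and contains no wrong steps.
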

\begin{remark}
In the above lemma, when $H$ is a Hermitian complement of a matrix $A$, then after inversion, the resulting block encoded matrix satisfies
\begin{equation}
    \|A^{-1}-2\kappa(\bra{0}^b\otimes \bra{1}\otimes I)U(\ket{0}^b\otimes\ket{0}\otimes I)\|\leq \epsilon\,,
\end{equation}
where the middle qubit corresponds to the qubit needed for the Hermitian complement. By appending an $X$ gate to the block encoding for that qubit, this can be brought into the above form of block encoding of $A^{-1}$ (as in \defn{block_encoding}).
\end{remark}
The following result describes how to perform matrix arithmetic on block-encoded matrices.
\begin{lemma}[\cite{chakraborty_et_al:LIPIcs:2019:10609}, Lemma 6, \cite{10.1145/3313276.3316366}, Lemmas 52, 54]\label{lem:matrix_arithmetics}
If $A$ has an $(\alpha,a,\epsilon)$ block encoding with gate complexity $T_A$ and $B$ has a $(\beta,b,\delta)$ block encoding with gate complexity $T_B$, then
\begin{enumerate}
    \item $\bar{A}$ has an $(\alpha,a+1,\epsilon)$ block encoding that can be implemented with gate complexity $O(T_A)$.
    \item $A+B$ has an $(\alpha+\beta,a+b,\beta\epsilon+\alpha\delta)$ block encoding that can be implemented with gate complexity $O(T_A+T_B)$.
    \item $AB$ has an $(\alpha\beta,a+b,\alpha\delta+\beta\epsilon)$ block encoding with gate complexity $O(T_A+T_B)$.
\end{enumerate}
\end{lemma}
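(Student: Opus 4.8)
The plan is to prove the three claims separately, since each corresponds to a distinct block-encoding gadget, and in each case the argument has the same two moving parts: (i) exhibit a unitary whose top-left block --- obtained by projecting all block-encoding ancillas onto $\ket{0}$, an operation I abbreviate $\Pi(\cdot)\Pi := (\bra{0}\otimes I)(\cdot)(\ket{0}\otimes I)$ on the relevant ancilla register --- equals the target matrix up to the stated subnormalization, and (ii) propagate the input errors $\epsilon,\delta$ through this construction using \defn{block_encoding} and the triangle inequality. The one estimate that controls all the cross terms is that projecting a unitary onto ancilla $\ket{0}$ is a contraction, i.e.\ $\|\Pi U\Pi\|\le 1$.

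For the Hermitian complement (part 1), let $U$ be the given $(\alpha,a,\epsilon)$ block encoding of $A$. I would adjoin one flag qubit $f$ and set
\begin{equation}
    W = \ket{0}\!\bra{1}_f\otimes U + \ket{1}\!\bra{0}_f\otimes U^\dagger\,,
\end{equation}
which is unitary because the off-diagonal flag terms kill the cross products and $UU^\dagger=U^\dagger U=I$. Projecting only the $a$ ancillas of $U$ onto $\ket{0}$ while keeping $f$ as a system qubit leaves $\ket{0}\!\bra{1}_f\otimes(\Pi U\Pi) + \ket{1}\!\bra{0}_f\otimes(\Pi U^\dagger\Pi)$, which is exactly $\bar A/\alpha$ up to the encoding error; since the operator norm of a Hermitian complement equals the norm of its off-diagonal block, the induced error is again at most $\epsilon$. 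The cost is one controlled-$U$, one controlled-$U^\dagger$ and a flag flip, i.e.\ $O(T_A)$, and the overhead is the single extra qubit, giving $a+1$.

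For the sum (part 2) I would use the linear-combination-of-unitaries (prepare/select/unprepare) pattern: a one-qubit prepare $V$ with $V\ket{0}=(\sqrt{\alpha}\ket{0}+\sqrt{\beta}\ket{1})/\sqrt{\alpha+\beta}$, a select that applies $U_A$ when the control is $\ket{0}$ and $U_B$ when it is $\ket{1}$, conjugated by $V$. A direct computation shows the projected block equals $(\alpha\,\Pi U_A\Pi+\beta\,\Pi U_B\Pi)/(\alpha+\beta)$, so after multiplying by the subnormalization $\alpha+\beta$ it reproduces $A+B$ up to an error bounded by the triangle inequality through the stated combination of $\epsilon$ and $\delta$; the ancilla count is $a+b$ (reusing one qubit as the control) and the cost is $O(T_A+T_B)$. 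For the product (part 3) I would compose the two encodings on a shared system register with disjoint ancilla registers, form $U_AU_B$, and project both ancilla blocks; writing $A=\alpha\Pi U_A\Pi+E_A$ and $B=\beta\Pi U_B\Pi+E_B$ with $\|E_A\|\le\epsilon$, $\|E_B\|\le\delta$ gives
\begin{equation}
    AB-\alpha\beta\,\Pi(U_AU_B)\Pi = \alpha(\Pi U_A\Pi)E_B + \beta E_A(\Pi U_B\Pi)+E_AE_B\,,
\end{equation}
and bounding each factor with $\|\Pi U\Pi\|\le1$ yields error $\alpha\delta+\beta\epsilon$ to first order.

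The step I expect to be the main obstacle is the error bookkeeping in the product (and, to a lesser extent, the sum): one must keep the two ancilla registers disjoint so that the projectors commute through the opposite unitary and compose cleanly as $\Pi_a\Pi_b$, and then control the first-order cross terms $\alpha(\Pi U_A\Pi)E_B$ and $\beta E_A(\Pi U_B\Pi)$ with the contraction estimate while discarding the genuinely second-order term $E_AE_B$. Everything else --- the unitarity of each gadget, the fact that the projected blocks land exactly on $\bar A$, $A+B$, $AB$, and the gate counts $O(T_A)$ and $O(T_A+T_B)$ --- is routine once the register layout is fixed.
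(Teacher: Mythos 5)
This lemma is imported by the paper from the cited references (\cite{chakraborty_et_al:LIPIcs:2019:10609}, Lemma~6 and \cite{10.1145/3313276.3316366}, Lemmas~52--54) without an in-text proof, so there is nothing internal to compare against; your gadgets (flag-qubit construction for $\bar A$, prepare--select--unprepare for $A+B$, composition on disjoint ancilla registers for $AB$) are exactly the standard constructions used in those references, and the unitarity checks, projected blocks, and gate counts are all correct. One bookkeeping refinement: for the product, rather than expanding $A=\alpha\Pi U_A\Pi+E_A$, $B=\beta\Pi U_B\Pi+E_B$ and discarding the second-order term $E_AE_B$, telescope as
\begin{equation}
    AB-\alpha\beta\,\Pi U_A\Pi\,\Pi U_B\Pi=(A-\alpha\Pi U_A\Pi)B+\alpha\,\Pi U_A\Pi\,(B-\beta\Pi U_B\Pi)\,,
\end{equation}
and bound the two terms by $\epsilon\|B\|\le\epsilon\beta$ and $\alpha\|\Pi U_A\Pi\|\,\delta\le\alpha\delta$; this yields the stated $\alpha\delta+\beta\epsilon$ exactly, with no leftover $\epsilon\delta$. (For the sum, your triangle-inequality bound actually gives $\epsilon+\delta$, which is the natural error of the two-term LCU; the $\beta\epsilon+\alpha\delta$ quoted in the lemma is simply the looser form inherited from the general linear-combination statement, so nothing is amiss.)
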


Next, we need the following theorem to implement the QLSA.
\begin{theorem}[\cite{QLSA_linear_kappa}, Theorem 19]\label{thm:block_QLSA}
Let $A$ be such that $\|A\|=1$ and $\|A^{-1}\|=\kappa$. Given a oracle block encoding of $A$ and an oracle for implementing $\ket{b}$, there exists a quantum algorithm which produces the normalized state $A^{-1}\ket{b}$ to within an error $\epsilon$ using 
\begin{equation}
    O(\kappa\log 1/\epsilon)
\end{equation}
calls to the oracles.
\end{theorem}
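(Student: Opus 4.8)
The plan is to realize $A^{-1}$ as a matrix function applied to the given block encoding, apply it to $\ket{b}$, post-select on the block-encoding ancilla, and then boost the resulting amplitude. Since $\norm{A}=1$ and $\norm{A^{-1}}=\kappa$, the singular values of $A$ lie in $[1/\kappa,1]$, so inverting $A$ amounts to applying the function $f(\sigma)=1/\sigma$ to the singular values of the block encoding. By quantum singular value transformation one can apply a degree-$D$ polynomial approximation of $1/x$ on $[1/\kappa,1]$, but the best such polynomial has $D=O(\kappa\log(\kappa/\epsilon))$, which is suboptimal by a factor $\log\kappa$. The entire point of the theorem is to remove this factor, so a naive QSVT argument does not suffice.

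To obtain the optimal $O(\kappa\log 1/\epsilon)$ scaling I would split the task into a coarse stage and a refinement stage. For the coarse stage, I would reformulate the linear system as the problem of preparing the unique zero-eigenvalue eigenstate of a Hermitian family $H(s)$, $s\in[0,1]$, interpolating between a trivial operator whose null state is easy to prepare and a final operator whose null state is proportional to $A^{-1}\ket{b}$; this is the randomization-free adiabatic linear-systems construction. The instantaneous spectral gap of $H(s)$ is bounded below by $\Omega(1/\kappa)$ along a schedule chosen to slow down where the gap is smallest. Discretizing the evolution into a walk operator built from the block encoding of $A$ and the state-preparation oracle for $\ket{b}$, the discrete adiabatic theorem guarantees that $O(\kappa)$ steps track the null state with constant fidelity, with no extra $\log\kappa$ overhead.

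For the refinement stage, I would apply eigenstate filtering: a minimax polynomial in the qubitized walk operator that is sharply peaked at the target eigenvalue and suppresses the orthogonal subspace. A Chebyshev-based filter of degree $O(\kappa\log 1/\epsilon)$ raises the overlap with the true solution from a constant to $1-\epsilon$, and combining it with fixed-point amplitude amplification produces the normalized state $A^{-1}\ket{b}$ while keeping the success amplitude $\Theta(1)$. Summing the two stages yields the claimed $O(\kappa\log 1/\epsilon)$ query count.

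The hardest part will be the non-asymptotic analysis of the discrete adiabatic theorem: one needs an error bound in terms of the number of steps and the instantaneous gap that is tight enough to give linear dependence on $\kappa$, which requires controlling boundary contributions and the summation-by-parts remainder terms in the discrete setting, and choosing the schedule so the integral of the gap factor stays $O(\kappa)$. A secondary technical obstacle is assembling the interpolating walk operator from a single use of the block encoding of $A$ and the $\ket{b}$ oracle per step, so that the per-step query cost is $O(1)$; once both are in place, the filtering and amplitude-amplification steps are standard.
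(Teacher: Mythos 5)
The paper does not prove this statement: it is imported verbatim from the cited reference (\cite{QLSA_linear_kappa}, Theorem 19) and used as a black box. Your sketch correctly reconstructs the strategy of that reference --- a discrete-adiabatic-theorem analysis of an interpolating Hamiltonian built from the block encoding to reach constant overlap in $O(\kappa)$ steps, followed by Chebyshev eigenstate filtering of degree $O(\kappa\log 1/\epsilon)$ and amplitude amplification --- so it is essentially the same approach as the source, and the technical obstacles you flag (the non-asymptotic discrete adiabatic bound and the per-step query cost) are indeed where the real work lies there.
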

\begin{remark}
In using \lem{matrix_inversion} and \thm{block_QLSA}, we will apply them to matrices with norms that are $O(1)$ rather than $1$. This does not change the asymptotic scaling of the gate complexities.
\end{remark}
\section{Norm of the matrix exponential}\label{sec:norm_exp}
In the analysis of our algorithm, we will need to bound the norm of the matrix exponential $\|e^{At}\|$ for some matrix $A$ and $t\geq 0$. We explain the known results that bound this quantity and also the relation to stability theory of differential equations. For an arbitrary matrix $A$, in addition to the norm of the matrix, the following quantities are useful to bound the norm of functions of $A$.
\begin{definition}\label{defn:alpha_mu}
\begin{align}
    &\sigma(A)=\{\lambda\,|\, \lambda \text{ is an eigenvalue of }A\}&\text{Spectrum}\\
    &\alpha(A)=\max \{\Re(\lambda) \,|\, \lambda\in \sigma(A)\}&\text{Spectral abscissa} \\
    &\mu(A)=\max\{\lambda\,|\, \lambda\in \sigma((A+A^\dag)/2)\}&\text{Log-norm}\\
    &\rho(A)=\max\{|\lambda|\,|\, \lambda\in \sigma(A)\}&\text{Spectral radius}
\end{align}
\end{definition}
We drop the dependence on $A$ in the above quantities when it is clear from context. The quantity $\mu(A)$ is not a norm (even though it is called the log-norm) and can, in fact, be negative. The norm of $e^{At}$ depends on $\mu$ and $\alpha$ in the short and long timescales respectively. If we want the norm of $e^{At}$ to be bounded for all $t$, we need that $\alpha(A)<0$. But we will see below that this alone is not enough and we have to make additional assumptions depending on the sign of $\mu$. The quantities $\alpha$, $\mu$ and $\rho$ are all real numbers. For arbitrary matrices, these quantities are related as follows.
\begin{equation}
    \alpha \leq \mu\leq \|A\|\,\, \text{ and }\, \alpha\leq \rho\,.
\end{equation}
In the theory of differential equations \cite{coppel1965stability}, a \emph{stable} matrix is defined as a matrix whose eigenvalues (i.e., roots of the characteristic polynomial) have negative real parts. It is called \emph{semi-stable} if the eigenvalues are allowed to have zero real parts. The \emph{Lyapunov condition}, can be used to determine if a matrix is stable. This can be stated as follows. If there exist a positive definite matrix $P$ and a negative definite matrix $N$ such that $AP + PA^\dag=N$, then $A$ is stable. The Routh-Hurwitz criterion \cite{coppel1965stability} can also be used to check for stability. In this section, we are interested in bounds on the norm of the matrix exponential for stable matrices $A$. Next, we summarize known results on these bounds.

For an arbitrary matrix $A$, the exponential of $At$ for $t\geq 0$ converges to zero in the limit of long time if and only if $A$ is a stable matrix \cite{van2006study}, i.e.,
\begin{equation}
    \lim_{t\rightarrow \infty} \|\exp(At)\| = 0\, \, \iff \,\, A \text{ is stable i.e., }\alpha<0\,.
\end{equation}
The following bounds are well-known \cite{Dah63}.
\begin{equation}
    \exp(\alpha t)\leq \|\exp(At)\|\leq \exp(\mu t)\leq \exp(\|A\|t)\,.
\end{equation}
While this is a bound for all time, it turns out that the quantity $\|\exp(At)\|$ grows (or decays) like the function $\exp(\mu t)$ initially following it closely for very short times. In fact, another way to define the log norm is as the rate of change in $\|\exp(At)\|$ near $t=0$. This behavior can be seen in the following simple example \cite{trefethen2005spectra}. Consider the following two $2\times 2$ matrices.
\begin{equation}\label{eq:A_B}
    A=\begin{pmatrix}-2& 10\\0&-2\end{pmatrix}\,,\hspace{0.5in} B=\begin{pmatrix}-2& 1\\0&-2\end{pmatrix}\
\end{equation}
Both matrices have eigenvalues equal to $-2$ i.e., $\alpha<0$ for both, but $\mu(A)>0$ and $\mu(B)<0$. The plot in \fig{exp_norm_example} shows the difference in short term behavior of the norm of the matrix exponential depending on the sign of $\mu$.
\begin{figure}
    \centering
    \scalebox{0.7}{\input{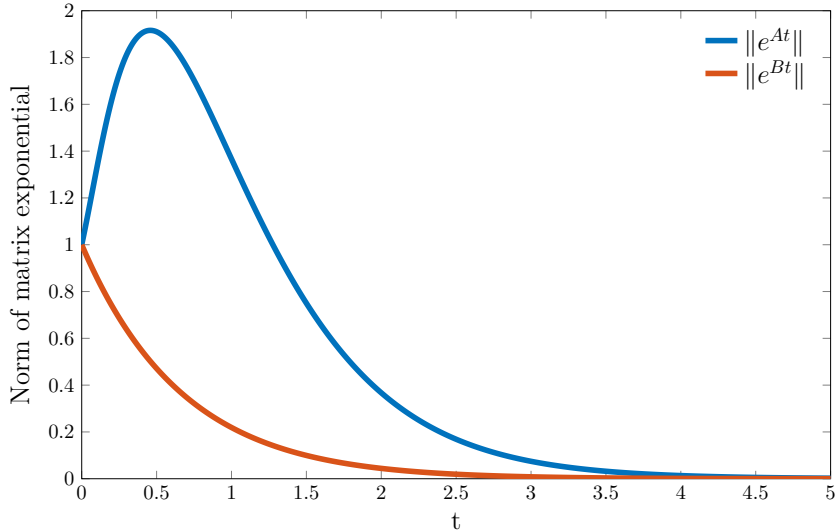}}
    \caption{Plot of $\|e^{At}\|$ and $\|e^{Bt}\|$ vs $t$, where $A$ and $B$ are given in \eq{A_B}. The blue curve corresponds to the matrix $A$ which has $\mu>0$ and the red curve to the matrix $B$ which has $\mu<0$.}
    \label{fig:exp_norm_example}
\end{figure}

We would like to point out that restricting attention to upper triangular matrices is without loss of generality since by the Schur decomposition \cite{bhatia1996matrix}, any matrix is unitarily equivalent to an upper triangular matrix and the $l_2$ norm is unitarily invariant. Based on the above example, we can consider the two cases depending on the log-norm: when $\mu\leq 0$ and $\mu>0$.

\subsection{Non-positive log-norm}
In this case, the matrix exponential is easily bounded using the following inequality.
\begin{equation}\label{eq:mu_bound}
    \|\exp(At)\|\leq e^{\mu t}\leq 1\,.
\end{equation}
This bound in useful when the log-norm is known to be negative. For instance, consider the following tridiagonal matrix.
\begin{equation}\label{eq:A}
    A=\frac{1}{d}\begin{pmatrix}
    -1 & i & 0 & \hdots & 0 & 0\\
    i & -2 & 2i & \hdots & 0 & 0\\
    0 & 2i & -3 & \hdots & 0 & 0\\
    \vdots & \vdots&\vdots&\ddots &\vdots&\vdots\\
    0 & 0 & 0 &\hdots & -(d-1) & (d-1)i\\
    0 & 0 & 0 &\hdots & (d-1)i & -d
    \end{pmatrix}\,.
\end{equation}
Matrices of this type are known as twisted Toeplitz or Berezin-Toeplitz matrices. Matrices like these appear in certain plasma physics models \cite{Lou15}. Matrices with a negative log-norm appear in high energy physics e.g., in modeling particle decay \cite{Hubisz21, Particle_decay}. The above matrix is diagonalizable but is not a normal matrix. Therefore, the diagonalizing matrix $V$ is not unitary. In this case, $V$ has a condition number $\kappa_V$ that is exponential in $\log d$ (see \fig{comparison}). However, it can be seen that $(A+A^\dag)/2$ is a diagonal matrix with negative entries and therefore $\mu<0$. In fact, it can be seen that $\mu=-1/d$. This also means that $\alpha<0$ since $\alpha<\mu$. If $\kappa_V$ grows exponentially as in this case (see \fig{comparison}), the bound
\begin{equation}
    \|\exp(At)\|\leq \kappa_V \exp(\alpha t)\,,
\end{equation}
is exponentially worse than the bound \eq{mu_bound}.


\begin{figure}
     \centering
     \begin{subfigure}[b]{0.49\textwidth}
         \centering
         \scalebox{0.55}{
%
%
\definecolor{mycolor1}{rgb}{0.00000,0.44700,0.74100}%
\definecolor{mycolor2}{rgb}{0.85000,0.32500,0.09800}%
\definecolor{mycolor3}{rgb}{0.92900,0.69400,0.12500}%
\begin{tikzpicture}

\begin{axis}[%
width=4.528in,
height=3.49in,
at={(2.511in,1.906in)},
scale only axis,
xmin=15,
xmax=100,
xlabel style={font=\color{white!15!black}},
xlabel={Dimension of the matrix A},
ymode=log,
ymin=1,
ymax=587523094872436,
yminorticks=true,
ylabel style={font=\color{white!15!black}},
ylabel={Condition numbers and bounds},
axis background/.style={fill=white},
xmajorgrids,
ymajorgrids,
yminorgrids,
legend style={legend cell align=left, align=left, fill=none, draw=none}
]
\addplot [color=mycolor1, line width=3.0pt]
  table[row sep=crcr]{%
10	38.0607720885288\\
12	38.9531633371435\\
14	39.5793828751334\\
16	40.041588546251\\
18	40.395840304143\\
20	40.6764561604494\\
22	40.9026988984043\\
24	41.0892360278268\\
26	41.2455631264072\\
28	41.3783932083567\\
30	41.492601108138\\
32	41.5925589751258\\
34	41.6795534906461\\
36	41.756408099356\\
38	41.8247827865856\\
40	41.8859960145504\\
42	41.9411078992081\\
44	41.9909802823533\\
46	42.0368630921021\\
48	42.0782722620973\\
50	42.1162244114355\\
52	42.1511320696512\\
54	42.183344952426\\
56	42.2131614379148\\
58	52.0563562214347\\
60	52.0955908834392\\
62	52.1322020087108\\
64	52.1664426541675\\
66	52.1985344826102\\
68	52.2292857711667\\
70	52.2576481874251\\
72	52.2843808958607\\
74	52.3096196688122\\
76	52.3334856626672\\
78	52.3560873275285\\
80	52.3775220257385\\
82	52.397877409668\\
84	52.4172325994714\\
86	52.4356591938305\\
88	52.4532221406322\\
90	52.4699804896807\\
92	52.4859880456493\\
94	52.5012939363354\\
96	52.5159431087533\\
98	52.5304249990391\\
100	52.5438821164605\\
};
\addlegendentry{$\kappa{}_\text{L}$}

\addplot [color=mycolor2, line width=3.0pt]
  table[row sep=crcr]{%
10	49.1966591059611\\
12	50.1187676291374\\
14	59.9115269023128\\
16	60.4731126868349\\
18	60.9036861704982\\
20	61.244080722093\\
22	61.5190326003002\\
24	61.7457561257756\\
26	61.9357851708564\\
28	71.5578882183024\\
30	71.7169181902335\\
32	71.8551885323373\\
34	71.9762419715353\\
36	72.0831913171923\\
38	72.1783476960641\\
40	81.811072684157\\
42	81.8974229077151\\
44	81.9755717061187\\
46	82.0465750509564\\
48	82.1114128838389\\
50	82.1708420579035\\
52	91.818519508255\\
54	91.8745646747555\\
56	91.9264448454719\\
58	113.458076923363\\
60	113.529867551094\\
62	113.59686301892\\
64	113.659526234093\\
66	113.718262619853\\
68	125.657630188751\\
70	125.71473426971\\
72	125.768560050236\\
74	125.819380869077\\
76	125.867440612545\\
78	125.912957565724\\
80	125.956127676153\\
82	137.911836164867\\
84	137.954380237421\\
86	137.994886064801\\
88	138.033496281043\\
90	138.070340556611\\
92	138.105537034627\\
94	138.139193580708\\
96	138.171408873955\\
98	150.141168382894\\
100	150.17320815528\\
};
\addlegendentry{$\kappa{}_\text{C}$}

\addplot [color=mycolor3, line width=3.0pt]
  table[row sep=crcr]{%
10	17.5352873756155\\
12	34.8261598350598\\
14	67.6775127475753\\
16	136.249461184391\\
18	309.828339089228\\
20	962.241221157035\\
22	1697.46202633565\\
24	2327.11057704636\\
26	4450.98797296117\\
28	8994.90682698602\\
30	17751.3331687894\\
32	36465.1803634661\\
34	84257.0282050832\\
36	262745.21049764\\
38	478064.541844091\\
40	656670.913763389\\
42	1262216.53833562\\
44	2561310.13680063\\
46	5076618.77458038\\
48	10480261.8170496\\
50	24302637.0004239\\
52	75306715.9100153\\
54	141688655.694126\\
56	193017057.291063\\
58	371208120.754903\\
60	754755078.474247\\
62	1499110861.59545\\
64	3099464297.81528\\
66	7190124193.18112\\
68	22084578476.5426\\
70	43003929167.0105\\
72	57880471065.0146\\
74	111231040608.656\\
76	226460116312.83\\
78	450446032266.413\\
80	930941973041.298\\
82	2156586157880.43\\
84	6663210939850.71\\
86	12946669170348.6\\
88	17583337392717.4\\
90	36344461788939\\
92	78551000295515.8\\
94	116852785073756\\
96	255576368857939\\
98	364610699997235\\
100	587523094872436\\
};
\addlegendentry{$\kappa{}_\text{V}$}

\end{axis}
\end{tikzpicture}
         \caption{Plot comparing $\kappa_L$, $\kappa_C$ and $\kappa_V$.}
         \label{fig:comparison}
     \end{subfigure}
     \begin{subfigure}[b]{0.49\textwidth}
         \centering
         \scalebox{0.55}{
%
%
\definecolor{mycolor1}{rgb}{0.00000,0.44700,0.74100}%
\definecolor{mycolor2}{rgb}{0.85000,0.32500,0.09800}%
\begin{tikzpicture}

\begin{axis}[%
width=4.764in,
height=3.49in,
at={(2.275in,1.906in)},
scale only axis,
xmin=15,
xmax=100,
xlabel style={font=\color{white!15!black}},
xlabel={Dimension of the matrix A},
ymode=log,
ymin=39.8104857106922,
ymax=150.17320815528,
yminorticks=true,
ylabel style={font=\color{white!15!black}},
ylabel={Condition numbers of linear systems},
axis background/.style={fill=white},
xmajorgrids,
ymajorgrids,
yminorgrids,
legend style={legend cell align=left, align=left, fill=none, draw=none}
]
\addplot [color=mycolor1, line width=3.0pt]
  table[row sep=crcr]{%
10	38.0607720885288\\
12	38.9531633371435\\
14	39.5793828751334\\
16	40.041588546251\\
18	40.395840304143\\
20	40.6764561604494\\
22	40.9026988984043\\
24	41.0892360278268\\
26	41.2455631264072\\
28	41.3783932083567\\
30	41.492601108138\\
32	41.5925589751258\\
34	41.6795534906461\\
36	41.756408099356\\
38	41.8247827865856\\
40	41.8859960145504\\
42	41.9411078992081\\
44	41.9909802823533\\
46	42.0368630921021\\
48	42.0782722620973\\
50	42.1162244114355\\
52	42.1511320696512\\
54	42.183344952426\\
56	42.2131614379148\\
58	52.0563562214347\\
60	52.0955908834392\\
62	52.1322020087108\\
64	52.1664426541675\\
66	52.1985344826102\\
68	52.2292857711667\\
70	52.2576481874251\\
72	52.2843808958607\\
74	52.3096196688122\\
76	52.3334856626672\\
78	52.3560873275285\\
80	52.3775220257385\\
82	52.397877409668\\
84	52.4172325994714\\
86	52.4356591938305\\
88	52.4532221406322\\
90	52.4699804896807\\
92	52.4859880456493\\
94	52.5012939363354\\
96	52.5159431087533\\
98	52.5304249990391\\
100	52.5438821164605\\
};
\addlegendentry{$\kappa{}_\text{L}$}

\addplot [color=mycolor2, line width=3.0pt]
  table[row sep=crcr]{%
10	49.1966591059611\\
12	50.1187676291374\\
14	59.9115269023128\\
16	60.4731126868349\\
18	60.9036861704982\\
20	61.244080722093\\
22	61.5190326003002\\
24	61.7457561257756\\
26	61.9357851708564\\
28	71.5578882183024\\
30	71.7169181902335\\
32	71.8551885323373\\
34	71.9762419715353\\
36	72.0831913171923\\
38	72.1783476960641\\
40	81.811072684157\\
42	81.8974229077151\\
44	81.9755717061187\\
46	82.0465750509564\\
48	82.1114128838389\\
50	82.1708420579035\\
52	91.818519508255\\
54	91.8745646747555\\
56	91.9264448454719\\
58	113.458076923363\\
60	113.529867551094\\
62	113.59686301892\\
64	113.659526234093\\
66	113.718262619853\\
68	125.657630188751\\
70	125.71473426971\\
72	125.768560050236\\
74	125.819380869077\\
76	125.867440612545\\
78	125.912957565724\\
80	125.956127676153\\
82	137.911836164867\\
84	137.954380237421\\
86	137.994886064801\\
88	138.033496281043\\
90	138.070340556611\\
92	138.105537034627\\
94	138.139193580708\\
96	138.171408873955\\
98	150.141168382894\\
100	150.17320815528\\
};
\addlegendentry{$\kappa{}_\text{C}$}

\end{axis}
\end{tikzpicture}
         \caption{Plot showing the gap between $\kappa_L$ and $\kappa_C$.}
         \label{fig:comparison_b}
     \end{subfigure}
        \caption{Comparison of the condition number $\kappa_L$ of the linear system $L$ defined here, the condition number $\kappa_C$ of the linear system $C$ defined in \cite{BCOW17} and $\kappa_V$, the condition number of the matrix that diagonalizes $A$. (a) This plot indicates that the bound based on $\kappa_V$ is exponentially greater than both the actual condition number $\kappa_C$ and $\kappa_L$. (b)
        This plot indicates that the gap between $\kappa_C$ and $\kappa_L$ grows with the dimension $d$ of the matrix $A$ (although it seems to be only polynomial in $\log d$).}
        \label{fig:comparison_plots}
\end{figure}
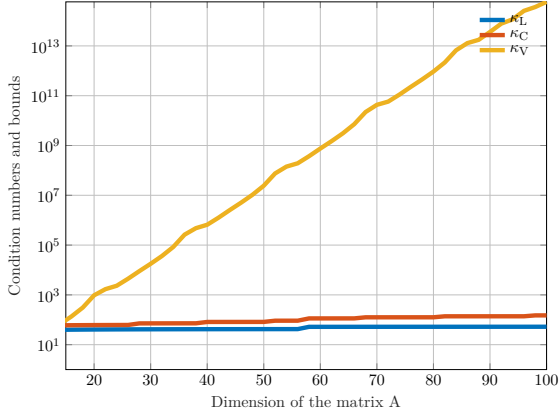
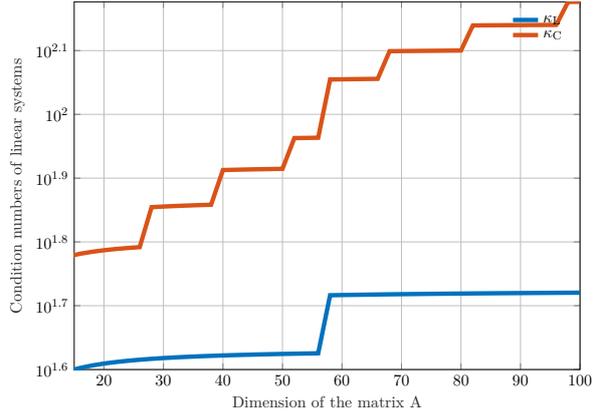

Using this example, we check here numerically if the condition number of the linear system from \cite{BCOW17} is indeed exponential or if the bound is just loose. The matrix $C$ in the linear system from \cite{BCOW17} is defined as
\begin{align}
    C =& \sum_{j=0}^d \ket{j}\bra{j}\otimes I - \sum_{i=0}^{m-1}\sum_{j=1}^k \ket{i(k+1)+j}\bra{i(k+1)+j-1}\otimes A/j \nonumber \\
    & - \sum_{i=0}^{m-1}\sum_{j=0}^k \ket{(i+1)(k+1)}\bra{i(k+1)+j}\otimes I - \sum_{j=d-p+1}^d \ket{j}\bra{j=1}\otimes I\,,
\end{align}
where $d=m(k+1)+p$.

We compare the actual condition number of the linear system $C$ in \cite{BCOW17} to the bound derived in \cite{BCOW17} as well as the condition number of the linear system $L$ defined here. In \fig{comparison}, we plot the relevant quantities for the matrix in \eq{A} for dimensions ranging from $15$ to $100$. We find that the condition number $\kappa_C$ of the linear system $C$ is far smaller than the bound derived in \cite{BCOW17}. While the bound, which depends on the condition number $\kappa_V$ of the diagonalizing matrix $V$, turns out to be exponential, $\kappa_C$ seems to scale polynomially for this example. The condition number $\kappa_L$ of the linear system $L$ that we define here is lower than $\kappa_C$ and the gap between them grows with dimension as shown in \fig{comparison_b}. It is possible that for the matrix in \eq{A} as well as other matrices $A$, the condition number $\kappa_C$ scales polynomially if the norm of the exponential of $A$ is bounded. If this is true, then it seems likely that the techniques developed in the present work can be used to prove this.

\subsection{Positive log-norm}
 For the case of positive log-norm, it turns out that $\|\exp(At)\|$ can increase initially since the log norm is positive. If we assume that $\alpha<0$, the long time behavior is exponential decay. The initial increase in norm is bounded and the Kreiss matrix theorem gives a bound on its supremum.
 
The Kreiss matrix theorem states the following.
 \begin{theorem}[Kreiss matrix theorem \cite{trefethen2005spectra}]
 For an arbitrary $d\times d$ matrix $A$, the following bounds are optimal.
 \begin{equation}
     \mathcal{K}(A)\leq \sup_{t\geq 0}\|\exp(At)\|\leq ed\mathcal{K}(A)\,,
 \end{equation}
 where $e$ is the Euler constant, $d$ is the dimension of $A$ and the Kreiss constant $\mathcal{K}(A)$ is defined as
 \begin{equation}
     \mathcal{K}(A)=\sup_{\Re(z)>0}\Re(z)\|(z I-A)^{-1}\|\,.
 \end{equation}
 \end{theorem}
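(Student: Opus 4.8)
The plan is to prove the two inequalities separately; the left-hand bound is elementary, while the right-hand bound is the substantive one and is where the dimension $d$ and the constant $e$ enter. For the lower bound, set $M=\sup_{t\geq 0}\|e^{At}\|$. If $M=\infty$ there is nothing to show, so assume $M<\infty$; this forces $\alpha(A)\leq 0$, so no eigenvalue lies in the open right half plane. Then for every $z$ with $\Re(z)>0$ the Laplace-transform identity
\[
(zI-A)^{-1}=\int_0^\infty e^{-zt}e^{At}\,\d t
\]
holds, and taking norms gives $\|(zI-A)^{-1}\|\leq M\int_0^\infty e^{-\Re(z)t}\,\d t=M/\Re(z)$. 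Hence $\Re(z)\|(zI-A)^{-1}\|\leq M$ for all such $z$, and taking the supremum over $\Re(z)>0$ yields $\mathcal{K}(A)\leq M$.

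For the upper bound the idea is to recover $e^{At}$ from the resolvent by a Bromwich (inverse-Laplace) integral along a vertical line $\Re(z)=\gamma>0$ and to feed in the resolvent estimate $\|(zI-A)^{-1}\|\leq \mathcal{K}(A)/\gamma$ valid there. (If $\alpha(A)>0$ the resolvent blows up inside the right half plane, so $\mathcal{K}(A)=\infty$ and the bound is vacuous; thus we may assume $\alpha(A)\leq 0<\gamma$, which makes the representation legitimate.) Fix unit vectors $u,v$ and consider the scalar function $g(z)=u^\dag(zI-A)^{-1}v$, a rational function whose denominator is the degree-$d$ characteristic polynomial $\det(zI-A)$ and whose numerator has degree at most $d-1$; in particular $g$ has degree $d$ and decays like $1/|z|$. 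For $t>0$,
\[
u^\dag e^{At}v=\frac{1}{2\pi i}\int_{\gamma-i\infty}^{\gamma+i\infty}e^{zt}g(z)\,\d z=\frac{e^{\gamma t}}{2\pi}\int_{-\infty}^\infty e^{ist}g(\gamma+is)\,\d s\,.
\]
Because $g(\gamma+is)\to 0$ as $s\to\pm\infty$, integrating by parts once in $s$ removes the boundary terms and extracts a factor $1/t$, leaving an absolutely convergent integral:
\[
\left|u^\dag e^{At}v\right|\leq \frac{e^{\gamma t}}{2\pi t}\int_{-\infty}^\infty\left|\frac{\d}{\d s}g(\gamma+is)\right|\d s\,.
\]
The remaining integral is exactly the arc length of the image of the line $\Re(z)=\gamma$ under the rational map $g$.

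To control this arc length I would invoke \emph{Spijker's lemma}: for a rational function of degree at most $d$, the total variation of its restriction to any straight line is at most $2\pi d$ times the supremum of its modulus on that line, the bound being sharp and extremized by degree-$d$ Blaschke products that wind $d$ times around a circle. Combining this with $\sup_s|g(\gamma+is)|\leq \mathcal{K}(A)/\gamma$ gives
\[
\left|u^\dag e^{At}v\right|\leq \frac{e^{\gamma t}}{2\pi t}\cdot 2\pi d\cdot\frac{\mathcal{K}(A)}{\gamma}=d\,\mathcal{K}(A)\cdot\frac{e^{\gamma t}}{\gamma t}\,.
\]
The parameter $\gamma$ is still free; minimizing $\gamma\mapsto e^{\gamma t}/\gamma$ at $\gamma=1/t$ produces the value $et$, so the right side becomes $e\,d\,\mathcal{K}(A)$. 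Taking the supremum over unit $u,v$ converts this scalar estimate into $\|e^{At}\|\leq e\,d\,\mathcal{K}(A)$ for $t>0$, and the value at $t=0$ follows by continuity. The appearance of $d$ is precisely the degree/arc-length count and the factor $e$ is the cost of the $\gamma$-optimization.

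\textbf{The main obstacle is Spijker's lemma.} Everything else is routine bookkeeping: justifying the Bromwich representation for $\gamma>\alpha(A)$, checking absolute convergence after one integration by parts, and verifying that the boundary terms vanish from the $1/|z|$ decay of the resolvent. The genuine difficulty lies in the sharp $O(d)$ bound on the arc length of the image of a line under a rational function of degree $d$; this is a nontrivial complex-analytic estimate, and it is exactly this lemma that pins down the optimal constant $ed$. Establishing the claimed optimality of both inequalities would then be treated separately, by exhibiting extremal families — normal matrices saturating the lower bound, and Jordan/companion-type matrices built from the Blaschke extremizers saturating the upper bound.
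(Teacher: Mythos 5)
The paper gives no proof of this statement: it is imported verbatim as a cited result from \cite{trefethen2005spectra}, so there is no in-paper argument to compare against. Your reconstruction is the standard proof of the continuous-time Kreiss matrix theorem as it appears in that reference (Chapter 18 of Trefethen--Embree): the lower bound via the Laplace-transform representation $(zI-A)^{-1}=\int_0^\infty e^{-zt}e^{At}\,\d t$ is correct and complete, and the upper bound via the Bromwich integral, one integration by parts to gain the factor $1/t$, the arc-length interpretation of $\int|\tfrac{\d}{\d s}g(\gamma+is)|\,\d s$, Spijker's lemma for rational functions of degree $d$, and the optimization $\min_{\gamma>0}e^{\gamma t}/(\gamma t)=e$ at $\gamma=1/t$ is exactly the right chain and assembles into $\|e^{At}\|\leq ed\,\mathcal{K}(A)$ as claimed. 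Two caveats keep this from being self-contained: you invoke Spijker's lemma rather than prove it (you flag this honestly, and it is a genuinely nontrivial complex-analytic input, though citing it is no worse than what the paper itself does for the whole theorem), and the asserted optimality of both inequalities is only gestured at via extremal families rather than established. One minor point worth making explicit in a full write-up: the Bromwich integral is only conditionally convergent since the resolvent decays like $1/|z|$, so the integration by parts should be carried out on truncated integrals $\int_{-R}^{R}$ with the boundary terms shown to vanish as $R\to\infty$, after which the differentiated integrand decays like $1/|s|^2$ and everything is absolutely convergent.
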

 While this bound is useful for arbitrary matrices, the fact that, as a uniform bound, it is optimal and can be as large as the dimension of $A$ shows that we need to restrict the class of matrices to focus on.
 
 To restrict the class of matrices, we give a couple of bounds on $\|\exp(At) \|$. These bounds and others based on pre-conditioning are discussed in \cite{kaagstrom1977bounds}. The two main ways to derive bounds on the matrix exponential are to use the Jordan and Schur decompositions \cite{bhatia1996matrix}. The following bound using Jordan blocks was shown in \cite{van2006study}.
 
 \begin{lemma}\label{lem:exp_norm_1}
For a matrix $A$, let $A=VJV^{-1}$ be its Jordan decomposition and let $\beta$ be the dimension of the largest Jordan block. For $t\geq 0$, we have
\begin{equation}
    \|\exp(At)\|\leq \kappa(V)\exp(\alpha t)\beta\max_{0\leq r\leq \beta-1}\frac{t^r}{r!}\,.
\end{equation}
\end{lemma}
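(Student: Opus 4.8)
The plan is to reduce the bound for $A$ to a bound for its Jordan form $J$, and then to estimate the exponential of a single Jordan block explicitly. First I would use that $A^k=VJ^kV^{-1}$ for every $k$, so that $\exp(At)=V\exp(Jt)V^{-1}$; applying submultiplicativity of the operator norm gives
\begin{equation}
\|\exp(At)\|\leq \|V\|\,\|\exp(Jt)\|\,\|V^{-1}\| = \kappa(V)\,\|\exp(Jt)\|\,.
\end{equation}
This is precisely where the factor $\kappa(V)$ enters and explains why, in contrast to the Schur-decomposition approach, we cannot appeal to unitary invariance of the $l_2$ norm: the diagonalizing matrix $V$ need not be unitary.

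Next I would exploit the block-diagonal structure of $J$. Writing $J=\bigoplus_i J_i$ with Jordan blocks $J_i$ of size $m_i\leq \beta$, the exponential is block diagonal, $\exp(Jt)=\bigoplus_i \exp(J_i t)$, and hence $\|\exp(Jt)\|=\max_i \|\exp(J_i t)\|$. For a single block I would write $J_i=\lambda_i I + N_i$, where $N_i$ is the nilpotent shift carrying $1$'s on the first superdiagonal and satisfying $N_i^{m_i}=0$. Since $\lambda_i I$ and $N_i$ commute, $\exp(J_i t)=e^{\lambda_i t}\exp(N_i t)$, and the nilpotency truncates the power series:
\begin{equation}
\exp(N_i t)=\sum_{r=0}^{m_i-1}\frac{t^r}{r!}N_i^r\,.
\end{equation}

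The remaining estimates are elementary. Each power $N_i^r$ with $0\leq r\leq m_i-1$ is a partial isometry (a single $1$ in each of its nonzero rows), so $\|N_i^r\|=1$; combining this with $|e^{\lambda_i t}|=e^{\Re(\lambda_i)t}\leq e^{\alpha t}$, which uses $\Re(\lambda_i)\leq \alpha$ and $t\geq 0$, the triangle inequality yields
\begin{equation}
\|\exp(J_i t)\|\leq e^{\alpha t}\sum_{r=0}^{m_i-1}\frac{t^r}{r!}\leq e^{\alpha t}\,\beta \max_{0\leq r\leq \beta-1}\frac{t^r}{r!}\,,
\end{equation}
where the last inequality bounds a sum of $m_i\leq\beta$ nonnegative terms by $\beta$ times the largest, after enlarging the index range to $0\leq r\leq\beta-1$. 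Taking the maximum over blocks and substituting into the first display gives the claim.

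I expect no genuine obstacle here: the argument is essentially routine. The only points that require a little care are the identity $\|N_i^r\|=1$ for the nilpotent powers (so that the truncated series contributes exactly the stated sum rather than something larger) and the passage from that sum to the $\beta\max_r t^r/r!$ form, where one must correctly account for Jordan blocks of differing sizes via $m_i\leq\beta$.
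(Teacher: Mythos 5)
Your proof is correct. Note that the paper does not prove this lemma at all --- it is quoted from the cited reference \cite{van2006study} --- so there is no in-paper argument to compare against; your derivation (conjugation to the Jordan form to extract $\kappa(V)$, block-diagonal reduction, splitting $J_i=\lambda_i I+N_i$, truncating the nilpotent exponential, and bounding the resulting sum of $m_i\leq\beta$ nonnegative terms by $\beta$ times their maximum) is the standard proof of this bound and all steps, including $\|N_i^r\|=1$ and $|e^{\lambda_i t}|\leq e^{\alpha t}$ for $t\geq 0$, check out.
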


\begin{remark}
Note that when all the Jordan blocks are one dimensional i.e., the diagonalizable case, this reduces to the bound from \cite{BCOW17}.
\end{remark}

The next bound was also shown in \cite{van2006study} and it uses the Schur decomposition.
\begin{lemma}\label{lem:exp_norm_2}
Suppose the Schur decomposition of a matrix $A$ is $A=U(D+N)U^\dag$, where $U$ is a unitary matrix, $D$ is a diagonal matrix containing the eigenvalues of $A$ and $N$ is a nilpotent matrix that is strictly upper triangular. Since this decomposition is not unique let $N$ be such that $\|N\|$ is minimized. Then
\begin{equation}
    \|\exp(At)\|\leq p_{d-1}(\|N\|t)\exp(\alpha t)\,,
\end{equation}
where $d$ is the dimension of $A$ and
\begin{equation}
    p_{d-1}(x)=\sum_{j=0}^{d-1} \frac{x^j}{j!}\,.
\end{equation}
\end{lemma}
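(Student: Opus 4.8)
The plan is to reduce everything to the upper-triangular piece and then expand the exponential as a perturbation series around its diagonal part. Since $A=U(D+N)U^\dag$ with $U$ unitary and the $l_2$ norm is unitarily invariant, I would first note that
\[
\norm{\exp(At)} = \norm{U\exp((D+N)t)U^\dag} = \norm{\exp((D+N)t)}\,,
\]
so it suffices to bound $\norm{\exp((D+N)t)}$. I would then treat $N$ as a perturbation of the diagonal generator $D$ and write down the variation-of-parameters (Dyson/Peano--Baker) expansion of the flow of $y'=(D+N)y$ around the unperturbed flow $e^{Dt}$:
\[
\exp((D+N)t) = \sum_{k=0}^{\infty}\int_{0\le s_1\le\cdots\le s_k\le t} e^{D(t-s_k)}\,N\,e^{D(s_k-s_{k-1})}\,N\cdots N\,e^{Ds_1}\;ds_1\cdots ds_k\,.
\]

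The crux of the argument is that this series truncates at $k=d-1$ because $N$ is nilpotent. Each summand is a product of $k$ copies of the \emph{strictly} upper-triangular matrix $N$ interleaved with diagonal matrices $e^{D\tau}$. Diagonal matrices preserve the index, while each factor of $N$ strictly raises it, so the $(i,j)$ entry of such a product vanishes unless $j\ge i+k$. For a $d\times d$ matrix this is impossible once $k\ge d$, hence only the terms $k=0,1,\dots,d-1$ survive. I would emphasize that one should keep the diagonal exponentials separate here rather than conjugating $N$ into a time-dependent $\tilde N(s)=e^{-Ds}Ne^{Ds}$, because the conjugated factors can have norm much larger than $\norm{N}$; keeping them separate lets the diagonal factors be bounded cleanly and lets their exponents telescope.

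Finally I would bound each of the $d$ surviving terms. For $\tau\ge 0$ the diagonal factor satisfies $\norm{e^{D\tau}}=\max_i e^{\Re(\lambda_i)\tau}\le e^{\alpha\tau}$, and on the integration simplex all the increments $t-s_k,\,s_k-s_{k-1},\dots,s_1$ are nonnegative and sum to $t$, so submultiplicativity gives a factor $e^{\alpha t}\norm{N}^k$ that is uniform over the simplex. Integrating the constant bound over the simplex of volume $t^k/k!$ yields
\[
\norm{\exp((D+N)t)} \le e^{\alpha t}\sum_{k=0}^{d-1}\frac{(\norm{N}t)^k}{k!} = p_{d-1}(\norm{N}t)\,\exp(\alpha t)\,,
\]
which is the claimed bound. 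The main obstacle is purely conceptual rather than computational: a naive Grönwall estimate applied to the same Duhamel identity would only give the exponential bound $e^{\alpha t}e^{\norm{N}t}$, and the whole improvement to a \emph{polynomial} prefactor $p_{d-1}$ hinges on recognizing that nilpotency forces the perturbation series to terminate after $d-1$ terms.
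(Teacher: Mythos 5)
Your proof is correct and complete. The paper does not actually prove \lem{exp_norm_2} --- it only cites \cite{van2006study} --- and your Dyson/Peano--Baker expansion with truncation at $k=d-1$ via nilpotency of $N$, the telescoping bound $\norm{e^{D(t-s_k)}}\cdots\norm{e^{Ds_1}}\leq e^{\alpha t}$, and the simplex volume $t^k/k!$ is precisely the standard argument behind the cited result. Your side remark about keeping the diagonal exponentials separate rather than conjugating $N$ into $e^{-Ds}Ne^{Ds}$ is also the right instinct, since that conjugation need not preserve $\norm{N}$ when $D$ is non-normal in spectrum spread.
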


\begin{remark}
For $\sup_{t\geq 0}\|\exp(At)\|$ to be polynomial in $n=\log d$, using \lem{exp_norm_1} we would need the dimension $\beta$ of the largest Jordan block to be $\poly(n)$ (as well as a similar bound on $\kappa_V$). If we use \lem{exp_norm_2}, we need $\|N\|$ to bounded. Both of these conditions can be thought of as bounds on the non-normality of $A$. The quantity $\|N\|$ is known as \emph{departure from normality}. More measures of non-normality and their relationships are explored in \cite{ELSNER1987107}.
\end{remark}

The reason we present several bounds here is that no one bound completely subsumes the others, especially when the log-norm is positive. When $\mu(A)>0$, it is possible that the bound of \eq{mu_bound} is better than the ones in \lem{exp_norm_1} and \lem{exp_norm_2} for some $t$. This is because for the latter bounds, the exponential $\exp(\alpha t)$ needs to overcome the constants $\kappa_V$ and polynomials $p_{d-1}(\|N\|t)$ or $t^\beta$ before falling below a reasonable value, whereas for small positive $\mu$ and $t$, the former could be small.

Finally, we define here a quantity that we use in our algorithm that may be satisfied by matrices in the above classes e.g., in the remark above and more.
\begin{definition}\label{defn:C(A)}
\begin{equation}
    C(A)=\sup_{t\in [0,T]} \|\exp(At)\|\,.
\end{equation}
\end{definition}
The reason we are interested in this definition is that it encompasses even non-stable matrices. For instance, if $T\|A\|$ is a constant (independent of $n=\log d$), then $C(A)\leq \exp(T\|A\|)$ is also acceptable as a bound where $A$ can be an arbitrary matrix. To verify that $C(A)$ is bounded is not always an easy task. This is not uncommon in quantum or classical algorithms. For instance, verifying that the condition number is bounded for a linear system is not generally easy or checking if a matrix is diagonalizable as in the some prior algorithms. However, in some cases as in the example given above in \eq{A}, it is easy to verify that the matrix has a non-positive log norm. In general, when $A$ is a block matrix such as in the application for non-linear differential equations, one can bound $C(A)$ without additional assumptions on the component matrices such as e.g., normality or diagonalizability.

\section{Description of the quantum algorithm}\label{sec:algorithm}
We assume that we have a time-independent linear system of the form.
\be\label{eq:lode}
\frac{dx}{dt}=Ax + b\,,\, x(0)=x_0\,,
\ee
where $A$ and $b$ are time independent. The solution $x(T)$ at time $T$ of this matrix differential equation is
\begin{equation}\label{eq:soln}
    x(T)=\exp(AT)x_0 + T\sum_{j=0}^\infty \frac{(AT)^j}{(j+1)!}b\,,
\end{equation}
which is valid even when $A$ is a singular matrix \cite{higham2008functions}. We can re-write it in a form as follows that will be useful later.
\begin{lemma}\label{lem:soln_form}
We can write the solution of the linear ODE \eq{lode} as 
\begin{equation}
    x(T)=\exp(AT)x_0 + \Big[\int_0^T \exp(As) ds\Big] b\,.
\end{equation}
\end{lemma}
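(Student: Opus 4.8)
The plan is to reduce the claim to a term-by-term integration of the matrix exponential series, taking the series solution \eq{soln} as the starting point. Since \eq{soln} already asserts that $x(T)=\exp(AT)x_0 + T\sum_{j=0}^\infty \frac{(AT)^j}{(j+1)!}b$ and that this holds even when $A$ is singular, it suffices to establish the purely analytic identity
\begin{equation}
    \int_0^T \exp(As)\,ds = T\sum_{j=0}^\infty \frac{(AT)^j}{(j+1)!}\,,
\end{equation}
after which pulling the constant vector $b$ out of the integral (it does not depend on $s$) immediately yields the stated closed form.

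First I would expand $\exp(As)=\sum_{j=0}^\infty \frac{A^j s^j}{j!}$ and integrate over $[0,T]$ term by term. The interchange of sum and integral is justified by uniform convergence of the partial sums on the compact interval $[0,T]$: entrywise, the truncation error is dominated by the tail of $e^{\|A\|s}\le e^{\|A\|T}$, so the Weierstrass $M$-test applies to each matrix entry. This gives
\begin{equation}
    \int_0^T \exp(As)\,ds = \sum_{j=0}^\infty \frac{A^j}{j!}\int_0^T s^j\,ds = \sum_{j=0}^\infty \frac{A^j}{j!}\cdot \frac{T^{j+1}}{j+1} = \sum_{j=0}^\infty \frac{A^j T^{j+1}}{(j+1)!}\,.
\end{equation}
Factoring out a single power of $T$ rewrites the right-hand side as $T\sum_{j=0}^\infty \frac{(AT)^j}{(j+1)!}$, which is exactly the coefficient matrix multiplying $b$ in \eq{soln}, completing the identification and hence the proof.

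I do not expect a genuine obstacle here; the only point requiring real care is to resist the ``textbook'' shortcut $\int_0^T \exp(As)\,ds = A^{-1}(\exp(AT)-I)$, since that manipulation fails precisely in the singular case this lemma is designed to cover. Working directly with the power series sidesteps any inversion of $A$ and keeps the argument valid for arbitrary (including singular) $A$. As an alternative derivation that does not presuppose \eq{soln}, one could instead invoke Duhamel's principle (variation of parameters) to write $x(T)=\exp(AT)x_0 + \int_0^T \exp(A(T-s))\,b\,ds$ and then substitute $u=T-s$, so that $\int_0^T \exp(A(T-s))\,ds\,b = \big[\int_0^T \exp(Au)\,du\big]b$; this route reproduces the same formula and independently reconfirms \eq{soln}.
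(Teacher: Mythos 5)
Your proof is correct and takes essentially the same route as the paper's: expand $\exp(As)$ as a power series, integrate term by term over $[0,T]$, and identify the resulting series $\sum_{j=0}^\infty A^jT^{j+1}/(j+1)!$ with the coefficient of $b$ in \eq{soln}. The paper's proof is just this computation in one line; your added remarks on uniform convergence and on avoiding the $A^{-1}(\exp(AT)-I)$ shortcut are sound but not needed beyond what the paper records.
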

\begin{proof}
Expanding the exponential, we have for the second term
\begin{equation}
    \Big[\int_0^T \exp(As) d s\Big] b = \Big[\sum_{j=0}^\infty \int_0^T \frac{(As)^j}{j!} d s\Big]b =\sum_{j=0}^\infty \frac{A^j T^{j+1}}{(j+1)!}b\,.
\end{equation}
This can be seen to be equal to \eq{soln}. 
\end{proof}

To design a quantum algorithm to find the solution at time $T$, we use the method from \cite{BCOW17} of truncated Taylor series. Using their notation, we get the following discrete equation at time step $i$.
\be\label{eq:y_iteration}
y_{i+1}=T_k(Ah)y_i + S_k(Ah)hb\,,
\ee 
where $h$ is the time interval of a single step and $T_k(z)$ and $S_k(z)$ are the following functions. 
\begin{align}
    &T_k(z)=\sum_{j=0}^k \frac{z^j}{j!}\,,\\
    &S_k(z)=\sum_{j=1}^k\frac{z^{j-1}}{j!}\,.
\end{align}
Here we have $m=\lceil T/ h\rceil$ time steps in the time interval $[0,T]$. We will encode this equation into a linear system and use the quantum linear system algorithm (QLSA) to solve it. Our implementation also approximates via a truncated Taylor series similar to \cite{BCOW17}, but we use a different linear operator that we find easier to analyze to provide a better analysis and generalization. 

To solve the linear inhomogeneous equation, we use the linear operator $L$, where
\begin{align}\label{eq:linear_system}
    &L=I-N\\
    &N=\sum_{i=0}^m\ket{i+1}\bra{i}\otimes M_2(I-M_1)^{-1} + \sum_{i=m+1}^{m+p-1}\ket{i+1}\bra{i}\otimes I\\
    &M_1=\sum_{j=0}^{k-1} \ket{j+1}\bra{j}\otimes \frac{Ah}{j+1} \\
    &M_2=\sum_{j=0}^k \ket{0}\bra{j}\otimes I\,.
\end{align}

Here the first register, indexed by $i$, in the sum contains the time step. The second register, indexed by $j$, is the Taylor step and the third register contains states on which $A$ acts. The quantity $p$ is the number of additional steps needed to boost the success probability (this has become a standard trick that was first used in \cite{Ber14} and subsequently in several other papers \cite{CL19, Liue2026805118, Xue_2021}). We give an example of $M_1$ and $M_2$ for $k=3$ below.
\begin{equation}
    M_1=\begin{pmatrix}
      0 &0 &0&0\\
      Ah&0&0&0\\
      0&\frac{Ah}{2}&0&0\\
      0&0&\frac{Ah}{3}&0
    \end{pmatrix}\,\hspace{0.2in} \text{and}\hspace{0.2in}
    M_2=\begin{pmatrix}
      I &I &I&I\\
      0&0&0&0\\
      0&0&0&0\\
      0&0&0&0
    \end{pmatrix}\,.
\end{equation}

The above operator $L$ is used to solve the following linear system.
\begin{equation}
    L\ket{y}=\ket{\psi_{in}}\,,
\end{equation}
where 
\begin{equation}\label{eq:initial_state}
    \ket{\psi_{in}}=\ket{0,0,x_0} + h\sum_{i=0}^{m-1}\ket{i,1,b}\,,
\end{equation}
is the unnormalized version of the initial state. We now show that implementing $L^{-1}$ on the initial state gives us an entangled state from which we can extract the solution. Applying $L^{-1}$ to the initial state gives
\begin{equation}
    L^{-1}\ket{\psi_{in}} = \sum_{j=0}^{m+p-1} N^j \ket{\psi_{in}}\,,
\end{equation}
since $N$ is nilpotent and $N^{m+p+1}=0$.

To examine the above state, let us consider the action of $(I-M_1)^{-1}$ on an arbitrary basis state of the form $\ket{\ell,v}$. We have
\begin{align}
    (I-M_1)^{-1}\ket{\ell,v}&=\sum_{j=0}^{k}M_1^j\ket{\ell,v}\\
    &=\sum_{j=0}^{k-\ell}\ket{\ell+j}\otimes \frac{\ell!(Ah)^j}{(\ell+j)!}\ket{v}\,,
\end{align}
since $M_1$ is nilpotent and $M_1^{k+1}=0$. The action of $M_2$ on this state is to sum the entries giving us
\begin{equation}\label{eq:M_1M_2action}
    M_2(I-M_1)^{-1}\ket{\ell,v}=\ket{0}\otimes T_{\ell,k}(Ah)\ket{v}\,,
\end{equation}    
where
\begin{equation}
    T_{\ell,k}(Ah)=\sum_{j=0}^{k-\ell}\frac{\ell!(Ah)^j}{(\ell+j)!}\,.
\end{equation}
Note that $T_{0,k}(Ah)=T_k(Ah)$. Therefore, the action of this on the initial state is
\begin{equation}
    M_2(I-M_1)^{-1}(\ket{0,x_0} + h\ket{1,b})=\ket{0}\otimes\Big(\sum_{j=0}^k \frac{(Ah)^j}{j!}\ket{x_0} + \sum_{j=0}^{k-1} h\frac{(Ah)^{j}}{(j+1)!}\ket{b}\Big)=\ket{0,y_1}\,,
\end{equation}
where $y_i$ are given by \eq{y_iteration}.
This becomes the initial state for the next power of $N$ giving us
\begin{equation}
    N^j\ket{\psi_{in}} = \ket{j}\otimes \ket{0,y_{j}}\,.
\end{equation}
The overall state is then
\begin{equation}\label{eq:final_state}
   \ket{y} = L^{-1}\ket{\psi_{in}}=\sum_{i=0}^{m}\ket{i,0,y_i}+\sum_{i=m+1}^{m+p-1}\ket{i,0,y_m}\,.
\end{equation}
Now if one measures the time-step register and post-selects on the outcomes $m,\dots, m+p-1$, the state collapses to a state $\ket{\psi}$ which is $y_m/\|y_m\|$. We show in subsequent sections that this state is $\epsilon$ close to the normalized solution, that the measurement probability is significant, that the condition number is well-behaved and give bounds on the query and gate complexities of implementing the algorithm.

\section{Analysis of the quantum algorithm}\label{sec:analysis}
In this section, we first derive bounds on the solution error and then bound the condition number and probability of success of the algorithm.

For the solution error, we are interested in deriving a bound on the number of terms to keep for the Taylor series for the error between the actual and approximate solutions of the ODE to be small. In \cite{BCOW17}, the same thing was done assuming that the matrix is diagonalizable. The results here are valid for an arbitrary sparse matrix which can be non-diagonalizable and even singular. Standard books such as \cite{hairer2008solving} do not seem to have the specific results we need.
\subsection{Solution error}
Before we prove our bounds, let us define the following quantities. From \lem{soln_form}, the solution after evolving for time $T$ can be written as 
\begin{equation}
    x_T=\exp(AT)x_0+\Big[\int_0^T \exp(As)d s\Big]b\,,
\end{equation}
We also define
\begin{align}
    &l_0=\exp(Ah)\,,\\
    &l'_0=T_k(Ah)\,,\\
    &l_1=h\sum_{j=0}^\infty \frac{(Ah)^j}{(j+1)!}\,,\\
    &l_1'=h\sum_{j=0}^k \frac{(Ah)^j}{(j+1)!}\,.
\end{align}
and the remainder term
\begin{equation}
    R_k(Ah)=l_0-l'_0=\sum_{j=k+1}^\infty \frac{(Ah)^j}{j!}\,.
\end{equation}
We also define $L_0=l_0^m$, $L'_0=l_0^{'m}$, and 
\begin{align}
    &L_1=T\sum_{j=0}^\infty \frac{(AT)^j}{(j+1)!}\,,\\
    &L'_1=h\sum_{j=0}^{m-1}T_k^j(Ah)S_k(Ah)\,,
\end{align}
After $m$ time steps, the solution is
\begin{equation}
    y_m=L_0' x_0 + L_1'b\,.
\end{equation}
We need the following bounds later.
\begin{lemma}\label{lem:l_0}
When $\|Ah\|\leq 1$, we have
\begin{equation}
    \|(l_0-l'_0)l_0^{-1}\|\leq \frac{e^2}{(k+1)!}\,.
\end{equation}
\end{lemma}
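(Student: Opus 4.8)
The plan is to factor the product into the Taylor remainder times the inverse exponential and then bound each factor separately using submultiplicativity of the operator norm. First I would observe that, since $l_0=\exp(Ah)$ and $l'_0=T_k(Ah)$, the difference is exactly the remainder series already introduced, $l_0-l'_0=R_k(Ah)=\sum_{j=k+1}^\infty (Ah)^j/j!$, and that $l_0$ is invertible with $l_0^{-1}=\exp(-Ah)$. Hence
\begin{equation}
    (l_0-l'_0)l_0^{-1}=R_k(Ah)\exp(-Ah),
\end{equation}
and by submultiplicativity $\|(l_0-l'_0)l_0^{-1}\|\le \|R_k(Ah)\|\,\|\exp(-Ah)\|$. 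Note that $l_0$, $l'_0$, $R_k(Ah)$ and $\exp(-Ah)$ are all power series in the single matrix $Ah$, so they commute and no ordering subtleties arise.

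For the second factor I would use the standard estimate $\|\exp(M)\|\le e^{\|M\|}$ applied to $M=-Ah$, which gives $\|\exp(-Ah)\|\le e^{\|Ah\|}\le e$ since $\|Ah\|\le 1$. For the remainder, I would bound termwise, $\|R_k(Ah)\|\le \sum_{j=k+1}^\infty \|Ah\|^j/j!$, and factor out $\|Ah\|^{k+1}/(k+1)!$. The key inequality is $(k+1)!/j!\le 1/(j-k-1)!$ for $j\ge k+1$, which holds because the product $(k+2)(k+3)\cdots j$ consists of $j-k-1$ factors that termwise dominate those of $(j-k-1)!$. Re-indexing with $i=j-k-1$ then turns the residual sum into $\sum_{i\ge 0}\|Ah\|^i/i!=e^{\|Ah\|}$, so
\begin{equation}
    \|R_k(Ah)\|\le \frac{\|Ah\|^{k+1}}{(k+1)!}\,e^{\|Ah\|}\le \frac{e}{(k+1)!},
\end{equation}
using $\|Ah\|\le 1$ once more to discard both the power and bound the exponential.

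Multiplying the two bounds yields $\|(l_0-l'_0)l_0^{-1}\|\le \frac{e}{(k+1)!}\cdot e=\frac{e^2}{(k+1)!}$, as claimed. The only step requiring genuine care is the factorial comparison underlying the remainder estimate; everything else is routine. I expect the one subtle point to be keeping the bound tight: one must retain the $\|Ah\|^{k+1}$ factor long enough to combine it with the hypothesis $\|Ah\|\le 1$, since it is precisely that factor (together with the $e^{\|Ah\|}$ from the remainder and the $e^{\|Ah\|}$ from $\exp(-Ah)$) that produces the constant $e^2$ rather than a cruder one.
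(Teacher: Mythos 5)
Your proof is correct and follows essentially the same route as the paper: identify $(l_0-l'_0)l_0^{-1}$ with $R_k(Ah)\exp(-Ah)$, bound $\|\exp(-Ah)\|\le e$, and bound the remainder by $e/(k+1)!$ after factoring out $(Ah)^{k+1}/(k+1)!$. Your factorial comparison $(k+1)!\,(j-k-1)!\le j!$ is just the paper's observation that $\binom{k+j+1}{j}\ge 1$ written in a different form, so the two arguments coincide.
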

\begin{proof}
In terms of the remainder, we have
\begin{align}
    \|(l_0-l'_0)l_0^{-1}\| &\leq \norm{R_k(Ah)\exp(-Ah)} \\
    &=\Big\| \frac{(Ah)^{k+1}}{(k+1)!} \sum_{j=0}^\infty \frac{(Ah)^j}{j!}\frac{\exp(-Ah)}{\binom{k+j+1}{j}}\Big\|\\
    &\leq \Big\| \frac{(Ah)^{k+1}}{(k+1)!}\Big\|\Big \| \sum_{j=0}^\infty \frac{(Ah)^j}{j!}\frac{1}{\binom{k+j+1}{j}}\Big\|\Big\|\exp(-Ah)\Big\|\\
    &\leq \frac{1}{(k+1)!}\sum_{j=0}^\infty\frac{\|Ah\|^j}{j!}\|\exp(-Ah)\|\\
    &\leq \frac{e^2}{(k+1)!}\,,
\end{align}
since $\norm{Ah}\leq 1$. 
\end{proof}
We have a similar bound on $l_1$ and $l_1'$.
\begin{lemma}\label{lem:l_1}
When $\|Ah\|\leq 1$, we have
\begin{equation}
    \|l_1 - l_1'\|\leq \frac{e}{\|A\|(k+1)!}\,.
\end{equation}
\end{lemma}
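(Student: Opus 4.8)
The plan is to mirror the proof of \lem{l_0}: write the difference $l_1 - l_1'$ as the tail of the series defining $l_1$, factor out the leading power, and bound what remains by an exponential series.

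First I would note that the two definitions differ only in their range of summation, so that
\[
l_1 - l_1' = h\sum_{j=k+1}^\infty \frac{(Ah)^j}{(j+1)!}.
\]
Reindexing with $j = k+1+i$ and pulling out the factor $\frac{(Ah)^{k+1}}{(k+1)!}$ gives
\[
l_1 - l_1' = h\,\frac{(Ah)^{k+1}}{(k+1)!}\sum_{i=0}^\infty (Ah)^i\,\frac{(k+1)!}{(k+2+i)!}.
\]

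Next I would pass to norms using submultiplicativity, namely $\|(Ah)^{k+1}\| \le \|Ah\|^{k+1}$ and $\|(Ah)^i\| \le \|Ah\|^i$, and then control the combinatorial coefficient. The key elementary inequality is $\frac{(k+1)!}{(k+2+i)!} \le \frac{1}{i!}$, which holds because $\frac{i!\,(k+1)!}{(k+2+i)!} = \frac{1}{(k+2+i)\binom{k+1+i}{i}} \le 1$. With this in hand the residual sum is at most $\sum_{i=0}^\infty \frac{\|Ah\|^i}{i!} = e^{\|Ah\|} \le e$, the final step invoking the hypothesis $\|Ah\| \le 1$. Combining this with $\|Ah\|^{k+1} \le 1$ yields
\[
\|l_1 - l_1'\| \le \frac{h\,e}{(k+1)!}.
\]

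Finally I would convert the leftover factor of $h$ into the advertised $1/\|A\|$: the hypothesis $\|Ah\| \le 1$ gives $h\|A\| \le 1$, i.e. $h \le 1/\|A\|$, so that $\|l_1 - l_1'\| \le \frac{e}{\|A\|(k+1)!}$, as claimed. The computation is essentially routine; the only points that require care are the combinatorial bound $\frac{(k+1)!}{(k+2+i)!}\le\frac{1}{i!}$ and the conceptual observation that the extra factor $1/\|A\|$ (relative to \lem{l_0}) arises entirely from the explicit prefactor $h$ that $l_1$ and $l_1'$ carry, which the constraint $\|Ah\|\le 1$ trades for $1/\|A\|$.
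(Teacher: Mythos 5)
Your proof is correct and follows essentially the same route as the paper's: both express $l_1-l_1'$ as the tail $h\sum_{j\ge k+1}(Ah)^j/(j+1)!$, extract the factor $1/(k+1)!$, bound the remaining sum by $e$ using $\|Ah\|\le 1$, and convert the prefactor $h$ into $1/\|A\|$. Your version merely spells out the combinatorial inequality $\tfrac{(k+1)!}{(k+2+i)!}\le\tfrac{1}{i!}$ that the paper leaves implicit.
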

\begin{proof}
We have
\begin{equation}
    \|l_1-l_1'\|=\|h\sum_{j=k+1}^\infty \frac{(Ah)^j}{(j+1)!}\|\leq \frac{h}{(k+1)!}\|\sum_{j=0}^\infty\frac{1}{j!}\|\leq \frac{e}{\|A\|(k+1)!}\,.
\end{equation}
\end{proof}

\begin{lemma}\label{lem:L_0}
When $\|Ah\|\leq 1$ and $me^2/(k+1)!\leq 1$,
\begin{equation}
    \|(L_0-L'_0)L_0^{-1}\|\leq \frac{(e-1)me^2}{(k+1)!}\,.
\end{equation}
\end{lemma}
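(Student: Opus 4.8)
The plan is to reduce the difference of matrix powers $L_0-L_0'=l_0^m-l_0^{'m}$ to the single-step quantity $(l_0-l_0')l_0^{-1}$, which is already controlled by \lem{l_0}, and then to sum a geometric series. The structural fact that makes this clean is that $l_0=\exp(Ah)$ and $l_0'=T_k(Ah)$ are both functions of the same matrix $A$, so they commute with each other and with $l_0^{-1}$; this lets me manipulate the expressions as if they were scalars, and it also guarantees $L_0=l_0^m$ is invertible even when $A$ is singular.

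First I would apply the telescoping identity $a^m-b^m=\sum_{j=0}^{m-1}a^{m-1-j}(a-b)b^j$ with $a=l_0$ and $b=l_0'$, obtaining $L_0-L_0'=\sum_{j=0}^{m-1}l_0^{m-1-j}(l_0-l_0')l_0^{'j}$. Multiplying on the right by $L_0^{-1}=l_0^{-m}$ and using commutativity to expose the factor $(l_0-l_0')l_0^{-1}$ (by writing $l_0-l_0'=[(l_0-l_0')l_0^{-1}]\,l_0$), I would arrive at
\[
(L_0-L_0')L_0^{-1}=\big[(l_0-l_0')l_0^{-1}\big]\sum_{j=0}^{m-1}(l_0^{-1}l_0')^j.
\]
Taking norms, \lem{l_0} bounds the prefactor by $e^2/(k+1)!$, so it remains to bound $\sum_{j=0}^{m-1}\|l_0^{-1}l_0'\|^j$. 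Here I would write $l_0^{-1}l_0'=I-(l_0-l_0')l_0^{-1}$, giving $\|l_0^{-1}l_0'\|\le 1+e^2/(k+1)!$. The hypothesis $me^2/(k+1)!\le 1$ yields $e^2/(k+1)!\le 1/m$, hence $\|l_0^{-1}l_0'\|\le 1+1/m$, and the geometric sum is at most $\sum_{j=0}^{m-1}(1+1/m)^j=m\big[(1+1/m)^m-1\big]\le (e-1)m$, using $(1+1/m)^m<e$. Combining the two factors gives $\|(L_0-L_0')L_0^{-1}\|\le (e-1)m\,e^2/(k+1)!$, exactly as claimed.

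The calculation is essentially bookkeeping, but there are two points to get right. The first is indexing the telescoping so that precisely $l_0^{-1}$ (and not some higher negative power) is exposed for \lem{l_0}; commutativity makes this automatic once the factor is pulled out. The second, and the step I expect to be the main obstacle, is the interplay between the hypothesis and the geometric series: a priori the geometric factor $\sum_j\|l_0^{-1}l_0'\|^j$ could grow exponentially in $m$ and destroy the linear bound, and the only thing preventing this is that the per-step deviation $e^2/(k+1)!$ is forced to be at most $1/m$. That bound is exactly what converts the geometric sum into the benign quantity $m[(1+1/m)^m-1]\le (e-1)m$, so I would be careful to track that the hypothesis is used precisely at this point.
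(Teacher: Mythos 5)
Your proposal is correct and takes essentially the same route as the paper: both factor $(L_0-L_0')L_0^{-1}$ as $\big(I-l_0'l_0^{-1}\big)\sum_{j=0}^{m-1}(l_0'l_0^{-1})^j$ using commutativity, invoke \lem{l_0} to bound the single-step factor by $e^2/(k+1)!$, and use the hypothesis $me^2/(k+1)!\leq 1$ to keep the geometric sum at $(e-1)m$. The only cosmetic difference is in the final bookkeeping: you substitute $e^2/(k+1)!\leq 1/m$ into the geometric sum and use $(1+1/m)^m<e$, whereas the paper expands $\big(1+e^2/(k+1)!\big)^m-1$ binomially before applying the same hypothesis.
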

\begin{proof}
Since $l_0$ and $l'_0$ commute, we have
\begin{equation}
    \|(L_0-L'_0)L_0^{-1}\|=\|I-(l'_0l_0^{-1})^m\|=\|(I-l'_0l_0^{-1})(I+l'_0l_0^{-1} + \dots (l'_0l_0^{-1})^{m-1})\|\,.
\end{equation}
Using \lem{l_0} and the triangle inequality, we have
\begin{equation}
    \|l'_0l_0^{-1}\|\leq 1+ \|l'_0l_0^{-1}-I\| \leq 1+\frac{e^2}{(k+1)!}\,.
\end{equation}
Denote $r=1+\frac{e^2}{(k+1)!}$. Using this, we get
\begin{align}
    \|(L_0-L'_0)L_0^{-1}\|&\leq \|I-l'_0l_0^{-1}\|(1+r+\dots r^{m-1})\\
    &\leq \frac{e^2}{(k+1)!}\frac{r^m-1}{r-1}\\
    &=\Big(1+\frac{e^2}{(k+1)!}\Big)^m-1\\
    &=\sum_{j=1}^m \binom{m}{j} \Big(\frac{e^2}{(k+1)!}\Big)^j\\
    &=\sum_{j=1}^m \frac{1}{m^j}\binom{m}{j} \Big(\frac{me^2}{(k+1)!}\Big)^j\\
    &\leq \sum_{j=1}^m \frac{1}{j!} \Big(\frac{me^2}{(k+1)!}\Big)^j\\
    &\le \frac{(e-1)me^2}{(k+1)!}\,,
\end{align}
where the last line follows by using the inequality
\begin{equation}
    \Big(\frac{me^2}{(k+1)!}\Big)^j\leq \frac{me^2}{(k+1)}\,,
\end{equation}
since the r.h.s is less than unity. This proves the lemma.
\end{proof}
\begin{remark}
The proof of the above lemma goes through for any $j\leq m$ i.e., we have
\begin{equation}
    \|(l_0^j-l_0^{'j})l_0^{-j}\|\leq\frac{je^2(e-1)}{(k+1)!} \,,
\end{equation}
when $\|Ah\|\leq 1$ and $me^2/(k+1)!\leq 1$.
\end{remark}
We also need the following lemma.
\begin{lemma}\label{lem:L_1}
When $\|Ah\|\leq 1$ and $me^2/(k+1)!\leq 1$
\begin{equation}
    \|(L_1-L_1')L_0^{-1}\|\leq \frac{mTe^5}{(k+1)!}\,.
\end{equation}
\end{lemma}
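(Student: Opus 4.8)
The plan is to avoid distributing the factor $L_0^{-1}=l_0^{-m}$ across the sum defining $L_1$, since that would produce factors $\norm{\exp(-A(m-i)h)}$ that we have no way to control (nothing in our hypotheses bounds the exponential of $-A$). Instead I would first reduce everything to the homogeneous estimate already proved in \lem{L_0}. The key observation is the polynomial identity $T_k(z)-1=zS_k(z)$, which gives $hS_k(Ah)=A^{-1}(T_k(Ah)-I)$; summing the geometric series $\sum_{i=0}^{m-1}T_k(Ah)^i=(T_k^m-I)(T_k-I)^{-1}$ then collapses to the closed form $L_1'=A^{-1}(L_0'-I)$. The analogous exact statement $L_1=A^{-1}(L_0-I)$ follows from $\int_0^T\exp(As)\,ds=A^{-1}(\exp(AT)-I)$. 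Both identities hold as honest power series in $A$ even when $A$ is singular, because the constant terms cancel before any division by $A$. Subtracting, $L_1-L_1'=A^{-1}(L_0-L_0')$, so that $(L_1-L_1')L_0^{-1}=A^{-1}(L_0-L_0')L_0^{-1}=A^{-1}\bigl(I-(l_0'l_0^{-1})^m\bigr)$.

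Next I would factor $I-(l_0'l_0^{-1})^m=(I-l_0'l_0^{-1})\sum_{i=0}^{m-1}(l_0'l_0^{-1})^i$ and recall from the proof of \lem{l_0} that $I-l_0'l_0^{-1}=(l_0-l_0')l_0^{-1}=R_k(Ah)\exp(-Ah)$. The crucial step — the one that makes $T$ rather than $\norm{A^{-1}}$ appear — is to keep the leading $A^{-1}$ glued to the remainder: since $R_k(Ah)=\sum_{j\ge k+1}(Ah)^j/j!$ starts at order $k+1$, we have $A^{-1}R_k(Ah)=h\sum_{j\ge k+1}(Ah)^{j-1}/j!$, a power series carrying an explicit factor $h$ and no genuine inverse. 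Using $\norm{Ah}\le 1$ this gives $\norm{A^{-1}R_k(Ah)}\le he/(k+1)!$, and together with $\norm{\exp(-Ah)}\le e$ we obtain $\norm{A^{-1}(I-l_0'l_0^{-1})}\le he^2/(k+1)!$.

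Finally I would bound the geometric tail exactly as in \lem{L_0}: writing $r=\norm{l_0'l_0^{-1}}\le 1+e^2/(k+1)!$ (by \lem{l_0} and the triangle inequality) and setting $\delta=e^2/(k+1)!$, the hypothesis $me^2/(k+1)!\le 1$ gives $(1+\delta)^m-1\le m\delta(e-1)$, hence $\sum_{i=0}^{m-1}r^i=\bigl((1+\delta)^m-1\bigr)/\delta\le(e-1)m$. Multiplying the three estimates and using $mh=T$,
\begin{equation}
    \norm{(L_1-L_1')L_0^{-1}}\le \frac{he^2}{(k+1)!}\,(e-1)m=\frac{(e-1)Te^2}{(k+1)!}\le\frac{mTe^5}{(k+1)!}\,,
\end{equation}
the last inequality being generous since $(e-1)e^2<e^5\le me^5$.

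I expect the main obstacle to be precisely the reduction in the first paragraph: spotting that $L_1$ and $L_1'$ admit the closed forms $A^{-1}(L_0-I)$ and $A^{-1}(L_0'-I)$, so that the inhomogeneous problem collapses onto the already-established homogeneous bound, and then handling the $A^{-1}$ so that it absorbs a power of $A$ from the remainder to yield the factor $h$ instead of $\norm{A^{-1}}$. Done crudely — for instance bounding $\norm{A^{-1}}$ separately, or invoking \lem{l_1} term by term — one is left with a spurious $\norm{A^{-1}}$ that can be arbitrarily large (and undefined for singular $A$), and the stated $T$-dependence is lost.
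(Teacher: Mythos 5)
Your proof is correct, and it takes a genuinely different route from the paper's. The paper decomposes $L_1=\sum_{j=0}^{m-1}l_0^jl_1$ and $L_1'=\sum_{j=0}^{m-1}l_0^{'j}l_1'$, applies the triangle inequality term by term, and combines \lem{l_1} (whose $1/\|A\|$ factor is cancelled at the end using $m=T\|A\|+1$) with the remark after \lem{L_0} and the bounds $\|l_0^{-j}\|\le e$, $\|l_0^{'-j}\|\le e$. You instead invoke the identity $L_1-L_1'=A^{-1}(L_0-L_0')$ --- which is precisely \lem{AL_1}, a fact the paper states only \emph{after} \lem{L_1} and deploys only inside \thm{solution_error} --- and then bound the single commuting product $A^{-1}R_k(Ah)\cdot\exp(-Ah)\cdot\sum_{i=0}^{m-1}(l_0'l_0^{-1})^i$, extracting the factor $h$ from $A^{-1}R_k(Ah)$. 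Each step checks out: $\|A^{-1}R_k(Ah)\|\le he/(k+1)!$ because the series starts at order $k+1$ (so no genuine inverse of $A$ is needed even when $A$ is singular); the geometric-sum bound $\bigl((1+\delta)^m-1\bigr)/\delta\le(e-1)m$ is exactly the computation inside the proof of \lem{L_0}; and $(e-1)Te^2\le mTe^5$. Your route buys two things: a sharper constant, and --- more significantly --- you only ever need $\|\exp(-Ah)\|\le e$ for a \emph{single} time step, with every other negative power of $l_0$ kept paired against a matching power of $l_0'$ so that \lem{l_0} controls it. The paper's proof, by contrast, asserts $\|l_0^{-j}\|\le e$ for $j$ up to $m$, which does not follow from $\|Ah\|\le1$ alone (one only gets $\|l_0^{-j}\|\le e^j$), so your argument quietly repairs a weak point in the published version. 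What the paper's route buys is directness: it mirrors the splitting of $\int_0^T$ into $m$ sub-intervals and requires no algebraic identity relating $L_1$ to $L_0$.
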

\begin{proof}
First, we rewrite $L_1$ as follows.
\begin{equation}
    L_1=\int_0^T \exp(As) ds =\int_0^h\exp(As)ds + \int_h^{2h}\exp(As)ds +\dots +\int_{(m-1)h}^{mh}\exp(As) ds\,.
\end{equation}
This can be written as follows.
\begin{equation}
    L_1=\sum_{j=0}^{m-1}\int_0^h\exp(A(s+jh)) ds = \sum_{j=0}^{m-1}l_0^j\int_0^h\exp(As)ds\,.
\end{equation}
Finally, this can be written as
\begin{equation}
    L_1=\sum_{j=0}^{m-1}l_0^jl_1\,.
\end{equation}
Therefore
\begin{align}
    \|(L_1-L_1')L_0^{-1}\|&=\|\sum_{j=0}^{m-1}(l_0^jl_1-l_0^{'j}l_1')l_0^{-m}\|\\
    &\leq \sum_{j=0}^{m-1}\|(l_0^{j}-l_0^{'j})l_0^{-m}l_1'\| + \|l_0^{j-m}(l_1-l_1')\|\\
    &\leq \sum_{j=0}^{m-1}\Big[\|(l_0^{j}-l_0^{'j})l_0^{-m}l_1'\|+\|l_0^{-j}(l_1-l_1')\|\Big]\,.\label{eq:triangle}
\end{align}
Now we have
\begin{equation}
    \|l_1'\|=\|h\sum_{j=0}^k\frac{(Ah)^j}{(j+1)!}\|\leq he\,,
\end{equation}
and
\begin{equation}
    \|l_0^{'-j}\|\leq e\hspace{0.1in}\text{ and }\hspace{0.1in} \|l_0^{-j}\|\leq e\,.
\end{equation}
Using the above equations as well as \lem{L_0} and \lem{l_1} and plugging into \eq{triangle}, we get
\begin{equation}
    \|(L_1-L_1')L_0^{-1}\|\leq \sum_{j=0}^{m-1} \Big[\frac{je^2(e-1)}{(k+1)!}\frac{e^2}{\|A\|} + \frac{e^2}{\|A\|(k+1)!}\Big]\,.
\end{equation}
Therefore
\begin{equation}
    \|(L_1-L_1')L_0^{-1}\|\leq\frac{m(m-1)e^5}{\|A\|(k+1)!}=\frac{mTe^5}{(k+1)!}\,,
\end{equation}
where we used the fact that $m= T\|A\|+1$.
\end{proof}

\begin{lemma}\label{lem:AL_1}
We have the following relations.
\begin{equation}
    AL_1=L_0\hspace{0.1in}\text{and}\hspace{0.1in}AL_1'=L_0'\,.
\end{equation}
\end{lemma}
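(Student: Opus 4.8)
The plan is to establish the two identities separately, treating the exact operators $(L_0,L_1)$ with the integral representation from \lem{soln_form} and the truncated operators $(L_0',L_1')$ by a direct polynomial manipulation. Both arguments rest on the fact that every operator appearing here is a (convergent power) series in the single matrix $A$, so all factors commute and $A$ can be moved freely through sums and, in the exact case, through the integral.

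For $AL_1 = L_0$ I would start from $L_1 = \int_0^T \exp(As)\,ds$, as supplied by \lem{soln_form}, together with $L_0 = l_0^m = \exp(AT)$. Since $s \mapsto \exp(As)$ has derivative $A\exp(As)$ and $A$ commutes with the integral, applying $A$ to $L_1$ collapses the integral by the fundamental theorem of calculus and recovers $\exp(AT) = L_0$. Equivalently one can argue termwise from the series $L_1 = T\sum_{j\geq 0}(AT)^j/(j+1)!$: left multiplication by $A$ shifts the summation index, and re-indexing reproduces the series that defines $L_0$; the hypothesis $\|Ah\|\leq 1$ guarantees absolute convergence, so moving $A$ inside the sum is legitimate.

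For the truncated relation $AL_1' = L_0'$ I would work purely algebraically, since both sides are polynomials in $Ah$. The key step is to compute the action of $A$ on the elementary factor $hS_k(Ah)$, where $S_k(z)=\sum_{j=1}^k z^{j-1}/j!$; multiplying by $Ah$ performs an index shift that reconciles $Ah\,S_k(Ah)$ with $T_k(Ah)=l_0'$. Substituting this into $L_1' = h\sum_{j=0}^{m-1} T_k^j(Ah)S_k(Ah)$ and using that $T_k(Ah)$ and $S_k(Ah)$ are commuting polynomials in $A$ turns the sum over $j$ into a telescoping sum in the single operator $l_0'$, which collapses to $l_0'^{\,m}=L_0'$.

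The step I expect to be most delicate is the truncated case. One must track precisely how multiplication by $A$ interacts with the truncation at order $k$, since $S_k$ and $T_k$ are indexed differently and an off-by-one in the index shift would spoil the cancellation; and one must verify that the finite sum $\sum_{j=0}^{m-1} l_0'^{\,j}(\cdots)$ genuinely telescopes, so that all intermediate powers of $l_0'$ cancel in consecutive pairs and only $l_0'^{\,m}$ survives. The exact identity $AL_1 = L_0$ is comparatively immediate once \lem{soln_form} is invoked, so I would prove it first and then transport its structure to the discrete setting to obtain $AL_1' = L_0'$.
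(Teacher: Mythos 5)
There is a genuine gap, and it is compounded by the fact that the lemma statement itself contains a typo that your proposal inherits rather than catches. The identities as printed, $AL_1=L_0$ and $AL_1'=L_0'$, are false, and each of your three computations silently drops an identity term. The fundamental theorem of calculus applied to $L_1=\int_0^T\exp(As)\,ds$ gives
\begin{equation}
AL_1=\int_0^T A\exp(As)\,ds=\exp(AT)-\exp(0)=L_0-I\,,
\end{equation}
not $L_0$: you discarded the boundary contribution at $s=0$. The same omission recurs in your series argument, where re-indexing $A\cdot T\sum_{j\geq 0}(AT)^j/(j+1)!=\sum_{i\geq 1}(AT)^i/i!$ reproduces the exponential series \emph{minus} its $i=0$ term, and in the truncated case, where the index shift yields $Ah\,S_k(Ah)=\sum_{j=1}^k(Ah)^j/j!=T_k(Ah)-I$ rather than $T_k(Ah)$, so the telescoping sum you describe evaluates to $\sum_{j=0}^{m-1}l_0'^{\,j}(l_0'-I)=l_0'^{\,m}-I=L_0'-I$. (Ironically, you flagged exactly this index-shift step as the delicate one, yet still asserted the wrong outcome.) A sanity check shows the printed identities cannot hold: at $T=0$ one has $L_1=0$ but $L_0=I$; and for singular $A$, which the paper explicitly allows, $AL_1$ is singular while $L_0=\exp(AT)$ is invertible.

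The paper's own proof in fact derives the corrected relations: it computes $Al_1=l_0-I$ and $Al_1'=l_0'-I$, then telescopes $AL_1=\sum_{j=0}^{m-1}l_0^j(l_0-I)=L_0-I$ and likewise $AL_1'=L_0'-I$ — so the error lies only in the lemma's displayed equation, not in its proof body. Your strategy is otherwise essentially the paper's (your integral/FTC route for the exact case is a harmless variant of its series manipulation, and your truncated-case telescoping is identical in structure), but a correct execution of either forces the extra $-I$. Note that nothing downstream is damaged by the correction: in \thm{solution_error} only the difference $L_0-L_0'=A(L_1-L_1')$ is used, and there the identity terms cancel. To repair your proposal, state and prove $AL_1=L_0-I$ and $AL_1'=L_0'-I$, keeping the boundary term in the FTC step and the missing $j=0$ (respectively constant) term in each re-indexed sum.
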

\begin{proof}
Consider $Al_1$. We have
\begin{equation}
    Al_1 = Ah\sum_{j=0}^\infty \frac{(Ah)^j}{(j+1)!} = \exp(Ah)-I=l_0-I\,.
\end{equation}
In a very similar way, we have $Al_1'=l_0'-I$. This means
\begin{equation}
    AL_1=A\sum_{j=0}^{m-1}l_0^jl_1=\sum_{j=0}^{m-1}l_0^j(l_0-I) = l_0^m-I = L_0-I\,.
\end{equation}
We can similarly show that
\begin{equation}
    AL_1'=L_0'-I\,.
\end{equation}
\end{proof}

We can now prove a bound on the solution error.
\begin{theorem}\label{thm:solution_error}
Suppose $x_T$ is the solution at time $T$ and $y_m$ is the solution given by the truncated Taylor series at $k$ terms, where
\begin{equation}
    (k+1)!\geq  \frac{me^3}{\delta}\Big(1+\frac{Te^2\|b\|}{\|x_T\|}\Big)\,,
\end{equation}
and $\|Ah\|\leq 1$, then we have the bound
\begin{equation}
    \|x_T-y_m\|\leq \delta\|x_T\|\,.
\end{equation}
\end{theorem}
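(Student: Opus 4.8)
The plan is to bound $x_T - y_m$ by splitting it into a piece proportional to $x_T$ and a piece proportional to $b$, which are exactly the two contributions appearing on the right-hand side of the hypothesis, and then to estimate these pieces using \lem{L_0} and \lem{L_1} respectively. The one subtlety is that both lemmas control \emph{relative} errors (they carry a factor $L_0^{-1}$), so I must first rewrite the error in a form where those relative errors appear naturally.

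First I would record the two closed forms $x_T = L_0 x_0 + L_1 b$ (this is \lem{soln_form} together with $L_0 = l_0^m = \exp(AT)$) and $y_m = L_0' x_0 + L_1' b$, giving
\[
x_T - y_m = (L_0 - L_0')x_0 + (L_1 - L_1')b.
\]
The term $(L_0 - L_0')x_0$ cannot be bounded directly, since \lem{L_0} controls $(L_0-L_0')L_0^{-1}$ rather than $L_0 - L_0'$, and $\|\exp(AT)x_0\|$ need not be comparable to $\|x_T\|$ because of possible cancellation against $L_1 b$. To repair this I eliminate $x_0$ in favour of $x_T$: since $L_0 = \exp(AT)$ is invertible, $x_0 = L_0^{-1}(x_T - L_1 b)$, and substituting yields
\[
x_T - y_m = (L_0 - L_0')L_0^{-1}x_T + \big[(L_1 - L_1') - (L_0 - L_0')L_0^{-1}L_1\big]b.
\]

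The crux, which I expect to be the main obstacle, is showing that the bracketed $b$-coefficient collapses to exactly $(L_1 - L_1')L_0^{-1}$, the quantity bounded by \lem{L_1}. Using $(L_0 - L_0')L_0^{-1}L_1 = L_1 - L_0'L_0^{-1}L_1$, the bracket equals $L_0'L_0^{-1}L_1 - L_1' = (L_0'L_1 - L_1'L_0)L_0^{-1}$, where I have used that $L_0,L_0',L_1,L_1'$ are all power series in $A$ and hence commute. It then remains to verify the identity $L_0'L_1 - L_1'L_0 = L_1 - L_1'$. This is where \lem{AL_1} enters: writing $L_0 = I + AL_1$ and $L_0' = I + AL_1'$, one computes $L_0'L_1 - L_1'L_0 = (L_1 + AL_1'L_1) - (L_1' + AL_1'L_1) = L_1 - L_1'$, the two $A$-terms cancelling by commutativity. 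Hence the $b$-coefficient is precisely $(L_1 - L_1')L_0^{-1}$.

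With this simplification the estimate is immediate. By the triangle inequality,
\[
\|x_T - y_m\| \le \|(L_0 - L_0')L_0^{-1}\|\,\|x_T\| + \|(L_1 - L_1')L_0^{-1}\|\,\|b\|.
\]
The hypotheses $\|Ah\|\le 1$ and (noting $(k+1)! \ge me^3/\delta \ge me^2$ for $\delta\le 1$, so $me^2/(k+1)!\le 1$) let me apply \lem{L_0} and \lem{L_1} to obtain
\[
\|x_T - y_m\| \le \frac{(e-1)me^2}{(k+1)!}\|x_T\| + \frac{mTe^5}{(k+1)!}\|b\|.
\]
Finally I would use $(e-1)e^2 \le e^3$ to combine the two terms as $\frac{me^3}{(k+1)!}\big(\|x_T\| + Te^2\|b\|\big)$, and the hypothesis $(k+1)! \ge \frac{me^3}{\delta}\big(1 + Te^2\|b\|/\|x_T\|\big)$ turns this bound into $\delta\|x_T\|$, as claimed. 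The only genuinely non-routine point is the cancellation identity for the $b$-coefficient; once that is established, the theorem follows by a direct application of the two error lemmas.
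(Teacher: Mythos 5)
Your proof is correct and follows essentially the same route as the paper: both eliminate $x_0$ in favour of $x_T$, use \lem{AL_1} to reduce the error to the decomposition $(L_0-L_0')L_0^{-1}x_T+(L_1-L_1')L_0^{-1}b$, and then invoke \lem{L_0} and \lem{L_1}. Your explicit verification that $(k+1)!\geq me^2$ (so that the hypotheses of those lemmas hold) is a small point the paper leaves implicit, but otherwise the arguments coincide.
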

\begin{proof}
From \lem{soln_form}, the solution after evolving for time $T$ is 
\begin{equation}
    x_T=L_0x_0+L_1b\,.
\end{equation}
Now the final state after $m$ time steps of the truncated Taylor series is 
\begin{equation}
    y_m=L_0'x_0 + L_1'b\,,
\end{equation}
where
\begin{equation}
    L'_0= T_k^m(Ah) \hspace{0.1in}\text{and}\hspace{0.1in} L_1'=h\sum_j T_k^j(Ah) S_k(Ah)\,.
\end{equation}
Now, let $x_h=L_0x_0$ be the homogeneous part of the solution, then the error is given by
\begin{align}
    \|x_T-y_m\|&=\Big\|(L_0-L'_0)x_0 + (L_1-L'_1)b\Big\|\\
    &=\Big\|(L_1-L'_1)L_0^{-1}Ax_h + (L_1-L'_1)b\Big\|\\
    &=\Big\|(L_1-L'_1)L_0^{-1}(Ax_h + L_0^{-1}b)\Big\|\\
    &=\Big\|(L_1-L'_1)L_0^{-1}(Ax_T + b)\Big\|\\
    &\leq \Big\|(L_0-L'_0)L_0^{-1}\Big\|\|x_T\|+\Big\|(L_1-L'_1)L_0^{-1}\Big\|\|b\|\\
    &\leq \frac{me^3}{(k+1)!}\|x_T\| + \frac{mTe^5}{(k+1)!}\|b\|\\
    &\leq \frac{me^3}{(k+1)!}\Big(1+\frac{Te^2\|b\|}{\|x_T\|}\Big)\|x_T\|\,,
\end{align}
where in the second line, we used \lem{AL_1}, the second-to-last line follows from the triangle inequality and the last line from \lem{L_0} and \lem{L_1}. Now, using the value of $k$ defined in the statement of this theorem, we have 
\begin{equation}
    \|x_T-y_m\|\leq \delta \|x_T\|\,.
\end{equation}
\end{proof}
\begin{remark}
Note that the value of $k$ in the above theorem does not depend on the condition number of the matrix $A$ or its diagonalizing matrix (if it is diagonalizable).
\end{remark}

\subsection{Condition number}
In this subsection, we prove a theorem that gives a bound on the condition number.
\begin{lemma}\label{lem:R_k_bound}
Recall that the error in the truncated Taylor series $T_k(At)$ of $\exp(At)$ is
\begin{equation}
    R_k(Ah)=\sum_{j=k+1}^\infty \frac{(Ah)^j}{j!}\,.
\end{equation}
Then for $\|Ah\|\leq 1$, the norm of this error satisfies
\begin{equation}
    \|R_k(Ah)\|\leq \frac{e}{(k+1)!}
\end{equation}
\end{lemma}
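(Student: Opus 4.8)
The plan is to reduce the matrix statement to a purely scalar estimate and then recognize the resulting series as a truncated exponential. The first step is to pass from the matrix series to a scalar series in $\|Ah\|$ using subadditivity of the norm together with submultiplicativity $\|(Ah)^j\|\le\|Ah\|^j$, which yields
\begin{equation}
    \|R_k(Ah)\|\leq \sum_{j=k+1}^\infty \frac{\|Ah\|^j}{j!}\,.
\end{equation}
This is the only place where the matrix structure is used at all; from here on everything is a statement about the nonnegative real number $x:=\|Ah\|\leq 1$.

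The second step is to factor out the leading term $x^{k+1}/(k+1)!$. Reindexing the sum by $j=k+1+i$ with $i\geq 0$, each summand becomes $\frac{x^{k+1}}{(k+1)!}\cdot\frac{x^i}{i!}\cdot\frac{1}{\binom{k+1+i}{i}}$, which is exactly the manipulation already carried out in \lem{l_0} (the ratio $\frac{(k+1)!\,i!}{(k+1+i)!}$ is precisely $1/\binom{k+1+i}{i}$). Since every binomial coefficient satisfies $\binom{k+1+i}{i}\geq 1$, I would drop these factors to obtain the upper bound
\begin{equation}
    \|R_k(Ah)\|\leq \frac{x^{k+1}}{(k+1)!}\sum_{i=0}^\infty \frac{x^i}{i!}=\frac{x^{k+1}}{(k+1)!}e^{x}\,.
\end{equation}

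The final step is to use the hypothesis $x=\|Ah\|\leq 1$ twice: once to bound $e^{x}\leq e$, and once to bound $x^{k+1}\leq 1$, giving $\|R_k(Ah)\|\leq e/(k+1)!$ as claimed. I do not anticipate any real obstacle here—this is a routine truncation estimate. The only thing worth flagging is that this argument is strictly the same as the one in \lem{l_0} but \emph{without} the extra factor $\exp(-Ah)$; that missing factor is exactly why the bound improves from $e^2/(k+1)!$ there to $e/(k+1)!$ here, so one should be careful to track the single versus double power of $e$.
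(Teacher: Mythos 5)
Your proof is correct and follows essentially the same route as the paper's one-line argument: factor out $\|Ah\|^{k+1}/(k+1)!$, drop the reciprocal binomial coefficients (each at most $1$), and bound the remaining exponential series by $e$ using $\|Ah\|\leq 1$. Your side remark correctly identifies why \lem{l_0} picks up an extra factor of $e$ from the additional $\exp(-Ah)$ term.
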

\begin{proof}
We can write the error norm as
\begin{equation}
    \|R_k(Ah)\|\leq \frac{(\|Ah\|)^{k+1}}{(k+1)!}\sum_{j=0}^\infty \frac{(\|Ah\|)^j}{j!}\frac{1}{\binom{k+j+1}{k}}\leq \frac{e}{(k+1)!}\,.
\end{equation}
\end{proof}
Next we prove a lemma on the norm of powers of the truncated Taylor series.
\begin{lemma}\label{lem:T_k_bound}
For $\|Ah\|\leq 1$ and $k$ as in \thm{solution_error}, we have that the truncated Taylor series $T_k(Ah)$ of $\exp(Ah)$ satisfies
\begin{equation}
    \|(T_k(Ah))^\ell\|\leq C(A)(1 + \delta)\,,
\end{equation}
for any $\ell\leq m$. Here $C(A)$ is from \defn{C(A)}.
\end{lemma}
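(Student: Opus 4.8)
The plan is to compare the power $(T_k(Ah))^\ell = l_0^{'\ell}$ of the truncated propagator against the corresponding power $l_0^\ell = \exp(A\ell h)$ of the exact one-step propagator, absorbing the size of the former into $C(A)$ plus a small correction governed by $\delta$. First I would apply the triangle inequality to write
\[
\|l_0^{'\ell}\| \le \|l_0^\ell\| + \|l_0^{'\ell} - l_0^\ell\|\,.
\]
Since $\ell \le m$, the exact factor satisfies $l_0^\ell = \exp(A\ell h)$ with $\ell h$ in the simulation window, so \defn{C(A)} gives $\|l_0^\ell\| = \|\exp(A\ell h)\| \le C(A)$ directly. Everything then reduces to showing that the discrepancy term is at most $\delta\, C(A)$.

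For the discrepancy, I would factor through $l_0^{-\ell}$, writing $l_0^{'\ell} - l_0^\ell = \big(l_0^{'\ell} - l_0^\ell\big) l_0^{-\ell}\, l_0^\ell$, so that
\[
\|l_0^{'\ell} - l_0^\ell\| \le \big\|(l_0^\ell - l_0^{'\ell}) l_0^{-\ell}\big\|\,\|l_0^\ell\| \le \big\|(l_0^\ell - l_0^{'\ell}) l_0^{-\ell}\big\|\, C(A)\,.
\]
The key input is the remark following \lem{L_0}, which extends the estimate of \lem{L_0} from exponent $m$ to any $j \le m$, yielding $\|(l_0^j - l_0^{'j}) l_0^{-j}\| \le j e^2(e-1)/(k+1)!$. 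Taking $j = \ell \le m$ gives $\|(l_0^\ell - l_0^{'\ell}) l_0^{-\ell}\| \le m e^2(e-1)/(k+1)!$. This uses exactly the two standing hypotheses $\|Ah\|\le 1$ and $m e^2/(k+1)! \le 1$, both of which hold for $k$ chosen as in \thm{solution_error}.

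It then remains to bound $m e^2(e-1)/(k+1)!$ by $\delta$. Using $e-1 < e$, this is at most $m e^3/(k+1)!$, and the choice of $k$ in \thm{solution_error}, namely $(k+1)! \ge (m e^3/\delta)\big(1 + Te^2\|b\|/\|x_T\|\big)$, forces $m e^3/(k+1)! \le \delta/\big(1 + Te^2\|b\|/\|x_T\|\big) \le \delta$. Combining the three displays gives $\|l_0^{'\ell}\| \le C(A) + \delta\, C(A) = C(A)(1+\delta)$, as required.

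The main obstacle, and the point I would handle most carefully, is the range issue in the very first step: $\ell \le m$ only guarantees $\ell h \le m h$, and since $m = \lceil T/h\rceil$ one has $m h \ge T$, so $\ell h$ could marginally exceed $T$ and leave the interval $[0,T]$ over which $C(A)$ takes its supremum. I would resolve this either by invoking the paper's step-size convention under which $mh = T$, or by noting that the overshoot $mh - T = O(h)$ is negligible so that $\|\exp(As)\| \le C(A)$ persists for $s \in [0,mh]$. Apart from this bookkeeping, the proof is a direct assembly of the generalized \lem{L_0} estimate, the definition of $C(A)$, and the choice of $k$ from \thm{solution_error}.
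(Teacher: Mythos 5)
Your proof is correct and follows essentially the same route as the paper: a triangle inequality splitting $\|(T_k(Ah))^\ell\|$ into $\|\exp(A\ell h)\|\le C(A)$ plus a discrepancy term factored through $l_0^{-\ell}$, controlled by the remark extending \lem{L_0} to exponents $j\le m$ and the choice of $k$ from \thm{solution_error}. Your closing bookkeeping point is moot under the paper's parameter choice $h=T/\lceil T\|A\|\rceil$, $m=T/h$, which gives $mh=T$ exactly.
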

\begin{proof}
We have
\begin{equation}
    \|(T_k(Ah))^\ell\|\leq \|\exp(Ah\ell)\| + \|(T_k(Ah))^\ell-\exp(Ah\ell)\|\leq \|\exp(Ah\ell)\|\Big(1 + \|(L_0-L'_0)L_0^{-1}\|\Big)\,.
\end{equation}
Using \lem{L_0} and \defn{C(A)}, this can be written as
\begin{equation}
    \|(T_k(Ah))^\ell\|\leq C(A)(1 + \delta)\,.
\end{equation}
\end{proof}

\begin{theorem}\label{thm:condition_number}
For any matrix $A$ and $k$ as in \thm{solution_error}, the condition number $\kappa_L$ of the linear system $L$ defined in \eq{linear_system} is bounded as 
\begin{equation}
    \kappa_L\leq (m+p)C(A)(1+\delta)e(1+e)\,.
\end{equation}
\end{theorem}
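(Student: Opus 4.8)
The plan is to estimate $\kappa_L=\|L\|\,\|L^{-1}\|$ and bound the two factors separately, using that $L=I-N$ with $N$ nilpotent and block-structured as a shift $\ket{i}\mapsto\ket{i+1}$ on the time register tensored with $M_2(I-M_1)^{-1}$ (or the identity on the $p$ padding steps).

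For the first factor I would use $\|L\|\le 1+\|N\|$. Since $N$ sends distinct time indices to orthogonal output blocks, $\|N\|\le\max\{\|M_2(I-M_1)^{-1}\|,1\}$, so it suffices to control the block operator. From the explicit action \eq{M_1M_2action}, $M_2(I-M_1)^{-1}\ket{\ell,v}=\ket{0}\otimes T_{\ell,k}(Ah)\ket{v}$, and the elementary estimate $\|T_{\ell,k}(Ah)\|\le\sum_{j\ge0}\|Ah\|^j\ell!/(\ell+j)!\le e$ (valid for $\|Ah\|\le1$ using $\ell!/(\ell+j)!\le1/j!$) bounds each individual block, aiming at $\|N\|\le e$ and hence $\|L\|\le 1+e$.

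For the second factor, nilpotency gives $L^{-1}=\sum_{j=0}^{m+p}N^j$, so $\|L^{-1}\|\le\sum_j\|N^j\|$ with at most $m+p$ nontrivial terms, and the crux is a uniform bound on $\|N^j\|$. By the shift structure $N^j$ carries block $i$ to block $i+j$ through a product of $j$ consecutive block operators, so $\|N^j\|$ is the largest such product norm over $i$. The simplification I would exploit is that $M_2$ projects the Taylor register onto $\ket{0}$, so every factor after the first acts within the $\ket{0}$ sector, where the block operator reduces to multiplication by $T_k(Ah)$; thus each product equals $\ket{0}\otimes (T_k(Ah))^{r-1}T_{\ell,k}(Ah)$ for some $r\le j$. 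Then \lem{T_k_bound} gives $\|(T_k(Ah))^{r-1}\|\le C(A)(1+\delta)$ and the $T_{\ell,k}$ estimate contributes the extra factor $e$, so that $\|N^j\|\le C(A)(1+\delta)e$ and $\|L^{-1}\|\le (m+p)C(A)(1+\delta)e$. Multiplying the two factors yields $\kappa_L\le(m+p)C(A)(1+\delta)e(1+e)$, with $C(A)$ as in \defn{C(A)}.

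The step I expect to be the main obstacle is precisely the norm of $M_2(I-M_1)^{-1}$. Because $M_2$ sums all $k+1$ Taylor indices into a single output index, the plain operator norm $\|M_2(I-M_1)^{-1}\|=\|\sum_\ell T_{\ell,k}(Ah)T_{\ell,k}(Ah)^\dagger\|^{1/2}$ is of order $\sqrt{k+1}$ (indeed equal to $\sqrt{k+1}$ at $A=0$, where every summand is $I$), rather than $e$. The delicate point is therefore to ensure this $\sqrt{k+1}$ does not propagate into the final bound: one must use that only the first factor applied to a general input ever sees a nontrivial Taylor register, while all later factors stay in the $\ket{0}$ sector and contribute only powers of $T_k(Ah)$ already controlled by $C(A)$. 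Carrying this reduction through so that the $T_{\ell,k}$ contributions collapse to a single factor of $e$, and the estimate remains free of $\sqrt{k+1}$, is where the care is needed.
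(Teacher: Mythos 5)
Your overall route is the same as the paper's: write $\kappa_L=\|L\|\,\|L^{-1}\|$, bound $\|L\|\le 1+\|N\|$ through the explicit action \eq{M_1M_2action} of $M_2(I-M_1)^{-1}$ on basis states $\ket{\ell,v}$ together with $\|T_{\ell,k}(Ah)\|\le e$, and bound $\|L^{-1}\|\le\sum_j\|N^j\|\le (m+p)C(A)(1+\delta)e$ via \lem{T_k_bound}, using that after the first block factor the Taylor register sits in $\ket{0}$ so that only powers of $T_k(Ah)$ appear. Where you go beyond the paper is in explicitly flagging the $\sqrt{k+1}$ hiding in $\|M_2(I-M_1)^{-1}\|$, and that concern is well founded: the paper's own proof derives $\|M_2(I-M_1)^{-1}\|\le \sqrt{k}\,e$ by precisely the triangle-inequality-plus-Cauchy--Schwarz step you describe, and its final displayed line is $\kappa_L\le O((m+p)C(A)\sqrt{k}(1+\delta))$ --- the $\sqrt{k}$ is simply absent from the theorem statement.

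However, the mechanism you propose for eliminating that factor does not work. The $\sqrt{k+1}$ is produced by the \emph{first} application of $M_2(I-M_1)^{-1}$ to a superposition $\sum_\ell a_\ell\ket{\ell,v_\ell}$: all $k+1$ Taylor sectors are mapped coherently into the single output sector $\ket{0}$, so their images add, and the generic estimate is $\sum_\ell|a_\ell|\,e\le\sqrt{k+1}\,e$; this is saturated at $A=0$, where $M_2(I-M_1)^{-1}=M_2$ has norm exactly $\sqrt{k+1}$. Your observation that all \emph{later} factors stay in the $\ket{0}$ sector is correct and is what prevents further degradation with $j$, but it cannot undo the first collapse. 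For the same reason the claim $\|N^j\|\le C(A)(1+\delta)e$ --- which both you and the paper assert --- holds on basis states $\ket{i,\ell,v}$ but acquires a $\sqrt{k+1}$ as an operator norm, since for fixed time index $i$ the contributions from different $\ell$ land in the same output block. The honest conclusion of this line of argument is therefore $\kappa_L\le O((m+p)C(A)\sqrt{k}(1+\delta))$ rather than the constant-factor bound in the statement; this is a defect of the paper's proof as much as of your proposal, and it is essentially harmless downstream because \thm{gate_complexity} already carries a factor of $k$ and $k$ is polylogarithmic in the final parameters.
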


\begin{proof}
The condition number of $L$ is $\kappa_L=\|L\|\|L^{-1}\|$. We bound each of these below. The norm of $L$ is
\begin{equation}
    \|L\|=\|I-N\|\leq 1+\max\{1, \|M_2(I-M_1)^{-1}\|\}\,.
\end{equation}
To bound the norm of $\|M_2(I-M_1)^{-1}\|$, we can use its action on a state of the form $\ket{\ell,v}$ for some normalized basis state $\ket{v}$. From \eq{M_1M_2action}, this is
\begin{equation}
    M_2(I-M_1)^{-1}\ket{\ell,v}=\ket{0}\otimes T_{\ell,k}(Ah) \ket{v}\,,
\end{equation}
where
\begin{equation}
    T_{\ell,k}(Ah)=\sum_{j=0}^{k-\ell}\frac{\ell!(Ah)^j}{(\ell+j)!}\,.
\end{equation}
We can see that for all $0\leq\ell\leq k$ (since $\|Ah\|\leq 1$), we have the following bound.
\begin{equation}\label{eq:T_ell_k}
    \|T_{\ell,k}(Ah)\|\leq \sum_{j=0}^{k-\ell}\frac{1}{j!\binom{\ell+j}{j}}\leq \sum_{j=0}^{k-\ell}\frac{1}{j!}\leq e\,.
\end{equation}
Now for an arbitrary state $\ket{\tilde{v}}$ of the form
\begin{equation}
    \ket{\tilde{v}} = \sum_{\ell} a_{\ell}\ket{\ell,v_\ell}\,,
\end{equation}
where $\sum_{\ell}|a_{\ell}|^2=1$.
Then we have
\begin{equation}
    \|M_2(I-M_1)^{-1}\ket{\tilde{v}}\|\leq \sum_{\ell}|a_{\ell}|\|M_2(I-M_1)^{-1}\ket{\ell,v_\ell}\|\leq \sum_\ell |a_\ell| e\,.
\end{equation}
Since $\sum_\ell |a_\ell|^2=1$, we have 
\begin{equation}
    \sum_\ell |a_\ell|\leq \sqrt{k}\,.
\end{equation}
Using this, we have
\begin{equation}
    \|M_2(I-M_1)^{-1}\| \leq \sqrt{k}e\,.
\end{equation}
Now to bound $\|N\|$, we look at the action of powers of $N$ on a vector of the form $\ket{i,\ell,v}$. Using \eq{M_1M_2action}, we get
\begin{equation}
    N^q\ket{i,\ell,v}=T_k^{q-1}(Ah)T_{\ell,k}(Ah)\,.
\end{equation}

Now using \lem{T_k_bound} and \eq{T_ell_k}, we have
\begin{equation}
    \|N^q\|\leq C(A)(1+\delta)e\,.
\end{equation}
This gives us
\begin{equation}\label{eq:L_inv}
    \|L^{-1}\|=\|(I-N)^{-1}\|\leq (m+p)C(A)(1+\delta)e\,.
\end{equation}
Finally, the condition number can be bounded as
\begin{equation}
    \kappa_L\leq O((m+p)C(A)\sqrt{k}(1+\delta))\,.
\end{equation}
\end{proof}

\subsection{Probability of success}
\begin{theorem}\label{thm:prob_success}
For $k$ as in \thm{solution_error} and $\delta\leq 1/2$, the probability of success of obtaining one of the outcomes in the set $\{m,m+1,\dots m+p-1\}$ when measuring the time register is lower bounded as follows. When $m=p$, we have
\begin{equation}
    P_{meas}\geq \frac{1}{18g^2}\,,
\end{equation}
where
\begin{equation}
    g=\frac{\max_{t\in [0,T]}\|x(t)\|}{\|x_T\|}\,.
\end{equation}
\end{theorem}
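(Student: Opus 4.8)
The plan is to compute $P_{meas}$ directly from the explicit structure of $\ket{y}$ in \eq{final_state}. Measuring the time register and retaining the outcomes $i\in\{m,m+1,\dots,m+p-1\}$ projects onto the $p$ components whose spatial register all carry $y_m$, while the complementary outcomes $i\in\{0,\dots,m-1\}$ carry $y_i$. Since the time-register labels are orthonormal, this gives
\begin{equation}
P_{meas}=\frac{p\|y_m\|^2}{\sum_{i=0}^{m-1}\|y_i\|^2 + p\|y_m\|^2}\,.
\end{equation}
The task thus reduces to lower-bounding the numerator and upper-bounding the denominator in terms of $\|x_T\|$ and $\max_{t}\|x(t)\|$.

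First I would bound the numerator below. Applying \thm{solution_error} at the final step gives $\|y_m-x_T\|\le\delta\|x_T\|$, hence $\|y_m\|\ge(1-\delta)\|x_T\|$ and the numerator is at least $p(1-\delta)^2\|x_T\|^2$. For the denominator I would bound each $\|y_i\|$ from above. Applying the same solution-error estimate to the partial evolution up to time $ih$ (with $i\le m$ steps) yields $\|y_i\|\le(1+\delta)\|x(ih)\|\le(1+\delta)\max_{t\in[0,T]}\|x(t)\|$. Writing $\max_{t}\|x(t)\|=g\|x_T\|$, every one of the $m+p$ terms in the denominator is at most $(1+\delta)^2g^2\|x_T\|^2$, so the denominator is at most $(m+p)(1+\delta)^2g^2\|x_T\|^2$.

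Combining the two bounds and setting $m=p$,
\begin{equation}
P_{meas}\ge\frac{p(1-\delta)^2\|x_T\|^2}{(m+p)(1+\delta)^2g^2\|x_T\|^2}=\frac{(1-\delta)^2}{2(1+\delta)^2g^2}\,.
\end{equation}
Finally, using $\delta\le 1/2$ gives $(1-\delta)^2\ge 1/4$ and $(1+\delta)^2\le 9/4$, so
\begin{equation}
P_{meas}\ge\frac{1/4}{2\cdot(9/4)\cdot g^2}=\frac{1}{18g^2}\,,
\end{equation}
as claimed.

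The step I expect to require the most care is the uniform upper bound $\|y_i\|\le(1+\delta)\max_{t}\|x(t)\|$ for \emph{every} intermediate $i$: \thm{solution_error} is stated only for the final time $T$, and transferring it to each partial evolution requires checking that the chosen truncation order $k$ still meets the hypothesis when $T$, $m$, and $\|x_T\|$ are replaced by $ih$, $i$, and $\|x(ih)\|$. The factors involving $i$ and $ih$ only shrink, but $\|b\|/\|x(ih)\|$ is delicate because $\|x(ih)\|$ can be smaller than $\|x_T\|$; the cleanest remedy is to absorb this variation into $g$, which already measures the spread of $\|x(t)\|$ over $[0,T]$, so that a single relative error $\delta$ governs all steps simultaneously.
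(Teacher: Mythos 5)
Your proposal is correct and follows essentially the same route as the paper: the same expression for $P_{meas}$, the lower bound $\|y_m\|\ge(1-\delta)\|x_T\|$ from \thm{solution_error}, the uniform upper bound $\|y_i\|\le(1+\delta)\max_{t\in[0,T]}\|x(t)\|$ on the intermediate iterates, and the same arithmetic with $m=p$, $\delta\le1/2$. The subtlety you flag at the end is real and is resolved exactly as you suggest: the paper bounds the inhomogeneous error term relative to $\max_{t\in[0,T]}\|x(t)\|\ge\|x_T\|$ rather than $\|x(ih)\|$, so the single truncation order $k$ chosen in \thm{solution_error} suffices for all intermediate steps.
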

\begin{proof}
First, let $\ket{y_g}$ be defined as follows.
\be
\ket{y_g}=\sum_{k=m}^{m+p-1}\ket{k}\ket{y_k}\,.
\ee
The probability of a successful measurement is
\be
P_{meas}=\frac{\| \ket{y_g}\|^2}{\|\ket{y}\|^2}\,,
\ee
where $y$ is the state from \eq{final_state}.

The squared norm of $y$ is
\begin{equation}
    \|y\|^2=\sum_{i=0}^m\|y_i\|^2 + p\|y_m\|^2\leq (m+p)\max_{i}\|y_i\|^2\,.
\end{equation}
Let $i_0$ be the index that maximizes $\|y_i\|^2$. Then, using \thm{solution_error}, we have
\begin{align}
    \|y_{i_0}\|&\leq \|x(i_0h)\|+\frac{me^3}{(k+1)!}\|x(i_0h)\|+\frac{mTe^5}{(k+1)!}\|b\|\\
    &\leq \max_{t\in [0,T]}\|x(t)\|+\frac{me^3}{(k+1)!}\max_{t\in [0,T]}\|x(t)\| + \frac{mTe^5}{(k+1)!}\|b\|\\
    &\leq \max_{t\in [0,T]}\|x(t)\|+\frac{me^3}{(k+1)!}\Big(1+\frac{Te^2\|b\|}{\max_{t\in [0,T]}\|x(t)\|}\Big)\max_{t\in [0,T]}\|x(t)\|\\
    &\leq \max_{t\in [0,T]}\|x(t)\|+\frac{me^3}{(k+1)!}\Big(1+\frac{Te^2\|b\|}{\|x_T\|}\Big)\max_{t\in [0,T]}\|x(t)\|\\
    &\leq(1+\delta)\max_{t\in [0,T]}\|x(t)\|\,.
\end{align}
Therefore, we have
\begin{equation}
    P_{meas}\geq\frac{\| \ket{y_g}\|^2}{(m+p)(1+\delta)^2\max_{t\in [0,T]}\|x(t)\|^2}\,,
\end{equation}
Since the solution error is $\|x_T-y_m\|\leq \delta\|x_T\|$ from \thm{solution_error}, we have
\begin{align}
    P_{meas}&\geq \frac{p\|x_T + y_m-x_T\|^2}{(m+p)(1+\delta)^2\max_{t\in [0,T]}\|x(t)\|^2}\\
    &\geq \frac{p(\|x_T\| - \|x_T-y_m\|)^2}{(m+p)(1+\delta)^2\max_{t\in [0,T]}\|x(t)\|^2}\\
    &\geq \frac{p(1-\delta)^2}{(m+p)(1+\delta)^2g^2}\,.
\end{align}
where
\begin{equation}\label{eq:g}
    g=\frac{\max_{t\in [0,T]}\|x(t)\|}{\|x(T)\|}\,.
\end{equation}
Choosing $m=p$ and $\delta\leq 1/2$, we get 
\be
P_{meas}\geq \frac{1}{18g^2}\,.
\ee
\end{proof}

\subsection{Initial state preparation and circuit implementation}
The preparation scheme for the initial state is very similar to the one in \cite{BCOW17}.
\begin{lemma}\label{lem:initial_state}
Let $O_x$ and $O_b$ be the oracles from \defn{oracles}. Then the state proportional to $\ket{\psi_{in}}$ can be produced with a single call to $O_x$ and $O_b$ and an additional $\polylog(m)$ elementary gates.
\end{lemma}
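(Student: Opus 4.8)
I would realize the normalized state proportional to $\ket{\psi_{in}}$ in two stages: first set up the amplitudes and the time-step superposition purely on the two index registers (with the data register idle at $\ket{0}$), and then use the Taylor-step qubit as the activating flag of the two preparation oracles to write in $\bar x_0$ and $\bar b$. Writing $\mathcal{N}=\sqrt{\|x_0\|^2+mh^2\|b\|^2}$ and using $\ket{0,0,x_0}=\|x_0\|\ket{0,0,\bar x_0}$ and $h\sum_{i=0}^{m-1}\ket{i,1,b}=h\|b\|\sum_{i=0}^{m-1}\ket{i,1,\bar b}$, the target state is
\begin{equation}
\frac{1}{\mathcal{N}}\Big(\|x_0\|\ket{0,0,\bar x_0}+h\|b\|\sum_{i=0}^{m-1}\ket{i,1,\bar b}\Big)\,,
\end{equation}
which is exactly $\ket{\psi_{in}}/\mathcal{N}$.

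\textbf{Stage 1 (indices).} First I would prepare, on the (time, Taylor) registers with the data register held at $\ket{0}$, the state $\frac{1}{\mathcal{N}}\big(\|x_0\|\ket{0}\ket{0}+h\|b\|\sum_{i=0}^{m-1}\ket{i}\ket{1}\big)$. I would do this in two substeps. (i) Apply a single-qubit rotation taking the Taylor qubit to $\frac{1}{\mathcal{N}}\big(\|x_0\|\ket{0}+\sqrt{m}\,h\|b\|\ket{1}\big)$; the rotation angle is computed classically from the norms $\|x_0\|,\|b\|$ (known by \defn{oracles}) and the step size $h$. (ii) Conditioned on the Taylor qubit being $\ket{1}$, spread the time register into the uniform superposition $\frac{1}{\sqrt{m}}\sum_{i=0}^{m-1}\ket{i}$, leaving it in $\ket{0}$ when the Taylor qubit is $\ket{0}$.

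\textbf{Stage 2 (data).} The oracles are engineered precisely to be inserted conditionally: $O_x$ fixes $\ket{1}\ket{\phi}$ and sends $\ket{0}\ket{0}\mapsto\ket{0}\ket{\bar x_0}$, and similarly for $O_b$. Using the Taylor qubit itself as the flag input, one call to $O_x$ writes $\bar x_0$ into the data register on the Taylor$=0$ branch and acts trivially on the Taylor$=1$ branch. I would then apply an $X$ to the Taylor qubit, call $O_b$ once (again with the Taylor qubit as flag), and undo the $X$; since on the flipped branches the data register is $\ket{0}$ exactly for the original Taylor$=1$ indices, this writes $\bar b$ there while leaving the $\bar x_0$ branch untouched. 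The resulting state is the target above, as claimed. Counting gives exactly one call to $O_x$ and one to $O_b$, two $X$ gates, one single-qubit rotation, and the conditional superposition preparation.

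\textbf{Main obstacle.} The only additional cost that scales with $m$ is the conditional preparation of $\frac{1}{\sqrt{m}}\sum_{i<m}\ket{i}$. For $m$ a power of two this is just $\log m$ Hadamards; for general $m$ one invokes a standard amplitude-rotation construction preparing a uniform superposition over $m$ basis states with $\polylog(m)$ gates, and the bookkeeping lies in verifying that it conditions cleanly on the Taylor qubit and produces the correct amplitudes. The finite precision of the single-qubit rotation angle contributes only a $\polylog$ overhead and does not change the stated $\polylog(m)$ count.
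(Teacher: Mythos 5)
Your proposal is correct and follows essentially the same route as the paper's proof: a single-qubit rotation on the Taylor register to set the amplitudes $\|x_0\|/\mathcal{N}$ and $\sqrt{m}h\|b\|/\mathcal{N}$, one conditioned call to each of $O_x$ and $O_b$ to load $\bar x_0$ and $\bar b$, and a $\polylog(m)$-gate conditional uniform superposition over the $m$ time steps. The only (immaterial) differences are that you spread the time register before rather than after the oracle calls and that you spell out the flag/$X$-conjugation mechanics of the conditioning, which the paper leaves implicit.
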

\begin{proof}
Recall that the initial state is
\begin{equation}
    \ket{\psi_{in}}=\frac{1}{N_{init}}(\ket{0,0,x_0} + h\sum_{i=0}^{m-1}\ket{i,1,b})\,,
\end{equation}
where
\begin{equation}
    N_{init}=\sqrt{\|x_0\|^2 + mh^2\|b\|^2}\,.
\end{equation}
To prepare this state from $\ket{0,0,0}$, first apply the following rotation on the second register.
\begin{equation}
    \ket{0,0,0}\rightarrow \frac{\|x_0\|}{N_{init}}\ket{0,0,0} + \frac{\sqrt{m}h\|b\|}{N_{init}}\ket{0,1,0}\,.
\end{equation}
This step takes $O(1)$ gates since it is a unitary on a single qubit. Next, we apply the oracles $O_x$ and $O_b$ to the third register conditioned on the second register being in $\ket{0}$ and $\ket{1}$ respectively to get
\begin{equation}
    \frac{\|x_0\|}{N_{init}}\ket{0,0,\bar{x}_0} + \frac{\sqrt{m}h\|b\|}{N_{init}}\ket{0,1,\bar{b}}=\frac{1}{N_{init}}\ket{0,0,x_0} + \frac{\sqrt{m}h}{N_{init}}\ket{0,1,b}\,.
\end{equation}
Finally, conditioned on the second register being $\ket{1}$, we implement the following rotation on the first register that takes
\begin{equation}
    \ket{0}\rightarrow \frac{1}{\sqrt{m}}\sum_{i=0}^{m-1}\ket{i}\,.
\end{equation}
This step takes $\polylog(m)$ gates. All this gives us the normalized initial state we want with the complexity stated in the theorem.
\end{proof}

We now compute the query and gate complexity of implementing the QLSA on the linear system \eq{linear_system}.
\begin{theorem}\label{thm:gate_complexity}
The QLSA algorithm applied to the linear system \eq{linear_system} has a query complexity of
\begin{equation}
    O(s k\kappa_L\polylog (k,m,d,\kappa_L,1/\epsilon))\,,
\end{equation}
and gate complexity greater by a factor of at most 
\begin{equation}
    O(\polylog(k,m,1/\epsilon))\,,
\end{equation}
where $s$ is the sparsity of the matrix $A$, $d$ is its dimension, $\kappa_L$ is the condition number of the linear system, $m$ is the number of time steps of the algorithm and $k$ is the number of terms before truncation of the Taylor series.
\end{theorem}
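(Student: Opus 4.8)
The plan is to construct a block encoding of the operator $L$ from \eq{linear_system} by composing the block-encoding primitives of \lem{sparse_matrix}, \lem{matrix_inversion} and \lem{matrix_arithmetics}, and then to feed this block encoding, together with the initial-state preparation of \lem{initial_state}, into the QLSA of \thm{block_QLSA}. The total query count will be the product of the $O(\kappa_L\log(1/\epsilon))$ QLSA iterations and the cost of one block encoding of $L$, and the latter is dominated by a single inversion of $(I-M_1)$, which is where the extra factor of $k$ enters. The sparsity $s=\max\{s_r,s_c\}$ enters once, as the subnormalization of the sparse block encoding of $A$.

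First I would block-encode the sparse matrix. After scaling so that $\|Ah\|\le 1$, \lem{sparse_matrix} gives a $(\sqrt{s_rs_c},n+3,\epsilon')$ block encoding of $Ah$ using $O(1)$ calls to $O_r,O_c,O_A$ and $O(n+\polylog(s_rs_c/\epsilon'))$ gates. From this I would assemble $M_1=\sum_j\ket{j+1}\bra{j}\otimes Ah/(j+1)$ by controlling the block encoding of $Ah$ on the Taylor register and realising the weights $1/(j+1)$ and the shift $\ket{j+1}\bra{j}$ with a diagonal/state-preparation unitary, at a cost of $O(\polylog k)$ extra gates; the input-independent operator $M_2=\sum_j\ket{0}\bra{j}\otimes I$ is purely structural and likewise costs $O(\polylog k)$ gates. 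Next I would invert $I-M_1$: since $M_1$ is nilpotent with $\|M_1\|\le\|Ah\|\le 1$, we have $(I-M_1)^{-1}=\sum_{j=0}^{k}M_1^{j}$, so $\|(I-M_1)^{-1}\|\le k+1$ and the condition number of $I-M_1$ is $O(k)$. Forming $I-M_1$ by the addition primitive (item~2 of \lem{matrix_arithmetics}) and applying \lem{matrix_inversion} with $\kappa=O(k)$ and subnormalization $\alpha=O(s)$ yields a block encoding of $(I-M_1)^{-1}$ that uses $O(sk\,\polylog(k,1/\epsilon))$ oracle queries. Multiplying by $M_2$ (item~3 of \lem{matrix_arithmetics}), tensoring with the time-shift $\sum_i\ket{i+1}\bra{i}$ on the $(m+p)$-dimensional register plus the identity block for $i>m$ (an $O(\polylog m)$ overhead), and a final addition then produce $N$ and $L=I-N$.

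Finally I would prepare $\ket{\psi_{in}}$ via \lem{initial_state} and run the QLSA of \thm{block_QLSA} on $L$, costing $O(\kappa_L\log(1/\epsilon))$ calls to the block encoding of $L$. The key observation keeping the bound at $sk\kappa_L$ is that the condition number is scale-invariant, so the subnormalization of the block encoding of $L$ (which picks up structural factors such as the $O(\sqrt k)$ from $M_2$) does not enter the QLSA iteration count, and the only polynomial factors are $\kappa_L$ from the QLSA and the $sk$ from the single inversion of $I-M_1$ inside each block encoding of $L$. Multiplying the two gives query complexity $O(sk\kappa_L\,\polylog(k,m,d,\kappa_L,1/\epsilon))$. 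Since every oracle query is accompanied by at most $O(n+\polylog(s,k,m,1/\epsilon))$ elementary gates, with $n=\log d$ absorbed into the query-complexity polylog, the gate complexity exceeds the query complexity by at most a factor $O(\polylog(k,m,1/\epsilon))$.

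The main obstacle is the bookkeeping of how the subnormalization constants and the intermediate block-encoding error parameters compose through the inversion of $I-M_1$ and the QLSA. In particular one must verify that the error $\delta=o(\epsilon/\kappa^2\log^3(\kappa^2/\epsilon))$ demanded by \lem{matrix_inversion} only costs $\polylog$ overhead, that the condition number of $I-M_1$ is genuinely $O(k)$ and not larger, and that the structural subnormalization factors (the $O(\sqrt k)$ from $M_2$ and the time-shift contributions) really drop out of the iteration count via scale invariance rather than reappearing as hidden powers of $k$, $m$ or $\kappa_L$.
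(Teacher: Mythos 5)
Your proposal is correct and follows essentially the same route as the paper: block-encode the sparse pieces, invert $I-M_1$ using its $O(k)$ condition number (the source of the $sk$ factor), assemble $L=I-N$ via the matrix-arithmetic primitives, and run the QLSA at a cost of $O(\kappa_L\log(1/\epsilon))$ block-encoding calls, with the error parameters of the intermediate encodings tightened to $O(\epsilon/(\kappa_L k\log(1/\epsilon)))$ at only polylogarithmic cost. The one cosmetic difference is that the paper block-encodes $I-M_1$ directly as an $(s_r+1,s_c+1)$-sparse matrix of dimension $dk$ via \lem{sparse_matrix}, rather than assembling $M_1$ from a controlled block encoding of $Ah$ with the $1/(j+1)$ weights; both are valid and lead to the same bound.
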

\begin{proof}
It can be seen from the structure the matrix $M_1$, that it is an $s_r$ row sparse and $s_c$ column sparse matrix since $A$ is an $(s_r,s_c)$ sparse matrix. $M_1$ is of dimension $kd$, where $d$ is the dimension of $A$. This means that $I-M_1$ is a $(s_r+1,s_c+1)$-sparse matrix. The condition number of $I-M_1$ can be computed as follows. First, the norm of $I-M_1$ is
\begin{equation}
    \|I-M_1\|\leq 2\,,
\end{equation}
since $\|M_1\|\leq 1$. Now for the norm of its inverse, we have
\begin{equation}
    \|(I-M_1)^{-1}\|= \|I+M_1+\dots +M_1^{k-1}\|\leq k\,,
\end{equation}
Therefore, the condition number $\kappa$ of $I-M_1$ is
\begin{equation}
    \kappa\leq 2k\,.
\end{equation}

We now bound the query complexity in the following steps.
\begin{enumerate}
    \item Using \lem{sparse_matrix}, we can block encode the matrix $I-M_1$ as a $(\sqrt{s_rs_c}, O(\log(d k)), \epsilon_0)$ block unitary with complexity 
    \begin{equation}
        T_0 = O(\log(dk) + \log^{2.5}(s/\epsilon_0))\,.
    \end{equation}
    This is because $I-M_1$ has dimension $dk$ and hence is a $O(\log dk )$ qubit matrix.
    
    \item Using \lem{matrix_inversion}, the inverse of $(I-M_1)^{-1}$ can be implemented as an $(\alpha_1,a_1,\epsilon_1)$ block unitary with complexity
    \begin{equation}
        T_1=O(s k(\log(d k)+\log^{2.5}(k/\epsilon_1)))\,,
    \end{equation} 
    where $s=\max\{s_r,s_c\}$, $\alpha_1=2k$ and $a_1=\log(d k)+O(\log(2k/\epsilon_1))$. The fact that $\alpha_1=2k$ comes from the condition number of $I-M_1$.
    
    \item Using \lem{sparse_matrix}, since the matrix $M_2$ is a $(k,1)$-sparse matrix with nonzero entries equal to $1$, $M_2$ can be $(\alpha_2,a_2,\epsilon_2)$ block encoded with complexity
    \begin{equation}
        T_2=O(\log(d k)+\log^{2.5}(k/\epsilon_2))\,,
    \end{equation}
    where $\alpha_2=\sqrt{k}$ and $a_2=\log(d k)+3$.
    
    \item Using \lem{matrix_arithmetics}, the matrix $M_2(I-M_1)^{-1}$ can be $(\alpha_3,a_3,\epsilon_3)$ block encoded with complexity
    \begin{equation}
        T_3=O(T_1+T_2)\,,
    \end{equation}
    where $\alpha_3=\alpha_1\alpha_2$, $a_3=a_1+a_2$ and $\epsilon_3=\alpha_1\epsilon_2+\alpha_2\epsilon_1$.
    
    \item Using \lem{matrix_arithmetics} again, we can implement $I-N$ as a $(\alpha_3,a,\epsilon_3)$ block matrix with complexity $T_4=O(T_3)$, where $a=O(\log(kdm/\epsilon_1))$.
    
    \item Finally, using \thm{block_QLSA}, we apply the block QLSA algorithm with complexity
    \begin{equation}\label{eq:T_QLSA}
        O(\kappa_L \log(1/\epsilon)) (T_4+T_\psi)\,,
    \end{equation}
    where $T_\psi$ is the complexity of preparing the initial state.
\end{enumerate}
For the total error of the implementation to be bounded by $\epsilon$, we first choose $\epsilon_1=\epsilon_2$. We then need
\begin{equation}
    \epsilon_3= \sqrt{k}\epsilon_1 + 2k\epsilon_1\leq O(\frac{\epsilon}{\kappa_L\log 1/\epsilon})\,.
\end{equation}
For this to be satisfied, we need 
\begin{equation}
    \epsilon_1\leq O(\frac{\epsilon}{\kappa_L k\log 1/\epsilon})\,.
\end{equation}
Plugging this into expressions for $\alpha_3$, $T_4$ and $a$, we get
\begin{align}
    &\alpha_3=O(k^{1.5})\\
    &a=O(\log(m\kappa_L d k\log(1/\epsilon)/\epsilon))\\
    &T_4=O( (sk(\log(d k) + \log^{2.5}(k\kappa_L\log(1/\epsilon)/\epsilon)))\,.
\end{align}
Taking $T_\psi$ from \lem{initial_state} and plugging all this into \eq{T_QLSA}, we get the query and gate complexities in the statement of the theorem.
\end{proof}

\section{Main result}\label{sec:main}
We now state and prove our main result.
\begin{theorem}\label{thm:main_thm}
For any sparse matrix $A$ with sparsity $s$, dimension $d$ and $C(A)$ from \defn{C(A)} and for the problem of producing a quantum state proportional to the solution of a linear nonhomogeneous differential equation defined in \prb{ODE}, there exists a quantum algorithm that produces a quantum state which is $\epsilon$ close to the normalized solution with overall query complexity given by
\begin{equation}
    O\Big(gT\|A\|C(A)\poly\Big(s,\log d,\log(1+\frac{Te^2\|b\|}{\|x_T\|}),\log(\frac{1}{\epsilon}),\log(T\|A\|C(A))\Big)\Big)\,,
\end{equation}
and gate complexity which is greater by a factor of at most
\begin{equation}
    O(\polylog (1+\frac{Te^2\|b\|}{\|x_T\|}, 1/\epsilon, T\|A\|))\,,
\end{equation}
where from \eq{g}, $g$ is defined as
\begin{equation}
    g=\frac{\max_{t\in [0,T]}\|x(t)\|}{\|x(T)\|}\,.
\end{equation}
\end{theorem}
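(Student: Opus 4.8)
The plan is to assemble the four quantitative ingredients already established---the Taylor truncation bound (\thm{solution_error}), the condition-number bound (\thm{condition_number}), the success-probability bound (\thm{prob_success}), and the per-inversion cost (\thm{gate_complexity})---and then pay for boosting the post-selection probability by amplitude amplification. First I would fix the discretization so that the hypotheses of all four results hold simultaneously: choose $h=T/m$ with $m=\lceil T\|A\|\rceil+1=O(T\|A\|)$, which guarantees $\|Ah\|\le 1$, and set $p=m$ so that \thm{prob_success} applies. I would set the truncation tolerance $\delta=\Theta(\epsilon)$ (with $\delta\le 1/2$); since two vectors obeying $\|x_T-y_m\|\le\delta\|x_T\|$ have normalizations at distance at most $2\delta$, the ideal post-selected state $y_m/\|y_m\|$ is then $O(\epsilon)$-close to the target $x(T)/\|x(T)\|$.

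The key observation is that the number of Taylor terms $k$ is only polylogarithmic. Taking logarithms in the factorial condition of \thm{solution_error} and using Stirling (so that $\log((k+1)!)=\Theta(k\log k)$), the requirement $(k+1)!\ge \frac{me^3}{\delta}(1+\frac{Te^2\|b\|}{\|x_T\|})$ is met with
\begin{equation}
    k=O\!\left(\frac{\log\!\big(\tfrac{T\|A\|}{\epsilon}(1+\tfrac{Te^2\|b\|}{\|x_T\|})\big)}{\log\log(\cdots)}\right)=\polylog\!\Big(T\|A\|,\,1+\tfrac{Te^2\|b\|}{\|x_T\|},\,1/\epsilon\Big).
\end{equation}
Feeding $m=O(T\|A\|)$ and this $k$ into \thm{condition_number} gives $\kappa_L=O(mC(A)\sqrt{k})=O(T\|A\|C(A)\sqrt{k})$, and then \thm{gate_complexity} bounds the cost of one approximate application of $L^{-1}$ via the QLSA by $O(sk\,\kappa_L\,\polylog(k,m,d,\kappa_L,1/\epsilon))$.

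It remains to pay for the post-selection. Since \thm{prob_success} gives $P_{meas}\ge 1/(18g^2)$, amplitude amplification reaches constant success probability using $O(1/\sqrt{P_{meas}})=O(g)$ repetitions of the state preparation (\lem{initial_state}) and the QLSA. Multiplying the per-inversion cost by this $O(g)$ factor yields overall query complexity
\begin{equation}
    O\!\big(g\, s k^{1.5}\, T\|A\|C(A)\,\polylog(k,T\|A\|,d,\kappa_L,1/\epsilon)\big),
\end{equation}
and since $k$ (hence $\sqrt{k}$) is polylogarithmic, the factors $sk^{1.5}$ and every $\polylog$ term collapse into a single $\poly(s,\log d,\log(1+\tfrac{Te^2\|b\|}{\|x_T\|}),\log(1/\epsilon),\log(T\|A\|C(A)))$, recovering the stated bound. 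The gate complexity exceeds this by the $O(\polylog(k,m,1/\epsilon))$ factor from \thm{gate_complexity}.

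I anticipate the main obstacle to be the error bookkeeping rather than any single estimate: one must verify that the QLSA tolerance and the truncation tolerance $\delta$ can both be taken $\Theta(\epsilon)$ (with the constants absorbed into the $\log(1/\epsilon)$ terms) and that amplitude amplification, applied to an only approximately prepared state, degrades the final fidelity by at most $O(\epsilon)$. Propagating these errors through the post-selection without inflating the query count---ensuring each $\log(1/\epsilon)$ stays additive inside the polylog---is the delicate step; the complexity estimates themselves are routine substitutions once $k=\polylog$ is established.
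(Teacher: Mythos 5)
Your proposal is correct and follows essentially the same route as the paper's proof: the same parameter choices ($m=p=O(T\|A\|)$, $\delta=\Theta(\epsilon)$, $k=O(\log\Omega/\log\log\Omega)$ so that $(k+1)!$ exceeds the threshold of \thm{solution_error}), the same triangle-inequality argument for closeness of the normalized states, and the same assembly of \thm{condition_number}, \thm{gate_complexity}, and amplitude amplification at cost $O(g)$. The error bookkeeping you flag as delicate is handled in the paper exactly as you anticipate, by taking $\delta\le\epsilon/2$ and letting \thm{gate_complexity} absorb the block-encoding and QLSA tolerances into the polylog factors.
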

\begin{proof}
We pick the parameters of the algorithm as follows.
\begin{align}\label{eq:parameters}
    &h= \frac{T}{\lceil T\|A\|\rceil}\,,m=p=\frac{T}{h}=\lceil T\|A\|\rceil\,,\delta\leq \frac{\epsilon}{2}\\
    &k=\Big\lceil\frac{2\log\Omega}{\log\log\Omega}\Big\rceil\,,
\end{align}
where 
\begin{equation}
    \Omega=e^3T\|A\|\Big(1+\frac{Te^2\|b\|}{\|x_T\|}\Big)\,.
\end{equation}
This choice of $k$ makes $(k+1)!>\Omega$. Let the output of the algorithm described in \sec{algorithm} be the state $\ket{\psi}$. In \sec{algorithm}, we showed that the quantum state $\ket{\psi}$ produced by the algorithm is $y_m/\|y_m\|$. In \thm{solution_error}, it was shown that
\begin{equation}
    \|y_m-x_T\|\leq \delta \|x_T\|\,.
\end{equation}
This means, we have
\begin{align}
    &\Big\|\frac{y_m}{\|y_m\|}-\frac{x_T}{\|x_T\|}\Big\|\leq\Big\|\frac{y_m}{\|x_T\|}-\frac{x_T}{\|x_T\|}\Big\|+\Big\|\frac{y_m}{\|y_m\|}-\frac{y_m}{\|x_T\|}\Big\| \\
    &\leq \delta + \frac{\Big|\|x_T\|-\|y_m\|\Big|}{\|x_T\|}\\
    &\leq \delta + \frac{\|x_T-y_m\|}{\|x_T\|}\\
    &\leq 2\delta
    \leq \epsilon\,.
\end{align}
From \thm{prob_success}, the success probability is proportional to $O(1/g^2)$. To make the success probability a constant, we need to perform amplitude amplification \cite{Amplitude_amp}. Taking this into account the overall complexity increases by a factor of $g$. Using the bound on the condition number from \thm{condition_number} and bounds on query and gate complexities from \thm{gate_complexity}, after plugging in the values from \eq{parameters}, we get the query and gate complexities of the algorithm as in the theorem statement.
\end{proof}

\section{Application to nonlinear differential equations}\label{sec:nonlinear}
Consider the nonlinear differential equation from \cite{Liue2026805118}, which is of the form
\begin{equation}\label{eq:quadratic_ode}
    \frac{du}{dt}=F_2u^{\otimes 2}+F_1u + F_0\,,\hspace{0.5in}u(0)=u_{in}\,.
\end{equation}
where $u\in \mathbb{R}^d$, $F_2\in \mathbb{R}^{d\times d^2}$, $F_1\in \mathbb{R}^{d\times d}$ and $F_0\in\mathbb{R}^d$. Unlike in \cite{Liue2026805118}, we assume that $F_0$ (and $F_1$ and $F_2$) are time-independent\footnote{$F_0$ can be assumed to be time-dependent like in \cite{Liue2026805118}. Here we leave out the time-dependence for simplicity.}. However, we do not assume that $F_1$ is diagonalizable. We only assume that $F_1$ has a negative log-norm $\mu(F_1)< 0$. 

The procedure of Carleman linearization takes a quadratic (or higher order) ODE and converts it into a linear equation of higher dimension. For the quadratic ODE above, we get
\begin{equation}
  \frac{\d{x}}{\d{t}} = A  x + b, \qquad
   x(0) = x_{\mathrm{in}}
\label{eq:LODE}
\end{equation}
with the tri-diagonal block structure
\begin{equation}
\frac{\d{}}{\d{t}}
  \begin{pmatrix}
     x_1 \\
     x_2 \\
     x_3 \\
    \vdots \\
     x_{N-1} \\
     x_N \\
  \end{pmatrix}
=
  \begin{pmatrix}
    A_1^1 & A_2^1 &  &  &  &  \\
    A_1^2 & A_2^2 & A_3^2 & &  &  \\
     & A_2^3 & A_3^3 & A_4^3 &  &  \\
     &  & \ddots & \ddots & \ddots &  \\
     &  &  & A_{N-2}^{N-1} & A_{N-1}^{N-1} & A_N^{N-1} \\
     &  &  &  & A_{N-1}^N & A_N^N \\
  \end{pmatrix}
  \begin{pmatrix}
     x_1 \\
     x_2 \\
     x_3 \\
    \vdots \\
     x_{N-1} \\
     x_N \\
  \end{pmatrix}+
  \begin{pmatrix}
    F_0 \\
    0 \\
    0 \\
    \vdots \\
    0 \\
    0 \\
  \end{pmatrix},
\label{eq:UODE}
\end{equation}
where $x_j=u^{\otimes j}\in\R^{d^j}$, $x_{\mathrm{in}}=[u_{\mathrm{in}}; u_{\mathrm{in}}^{\otimes 2}; \ldots; u_{\mathrm{in}}^{\otimes N}]$, and $A_{j+1}^j \in \R^{d^j\times d^{j+1}}$, $A_j^j \in \R^{d^j\times d^j}$, $A_{j-1}^j \in \R^{d^j\times d^{j-1}}$ for $j\in\range{N}$ satisfying
\begin{align}
A_{j+1}^j &= F_2\otimes I^{\otimes j-1}+I\otimes F_2\otimes I^{\otimes j-2}+\cdots+I^{\otimes j-1}\otimes F_2, \label{eq:tensor2} \\
A_j^j &= F_1\otimes I^{\otimes j-1}+I\otimes F_1\otimes I^{\otimes j-2}+\cdots+I^{\otimes j-1}\otimes F_1, \label{eq:tensor1} \\
A_{j-1}^j &= F_0\otimes I^{\otimes j-1}+I\otimes F_0\otimes I^{\otimes j-2}+\cdots+I^{\otimes j-1}\otimes F_0. \label{eq:tensor0}
\end{align}
Note that $A$ is a $(3Ns)$-sparse matrix, where $s$ is the sparsity of $A$. The dimension of \eq{LODE} is
\begin{equation}
  \Delta \coloneqq d+d^2+\cdots+d^N=\frac{d^{N+1}-d}{d-1}=O(d^N).
\end{equation}

We now define a quantity similar to the one that was used in \cite{Liue2026805118} to quantify the amount of nonlinearity in the quadratic ODE \eq{quadratic_ode}. The difference is that we use log-norm here instead of $\max_\lambda\Re(\lambda)$.
\begin{definition}
\begin{equation}\label{eq:R}
    R=\frac{1}{|\mu(F_1)|}\Big(\|F_2\|\|u_{in}\| + \frac{\|F_0\|}{\|u_{in}\|}\Big)\,.
\end{equation}
\end{definition}
The following lemmas are needed to give a bound on the error in truncating the linearization procedure at level $N$.
\begin{restatable}{lemma}{decrease}\label{lem:Carleman}
For the quadratic ODE in \eq{quadratic_ode}, when $R<1$ we have
\begin{equation}
    \|u(t)\|\leq \|u_{in}\|\,.
\end{equation}
\end{restatable}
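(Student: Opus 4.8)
The plan is to control $\|u(t)\|$ via a Gr\"onwall-type differential inequality, using the log-norm to dominate the linear term. Since $u(t)\in\R^d$ is real, I would differentiate the squared norm,
\[
  \tfrac{1}{2}\frac{d}{dt}\|u(t)\|^2 = u^T\dot u = u^T F_2\,u^{\otimes 2} + u^T F_1 u + u^T F_0 ,
\]
and bound the three contributions separately. For the quadratic term, Cauchy--Schwarz together with $\|u^{\otimes 2}\|=\|u\|^2$ gives $|u^T F_2 u^{\otimes 2}|\le \|F_2\|\,\|u\|^3$; the inhomogeneous term obeys $|u^T F_0|\le\|F_0\|\,\|u\|$. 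The key input is the linear term: writing $F_1$ as the sum of its symmetric and antisymmetric parts and noting the antisymmetric part contributes nothing to a real quadratic form, I get $u^T F_1 u = u^T\tfrac{F_1+F_1^T}{2}u \le \mu(F_1)\|u\|^2 = -|\mu(F_1)|\,\|u\|^2$, where the inequality is the variational characterization of the top eigenvalue of $(F_1+F_1^T)/2$ and the last equality uses $\mu(F_1)<0$ (see \defn{alpha_mu}).

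Dividing by $\|u\|$ wherever $u\ne 0$, these estimates collapse to the scalar differential inequality
\[
  \frac{d}{dt}\|u\| \;\le\; \|F_2\|\,\|u\|^2 - |\mu(F_1)|\,\|u\| + \|F_0\| \;=:\; f(\|u\|).
\]
The decisive point is that $f$ is nonpositive at the initial norm exactly under the hypothesis $R<1$: factoring out $|\mu(F_1)|\,\|u_{in}\|$ gives
\[
  f(\|u_{in}\|) = |\mu(F_1)|\,\|u_{in}\|\Big(\tfrac{\|F_2\|\|u_{in}\|}{|\mu(F_1)|} + \tfrac{\|F_0\|}{|\mu(F_1)|\|u_{in}\|} - 1\Big) = |\mu(F_1)|\,\|u_{in}\|\,(R-1) < 0 ,
\]
with $R$ as in \eq{R}.

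With this in hand I would close the argument by a continuity/barrier argument. Suppose, for contradiction, that $\|u(t_1)\|>\|u_{in}\|$ for some $t_1>0$, and set $t_0=\sup\{t<t_1:\|u(t)\|\le\|u_{in}\|\}$. By continuity $\|u(t_0)\|=\|u_{in}\|$, so in particular $u(t_0)\ne 0$ (assuming $u_{in}\ne 0$) and $\|u\|$ is differentiable at $t_0$, while $\|u(t)\|>\|u_{in}\|$ on $(t_0,t_1]$. But the differential inequality yields $\frac{d}{dt}\|u\|\big|_{t_0}\le f(\|u_{in}\|)<0$, so $\|u\|$ strictly decreases at $t_0$ and drops below $\|u_{in}\|$ immediately afterward, contradicting the definition of $t_0$. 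Hence $\|u(t)\|\le\|u_{in}\|$ for all $t\ge 0$.

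The main obstacle is not the norm estimates but the careful handling of the barrier argument: one must guarantee differentiability of $\|u\|$ at the crossing point (which holds because $\|u_{in}\|\ne 0$ forces $u(t_0)\ne 0$) and ensure the strict sign $f(\|u_{in}\|)<0$ actually produces a genuine decrease rather than mere non-increase. An alternative, perhaps cleaner, route is a comparison principle against the scalar ODE $\dot r=f(r)$ with $r(0)=\|u_{in}\|$: since $f(\|u_{in}\|)<0$, the constant $r\equiv\|u_{in}\|$ is a supersolution and dominates $\|u(t)\|$; the direct argument above, however, sidesteps invoking existence theory for the comparison equation.
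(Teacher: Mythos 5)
Your proposal is correct, and the first half coincides exactly with the paper's argument: both differentiate $\|u\|^2$, bound the quadratic and inhomogeneous terms by $\|F_2\|\,\|u\|^3$ and $\|F_0\|\,\|u\|$, use the variational characterization of $\mu(F_1)$ for the linear term, and arrive at the same scalar differential inequality $\frac{d}{dt}\|u\|\le \|F_2\|\|u\|^2+\mu(F_1)\|u\|+\|F_0\|$. (Incidentally, your version is cleaner than the paper's, which has a typo writing $\mu(F_0)$ where $\mu(F_1)$ is meant.) Where you genuinely diverge is in how you extract the conclusion. The paper runs a comparison against the one-dimensional Riccati equation $\dot x=ax^2+bx+c$, $x(0)=\|u_{\mathrm{in}}\|$: it shows $R<1$ forces a positive discriminant, identifies the two nonnegative roots $0\le r_-<r_+$, writes down the closed-form solution, and observes that since $\|u_{\mathrm{in}}\|$ sits strictly between the roots the comparison solution decreases monotonically, whence $\|u(t)\|<\|u_{\mathrm{in}}\|$. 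You instead observe only that the right-hand side $f$ satisfies $f(\|u_{\mathrm{in}}\|)=|\mu(F_1)|\,\|u_{\mathrm{in}}\|(R-1)<0$ and close with a barrier argument at the first crossing time $t_0$; your handling of differentiability at $t_0$ (guaranteed because $\|u(t_0)\|=\|u_{\mathrm{in}}\|\ne 0$) and of the strict decrease is sound. Your route is more elementary — it needs no explicit solution and no comparison principle — and proves exactly the stated bound. What the paper's heavier machinery buys is side information that it reuses later: the roots $r_\pm$ and the fact that $\|u(t)\|$ stays below $r_+$ are invoked again in the proof of \lem{conditions} to construct the rescaling $\gamma\in(\|u_{\mathrm{in}}\|,r_+)$, so if one adopted your proof wholesale one would still need to record the root analysis separately for that later lemma.
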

\begin{proof}
The proof of this lemma is given in \appx{appendix_A}.
\end{proof}
Define the error due to truncation of Carleman procedure as follows.
\begin{equation}
    \eta_j(t)=u^{\otimes j}(t) - x_j(t)\,,
\end{equation}
where $\eta_1(t)$ is the error between the solution of \eq{quadratic_ode} and $x_1(t)$ (which will be $\epsilon$ close to the output of the quantum algorithm). Let $\eta(t)$ be the vector whose entries are $\eta_j$ for $j=1$ to $N$.
\begin{lemma}\label{lem:C(A)_nlode}
For the quadratic ODE from \eq{quadratic_ode}, let the linear ODE obtained by Carleman linearization be 
    \begin{equation}
        \frac{dx}{dt}=Ax+b\,.
    \end{equation}
Suppose $|\mu(F_1)|>\|F_0\|+\|F_2\|$ and $\mu(F_1)<0$, then
    \begin{equation}
        C(A)\leq N\,.
    \end{equation}
\end{lemma}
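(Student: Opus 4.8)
The plan is to prove $C(A)\le 1$ by showing that $A$ has non-positive log-norm, $\mu(A)\le 0$, and then invoking the elementary bound \eq{mu_bound}: since $\mu(A)\le 0$ we get $\|\exp(At)\|\le e^{\mu(A)t}\le 1$ for every $t\ge 0$, whence $C(A)=\sup_{t\in[0,T]}\|\exp(At)\|\le 1$ directly from \defn{C(A)}. Because $\mu(A)$ is the largest eigenvalue of the Hermitian part $(A+A^\dagger)/2$ (see \defn{alpha_mu}), it suffices to show that this Hermitian part is negative semidefinite, i.e.\ $\Re\langle v,Av\rangle\le 0$ for every block vector $v=(v_1,\dots,v_N)$ with $v_j\in\mathbb{C}^{d^j}$. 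This is the step I would carry out by hand using the explicit tridiagonal block structure \eq{UODE}.

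First I would expand $\langle v,Av\rangle$ along the three block diagonals. For the diagonal blocks $A_j^j$ of \eq{tensor1}, note that the Hermitian part $\tfrac12(A_j^j+(A_j^j)^\dagger)$ is the Kronecker sum of $j$ copies of $H:=\tfrac12(F_1+F_1^\dagger)$; the eigenvalues of a Kronecker sum are the sums of eigenvalues of the summands, so its largest eigenvalue is exactly $j\,\lambda_{\max}(H)=j\,\mu(F_1)$. Since $\mu(F_1)<0$ this gives the key negative contribution $\Re\langle v_j,A_j^j v_j\rangle\le j\,\mu(F_1)\|v_j\|^2=-j|\mu(F_1)|\,\|v_j\|^2$. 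For the off-diagonal blocks I would use the triangle inequality on the $j$-term sums \eq{tensor2} and \eq{tensor0}: tensoring with the identity preserves the operator norm, so $\|A_{j+1}^j\|\le j\|F_2\|$ and $\|A_{j-1}^j\|\le j\|F_0\|$. Writing $a_j:=\|v_j\|$ and pairing the super-diagonal term coming from row $j$ with the sub-diagonal term coming from row $j+1$, the cross terms between levels $j$ and $j+1$ are bounded in magnitude by $\big(j\|F_2\|+(j+1)\|F_0\|\big)\,a_j a_{j+1}$.

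Collecting everything reduces the claim to negative semidefiniteness of the scalar symmetric tridiagonal $N\times N$ matrix $M$ with $M_{jj}=-j|\mu(F_1)|$ and $M_{j,j+1}=M_{j+1,j}=\tfrac12\big(j\|F_2\|+(j+1)\|F_0\|\big)$, since $\Re\langle v,Av\rangle\le a^\top M a$ for the nonnegative vector $a=(a_1,\dots,a_N)$. I would establish $M\preceq 0$ by a diagonal-dominance (Gershgorin) argument: it suffices that each diagonal entry dominates the sum of the adjacent off-diagonal magnitudes, which after collecting the two neighbouring couplings is an inequality of the form $j|\mu(F_1)|\ge \tfrac12\big[(2j-1)\|F_2\|+(2j+1)\|F_0\|\big]$, with the obvious simplification at the endpoints $j=1,N$. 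This is precisely where the hypotheses $\mu(F_1)<0$ and $|\mu(F_1)|>\|F_0\|+\|F_2\|$ enter. I expect the main obstacle to be the careful constant bookkeeping in this last step: the per-row bound carries an extra $\tfrac{1}{2j}(\|F_0\|-\|F_2\|)$ relative to $\|F_0\|+\|F_2\|$, so securing the clean threshold simultaneously for all $j$ — and uniformly in the truncation level $N$, which is exactly what makes the bound $C(A)\le 1$ independent of $N$ — may require sharpening the crude symmetric split, for instance via a weighted arithmetic--geometric split $a_j a_{j+1}\le\tfrac12(t_j a_j^2+t_j^{-1}a_{j+1}^2)$ or the exact $LDL^\top$ pivot recursion for the tridiagonal matrix rather than a plain uniform Gershgorin estimate.
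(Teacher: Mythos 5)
Your overall strategy is the same as the paper's: show $\mu(A)\le 0$ and conclude $C(A)\le 1$ via \eq{mu_bound}. Where you differ is in the bookkeeping, and yours is actually the more careful version. The paper splits $A=H_0+H_1+H_2$ into its three block diagonals and writes $\mu(A)\le \mu(H_1)+\|H_0\|+\|H_2\|$ with $\mu(H_1)=N\mu(F_1)$, $\|H_0\|\le N\|F_0\|$, $\|H_2\|\le N\|F_2\|$, charging every level at the uniform scale $N$ so that the hypothesis closes the argument in one line. But the log-norm of a block-diagonal matrix is the \emph{maximum} of the blockwise log-norms, so with $\mu(F_1)<0$ one has $\mu(H_1)=\mu(F_1)$, not $N\mu(F_1)$; your level-resolved accounting ($\mu(A_j^j)=j\mu(F_1)$, $\|A_{j+1}^j\|\le j\|F_2\|$, $\|A_{j-1}^j\|\le j\|F_0\|$, then a scalar tridiagonal comparison matrix $M$) is the honest version of that estimate.

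The genuine gap is exactly the one you flag: the final semidefiniteness step does not close, and it cannot be closed by better weights. Your row condition $|\mu(F_1)|\ge\|F_2\|+\|F_0\|+\tfrac{1}{2j}\bigl(\|F_0\|-\|F_2\|\bigr)$ shows the argument does go through under the stated hypothesis when $\|F_0\|\le\|F_2\|$ (and in general under the stronger hypothesis $|\mu(F_1)|\ge\tfrac12\|F_2\|+\tfrac32\|F_0\|$, which is the binding $j=1$ row). But when $\|F_0\|>\|F_2\|$ the surplus is real: the off-diagonal weight attached to level $j$ exceeds $j(\|F_2\|+\|F_0\|)$ by an additive constant coming from the $(j+1)$ terms in $A_{j-1}^j$, and no choice of $t_j$ in the split $a_ja_{j+1}\le\tfrac12(t_ja_j^2+t_j^{-1}a_{j+1}^2)$ absorbs an $O(1)$-per-level excess uniformly in $N$. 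Indeed the target inequality itself fails in this regime: take $d=1$, $F_2=0$, $F_0=1$, $F_1=-1.05$, so $|\mu(F_1)|>\|F_0\|+\|F_2\|$; the Carleman matrix is lower bidiagonal with diagonal entries $-1.05j$ and subdiagonal entries $j$, and the test vector $a_j=(0.9)^j$ makes the quadratic form of $(A+A^{\top})/2$ positive once $N\gtrsim 20$, so $\mu(A)>0$, hence $\|e^{At}\|>1$ for small $t>0$ and $C(A)>1$. So the $LDL^{\top}$ recursion will not rescue the proof either; the obstruction you identified is a defect of the statement (its $N$-uniformity under the hypothesis $|\mu(F_1)|>\|F_0\|+\|F_2\|$ alone), not of your method, and the paper's one-line estimate passes over it only by virtue of the incorrect identity $\mu(H_1)=N\mu(F_1)$.
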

\begin{proof}
Let $D$ be a block-diagonal matrix with diagonal blocks given by $I/j$ (which makes it also a diagonal matrix). Here each identity matrix of size $\R^{d^j\times d^j}$ for the $j^{th}$ block with a total of $N$ blocks. Now consider the matrix $Q=DA+A^\dag D$. We are interested in the log-norm of $Q$. First split $A=H_0+H_1+H_2$, where
\begin{align}
    H_0 &= \sum_{j=2}^N \ket{j}\bra{j-1}\otimes A^j_{j-1}\\
    H_1 &= \sum_{j=1}^N \ket{j}\bra{j}\otimes A^j_j\\
    H_2 &= \sum_{j=1}^{N-1} \ket{j}\bra{j+1}\otimes A^j_{j+1}\,.
\end{align}
Then $Q$ can be split as
\begin{equation}
    Q=(DH_0 + H_0^\dag D) + (DH_1 + H_1^\dag D) + (DH_2 + H_2^\dag D) 
\end{equation}
We have
\begin{equation}
    \mu(Q)= \frac{1}{2}\sup_{x:\|x\|=1}\bra{x}(DH_1+H_1^\dag D)\ket{x} + \frac{1}{2}\sup_{x:\|x\|=1}\bra{x}(D(H_0+H_2) + (H_0^\dag + H_2^\dag)D)\ket{x}\,.
\end{equation}
Now
\begin{equation}
    \mu(DH_1) = \mu(F_1)\,,
\end{equation}
and 
\begin{align}
    \mu(D(H_0+H_2))&\leq \frac{1}{2}(\|D(H_0+H_2)\| + \|(H_0^\dag + H_2^\dag)D\|)\\
    &\leq \|F_0\|+\|F_2\|\,.
\end{align}
This gives us
\begin{equation}
    \mu(Q)\leq \mu(F_1) + \|F_0\|+\|F_2\|\,.
\end{equation}
The above quantity is negative based on our assumption that $R<1$.

Next, we use a result from \cite{Plischke_2005} (see also \cite{Jennings_2024}, where this was first used in quantum algorithms for differential equations). Using this, we get
\begin{equation}
    \|\exp(At)\|\leq \kappa(D)=N\,.
\end{equation}
This gives us
\begin{equation}
    C(A)\leq N\,.
\end{equation}
\end{proof}

Before we state and prove the next lemma, we give a definition of rescaling that we use here\footnote{Note that rescaling was already introduced in \cite{Liue2026805118}.}.
Consider the quadratic ODE from \eq{quadratic_ode} defined as follows.
\begin{equation}\label{eq:q_ode}
    \frac{du}{dt} = F_0 + F_1u + F_2u^{\otimes 2}\,,\hspace{0.2in} u(0)=u_{in}\,.
\end{equation}
It has been shown in \cite{Liue2026805118} that this can be rescaled by a factor $\gamma$ as follows. Define a new variable $y$ such that $u=\gamma y$. In terms of $y$, the above equation becomes
\begin{equation}
    \frac{dy}{dt} = \frac{F_0}{\gamma} + F_1y + \gamma F_2 y^{\otimes 2}\,,\hspace{0.2in} y(0) = \frac{u_{in}}{\gamma}\,.
\end{equation}
Using this kind of rescaling, we can make the norm of the initial state strictly less than unity (as was done in \cite{Liue2026805118}). We are now ready to state the next lemma.
\begin{lemma}\label{lem:A_N}
Suppose $R<1$ and suppose there is a rescaling such that we have (after rescaling)
\begin{enumerate}
    \item $C(A)\leq N$ and,
    \item $\|u(0)\|<1$.
\end{enumerate}
Then, if we choose 
\begin{equation}
    N\geq \Bigg\lceil\frac{2\log(T\|F_2\|/\delta \|u(T)\|)}{\log(1/\|u(0)\|)}\Bigg\rceil\,,
\end{equation}
we have that the norm of the error $\|\eta_1(t)\| = \|x_1(t) -u(t)\|$ coming from truncating Carleman linearization to $N$ steps can be bounded as 
\begin{equation}
    \|\eta_1(T)\|\leq \delta\|u(T)\|\,.
\end{equation}
\end{lemma}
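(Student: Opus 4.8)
The plan is to turn the truncation error into the solution of a linear inhomogeneous ODE driven by a single dropped coupling term, and then control it by the uniform propagator bound $C(A)\le 1$. Throughout I work with the rescaled variables, for which hypotheses (1) and (2) hold; since both $\eta_1$ and $u$ scale by the same factor $\gamma$, the relative bound $\|\eta_1(T)\|\le\delta\|u(T)\|$ is unaffected by rescaling. First I would verify that the \emph{exact} tensor powers satisfy the full Carleman hierarchy: differentiating $u^{\otimes j}$ by the Leibniz rule and substituting $\dot u=F_0+F_1u+F_2u^{\otimes 2}$ yields precisely $\frac{d}{dt}u^{\otimes j}=A_{j-1}^j u^{\otimes(j-1)}+A_j^j u^{\otimes j}+A_{j+1}^j u^{\otimes(j+1)}$ for every $j\ge1$, matching the block rows of \eq{UODE}. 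The truncated system keeps these rows for $j=1,\dots,N$ but omits $A_{N+1}^N u^{\otimes(N+1)}$ in the last row. Subtracting and writing $\eta=(\eta_1,\dots,\eta_N)$ therefore gives
\begin{equation}
    \frac{d\eta}{dt}=A\eta+\tilde b\,,\qquad\eta(0)=0\,,
\end{equation}
where $\tilde b(t)$ vanishes except in its last block, where it equals $A_{N+1}^N u^{\otimes(N+1)}(t)$; the initial condition is zero because $x_j(0)=u_{in}^{\otimes j}$ agrees with the exact moments and the $F_0$ terms cancel in the interior rows.

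Second, I would apply Duhamel's formula, $\eta(T)=\int_0^T e^{A(T-s)}\tilde b(s)\,ds$, and estimate
\begin{equation}
    \|\eta_1(T)\|\le\|\eta(T)\|\le\int_0^T\|e^{A(T-s)}\|\,\|\tilde b(s)\|\,ds\le C(A)\int_0^T\|\tilde b(s)\|\,ds\,,
\end{equation}
using $T-s\in[0,T]$ and \defn{C(A)}. This is exactly where the philosophy of the paper pays off: instead of diagonalizing $A$ we simply invoke $C(A)\le1$ from \lem{C(A)_nlode}, so the propagator contributes a harmless factor of $1$.

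Third, I would bound the source. Since $A_{N+1}^N$ is a sum of $N$ terms of the form $I^{\otimes i}\otimes F_2\otimes I^{\otimes(N-1-i)}$, each of operator norm $\|F_2\|$, we get $\|A_{N+1}^N\|\le N\|F_2\|$, while $\|u^{\otimes(N+1)}(s)\|=\|u(s)\|^{N+1}$. Here \lem{Carleman} enters: because $R<1$ the solution norm never grows, so $\|u(s)\|\le\|u(0)\|<1$ for all $s\in[0,T]$. Combining the three factors yields
\begin{equation}
    \|\eta_1(T)\|\le T\,N\,\|F_2\|\,\|u(0)\|^{N+1}\,.
\end{equation}
It then remains to check that the stated $N$ drives this below $\delta\|u(T)\|$; taking logarithms, the linear term $N\log(1/\|u(0)\|)$ dominates, and the factor $2$ in the numerator of the chosen $N$ provides the headroom needed to absorb the logarithmic target together with the polynomial prefactor $N$ and the base factor $\|u(0)\|<1$.

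The main obstacle I anticipate is the first step rather than the estimates: one must confirm that the exact moments solve the hierarchy \emph{exactly}, so that the error is forced by the \emph{single} omitted term $A_{N+1}^N u^{\otimes(N+1)}$ and not by mismatches in the interior rows, and that the initial data cancel. Once this is in place, everything downstream is a one-line Duhamel estimate powered by $C(A)\le1$. The only quantitative care needed is in the final line, where the extra polynomial factor $N$ (and the factors $T$ and $\|F_2\|$) must be folded into the clean logarithmic formula for $N$; the factor-of-two slack is precisely what makes this bookkeeping work.
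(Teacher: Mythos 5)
Your proposal is correct and follows essentially the same route as the paper: subtract the truncated hierarchy from the exact one to get $\dot\eta = A\eta + \hat b$ with the source supported in the last block, apply Duhamel with $\|e^{A(T-s)}\|\le C(A)\le 1$, bound the source by $N\|F_2\|\|u(0)\|^{N+1}$ via \lem{Carleman}, and choose $N$ to close the estimate. The paper is, if anything, slightly more casual than you are about absorbing the polynomial prefactor $N$ into the logarithmic choice of $N$, so no gap to report.
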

\begin{remark}
When $R<1$, we show in \lem{conditions}, the existence of a rescaling that satisfies the above conditions (1) and (2).    
\end{remark}
\begin{proof}
First, note that the exact solution $u(t)$ of the original quadratic ODE \eq{quadratic_ode} satisfies
\begin{equation}
\frac{\d{}}{\d{t}}
  \begin{pmatrix}
    u \\
    u^{\otimes 2} \\
    u^{\otimes 3} \\
    \vdots \\
    u^{\otimes(N-1)} \\
    u^{\otimes N} \\
    \vdots \\
  \end{pmatrix}
=
  \begin{pmatrix}
    A_1^1 & A_2^1 &  &  &  &  &  \\
    A_1^2 & A_2^2 & A_3^2 & &  &  &  \\
     & A_2^3 & A_3^3 & A_4^3 &  &  &  \\
     &  & \ddots & \ddots & \ddots &  &  \\
     &  &  & A_{N-2}^{N-1} & A_{N-1}^{N-1} & A_N^{N-1} &  \\
     &  &  &  & A_{N-1}^N & A_N^N & \ddots \\
     &  &  &  &  & \ddots & \ddots \\
  \end{pmatrix}
  \begin{pmatrix}
    u \\
    u^{\otimes 2} \\
    u^{\otimes 3} \\
    \vdots \\
    u^{\otimes(N-1)} \\
    u^{\otimes N} \\
    \vdots \\
  \end{pmatrix}+
  \begin{pmatrix}
    F_0 \\
    0 \\
    0 \\
    \vdots \\
    0 \\
    0 \\
    \vdots \\
  \end{pmatrix},
\label{eq:LUODE}
\end{equation}
and the approximated solution satisfies \eq{UODE}. Comparing these equations, we have the tri-diagonal block structure
\begin{equation}
\frac{\d{}}{\d{t}}
  \begin{pmatrix}
    \eta_1 \\
    \eta_2 \\
    \eta_3 \\
    \vdots \\
    \eta_{N-1} \\
    \eta_N \\
  \end{pmatrix}
=
  \begin{pmatrix}
    A_1^1 & A_2^1 &  &  &  &  \\
    A_1^2 & A_2^2 & A_3^2 & &  &  \\
     & A_2^3 & A_3^3 & A_4^3 &  &  \\
     &  & \ddots & \ddots & \ddots &  \\
     &  &  & A_{N-2}^{N-1} & A_{N-1}^{N-1} & A_N^{N-1} \\
     &  &  &  & A_{N-1}^N & A_N^N \\
  \end{pmatrix}
  \begin{pmatrix}
    \eta_1 \\
    \eta_2 \\
    \eta_3 \\
    \vdots \\
    \eta_{N-1} \\
    \eta_N \\
  \end{pmatrix}+
  \begin{pmatrix}
    0 \\
    0 \\
    0 \\
    \vdots \\
    0 \\
    A_{N+1}^Nu^{\otimes(N+1)} \\
  \end{pmatrix},
\label{eq:EUODE}
\end{equation}
which we write compactly as
\begin{equation}
  \frac{\d{\eta}}{\d{t}} = A \eta + \hat b(t), \qquad
  \eta(0) = 0
\label{eq:ELODE}
\end{equation}
This can be integrated to give
\begin{equation}
    \eta(t) = \int_{0}^t e^{A(t-s)}\hat{b}(s) ds\,.
\end{equation}
The norm of the error satisfies
\begin{equation}
    \|\eta\|\leq \int_{0}^t\|e^{A(t-s)}\|\|\hat{b}(s)\|ds\,.
\end{equation}
When $R<1$ and $\mu(F_1)<0$, using \lem{Carleman}, we have $\|u\|\leq \|u(0)\|$ and using \lem{C(A)_nlode}, we have
\begin{equation}
    \|e^{A(t-s)}\|\leq C(A)\leq N\,.
\end{equation}
This means we have
\begin{equation}
    \|\hat{b}(s)\|\leq \|A^N_{N+1}\|\|u^{N+1}\|\leq N\|F_2\|\|u(0)\|^{N+1}\,.
\end{equation}
We can re-write the error as follows.
\begin{equation}
    \|\eta\|\leq N^2\|F_2\|T\|u(0)\|^{N+1}\,.
\end{equation}
Since we have rescaled to pick $\|u(0)\|<1$, we have that $\|\eta\|$ converges to zero. Therefore, if we choose
\begin{equation}
    N\geq \Bigg\lceil\frac{2\log(T\|F_2\|/\delta \|u(T)\|)}{\log(1/\|u(0)\|)}\Bigg\rceil\,,
\end{equation}
we get
\begin{equation}
    \|\eta\|\leq \delta \|u(T)\|\,.
\end{equation}
\end{proof}

Finally, we need another result (whose proof is in \appx{appendix_B}).
\begin{restatable}{lemma}{conds}\label{lem:conditions}
For the quadratic ODE \eq{quadratic_ode}, suppose $R<1$. There exists a rescaling $\gamma$ i.e., $u=\gamma y$ such that we have
\begin{enumerate}
    \item $C(\tilde{A})\leq N$ and,
    \item $\|y(0)\|<1$,
\end{enumerate}
 where $\tilde{A}$ denotes the new $A$ matrix after rescaling (comprised of the new $F$ matrices denoted $\tilde{F}_i$).
\end{restatable}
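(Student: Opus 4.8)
The plan is to use the freedom in the rescaling parameter $\gamma$ so that, after the substitution $u=\gamma y$, the rescaled system satisfies the hypotheses of \lem{C(A)_nlode}. Recall that under $u=\gamma y$ the new coefficient matrices are $\tilde F_0=F_0/\gamma$, $\tilde F_1=F_1$ and $\tilde F_2=\gamma F_2$, while the initial condition becomes $y(0)=u_{in}/\gamma$. Since $\tilde F_1=F_1$ is unchanged we have $\mu(\tilde F_1)=\mu(F_1)<0$ for every $\gamma$, so the sign hypothesis of \lem{C(A)_nlode} is automatic and only the size hypothesis needs to be engineered.

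First I would reduce both desired conclusions to explicit inequalities in $\gamma$. By \lem{C(A)_nlode}, conclusion (1), namely $C(\tilde A)\leq 1$, follows once $|\mu(F_1)|>\|\tilde F_0\|+\|\tilde F_2\|=\|F_0\|/\gamma+\gamma\|F_2\|$, while conclusion (2) is just $\|y(0)\|=\|u_{in}\|/\gamma<1$, i.e. $\gamma>\|u_{in}\|$. Writing $f(\gamma):=\|F_0\|/\gamma+\gamma\|F_2\|$, it therefore suffices to exhibit a single $\gamma>\|u_{in}\|$ with $f(\gamma)<|\mu(F_1)|$.

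The key observation is that the hypothesis $R<1$ is exactly the statement $f(\|u_{in}\|)<|\mu(F_1)|$, since $f(\|u_{in}\|)=\|F_0\|/\|u_{in}\|+\|u_{in}\|\|F_2\|=R\,|\mu(F_1)|$. Thus the boundary value $\gamma=\|u_{in}\|$ already satisfies the inequality in (1) strictly, and fails (2) only by an equality. To make this quantitative, note that $f(\gamma)<|\mu(F_1)|$ is equivalent to the quadratic condition $\|F_2\|\gamma^2-|\mu(F_1)|\gamma+\|F_0\|<0$, whose discriminant $|\mu(F_1)|^2-4\|F_0\|\|F_2\|$ is positive: by AM--GM, $R<1$ forces $2\sqrt{\|F_0\|\|F_2\|}\leq f(\|u_{in}\|)<|\mu(F_1)|$. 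Hence $f(\gamma)<|\mu(F_1)|$ holds precisely on the open interval $(\gamma_-,\gamma_+)$ with $\gamma_\pm=\bigl(|\mu(F_1)|\pm\sqrt{|\mu(F_1)|^2-4\|F_0\|\|F_2\|}\bigr)/(2\|F_2\|)$, and the previous observation places $\|u_{in}\|$ strictly inside this interval. Any $\gamma\in(\|u_{in}\|,\gamma_+)$ then satisfies both $\gamma>\|u_{in}\|$ and $f(\gamma)<|\mu(F_1)|$, so \lem{C(A)_nlode} yields $C(\tilde A)\leq 1$ and $\|y(0)\|<1$ simultaneously.

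I expect the only delicate points to be bookkeeping the degenerate cases of the quadratic argument and confirming that the interval is genuinely nonempty. If $\|F_2\|=0$ the condition on (1) is linear and $f$ is monotone decreasing, so one simply takes $\gamma>\max\{\|u_{in}\|,\|F_0\|/|\mu(F_1)|\}$; if $\|F_0\|=0$ one takes any $\gamma\in(\|u_{in}\|,|\mu(F_1)|/\|F_2\|)$. Nonemptiness of $(\|u_{in}\|,\gamma_+)$ reduces to $\|u_{in}\|<\gamma_+$, which is immediate since $\|u_{in}\|$ already lies in $(\gamma_-,\gamma_+)$. No deeper difficulty arises, because the substantive analytic content, the matrix-exponential bound giving $C(A)\leq 1$, has already been isolated in \lem{C(A)_nlode}; this lemma is purely the observation that $R<1$ leaves room to push the rescaled initial norm strictly below one without violating the log-norm budget.
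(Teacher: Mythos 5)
Your proof is correct and follows essentially the same route as the paper: the condition $f(\gamma)<|\mu(F_1)|$ is exactly the paper's quadratic inequality $Q(\gamma)=\|F_2\|\gamma^2+\mu(F_1)\gamma+\|F_0\|<0$, your $\gamma_\pm$ are its roots $r_\pm$, and both arguments pick $\gamma$ in the nonempty interval $(\|u_{in}\|,r_+)$ after noting that $R<1$ places $\|u_{in}\|$ strictly between the roots. Your explicit AM--GM check of the discriminant and the handling of the degenerate cases $\|F_2\|=0$ or $\|F_0\|=0$ are minor additions the paper leaves implicit.
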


We are now ready to prove the main result in this section.
\begin{theorem}
There is an efficient quantum algorithm to solve the quadratic ODE from \eq{quadratic_ode} i.e., to produce a quantum state $\epsilon$ close to a state proportional to the solution with query complexity
\begin{equation}
    O\Big(g_uT\|A\|\poly\Big(N,s,\log(1+\frac{\|F_0\|}{\|u(T)\|}),\log(\frac{1}{\epsilon}),\log(T\|A\|)\Big)\Big)\,,
\end{equation}
and gate complexity larger by a factor of at most
\begin{equation}
    O\Big(\polylog \Big(n,(1+\frac{\|F_0\|}{\|u(T)\|}),\frac{1}{\epsilon},T\|A\|\Big)\Big)\,,
\end{equation}
where 
\begin{align}
    N &= \Bigg\lceil\frac{2\log(T\|F_2\|/\delta \|u(T)\|)}{\log(1/\|u_{in}\|)}\Bigg\rceil\label{eq:N_def}\\
    g_u&=\frac{\|u_{in}\|}{\|u(T)\|}\,.
\end{align}
\end{theorem}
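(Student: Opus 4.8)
The plan is to reduce the nonlinear problem to the linear one by Carleman linearization and then invoke \thm{main_thm}. First I would apply the rescaling $u=\gamma y$ guaranteed by \lem{conditions}: since $R<1$, there is a choice of $\gamma$ for which the rescaled Carleman matrix $\tilde A$ satisfies $C(\tilde A)\leq 1$ and the rescaled initial condition obeys $\|y(0)\|<1$. From here on I work with the rescaled system, writing $A$, $b$, $u$ for the rescaled objects; since $\gamma$ is a scalar, rescaling does not change the \emph{direction} of the solution, so a state proportional to the rescaled $u(T)$ is also proportional to the original one.

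Next I would truncate the Carleman hierarchy at the level $N$ given by \eq{N_def}, producing the linear inhomogeneous ODE $\frac{dx}{dt}=Ax+b$ of dimension $\Delta=O(d^N)$, sparsity $3Ns$, and inhomogeneity $b=[F_0;0;\ldots;0]$ so that $\|b\|=\|F_0\|$. By \lem{A_N} this choice of $N$ guarantees that the first block of the linear solution satisfies $\|x_1(T)-u(T)\|=\|\eta_1(T)\|\leq\delta\|u(T)\|$, i.e.\ $x_1(T)$ is $\delta$-close (relatively) to the true nonlinear solution. I would then run the linear-ODE algorithm of \thm{main_thm} on this system to obtain a state $\epsilon_{\mathrm{lin}}$-close to $x(T)/\|x(T)\|$ and post-select the first-block register to get a state proportional to $x_1(T)$. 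The quantitative inputs are $C(A)\leq 1$ (from \lem{C(A)_nlode} via the rescaling, so this factor disappears), $\log\Delta=N\log d$, sparsity $3Ns$, and $\|b\|/\|x_T\|=\|F_0\|/\|x(T)\|$ with $\|x(T)\|=\Theta(\|u(T)\|)$; substituting these into the complexity of \thm{main_thm} yields the stated query and gate bounds, with $N$ and the sparsity folded into the $\poly$ factor and $T\|A\|$ retained explicitly. For the error budget I would chain two normalization estimates of the kind already used in \thm{main_thm}: after the first-block projection the output is $O(\epsilon_{\mathrm{lin}})$-close to $x_1(T)/\|x_1(T)\|$, and $x_1(T)/\|x_1(T)\|$ is $2\delta$-close to $u(T)/\|u(T)\|$; choosing $\epsilon_{\mathrm{lin}},\delta=\Theta(\epsilon)$ gives overall error $\epsilon$.

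The step I expect to be the main obstacle is controlling the two post-selection and normalization factors. First, I must replace the linear quantity $g=\max_t\|x(t)\|/\|x(T)\|$ appearing in \thm{main_thm} by the nonlinear $g_u=\|u_{in}\|/\|u(T)\|$: using \lem{Carleman} (so $\|u(t)\|\leq\|u_{in}\|<1$) together with the geometric bound $\|x(t)\|^2=\sum_{j=1}^N\|u(t)\|^{2j}\leq\|u(t)\|^2/(1-\|u(t)\|^2)$ and $\|x(T)\|\geq\|x_1(T)\|\geq(1-\delta)\|u(T)\|$, I get $g=O(g_u)$ up to a factor depending only on $1-\|u_{in}\|^2$. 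Second, the extra post-selection onto the first block must succeed with at least constant probability; this holds because $\|x_1(T)\|^2/\|x(T)\|^2\geq 1-\|u(T)\|^2$ is bounded away from zero when $\|u(T)\|<1$, so the first-block amplitude is $\Theta(1)$ and the projection inflates neither the error nor the run time by more than a constant. I would also verify $\|A\|\leq N(\|F_0\|+\|F_1\|+\|F_2\|)$, so that $T\|A\|$ and its logarithm are the only genuinely problem-size-dependent quantities surviving in the final bound.
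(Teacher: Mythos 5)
Your proposal follows essentially the same route as the paper's proof: rescale via \lem{conditions}, truncate the Carleman hierarchy at the stated $N$ using \lem{A_N}, invoke \thm{main_thm} with $C(A)\le 1$ from \lem{C(A)_nlode}, bound $g$ in terms of $g_u$, and post-select the first block. The only substantive differences are bookkeeping: the paper uses the cruder bound $\|x(t)\|^2\le (1+\delta)^2 N\|u(t)\|^2$ (yielding $g\le 3\sqrt{N}\,g_u$ and a first-block success probability $\ge 1/(81Ng_u^2)$) in place of your geometric-series estimate, and it propagates the unnormalized first-block error explicitly by requiring $\delta+(1+\delta)\delta'\sqrt{N}\le \epsilon/2$, i.e.\ $\delta'=\Theta(\epsilon/\sqrt{N})$ rather than your $\Theta(\epsilon)$ --- but since the complexity depends only logarithmically on $\delta'$ this does not affect the stated query and gate bounds.
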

\begin{proof}
After rescaling to make the norm of the initial state smaller than unity, we apply the truncated Carleman linearization with $N$ as above in \eq{N_def}.
\begin{equation}
    N=\Bigg\lceil\frac{2\log(T\|F_2\|/\delta \|u(T)\|)}{\log(1/\|u_{in}\|)}\Bigg\rceil\,.
\end{equation}
We get a linear ODE
\begin{equation}
    \frac{\d{x}}{\d{t}}=Ax + b\,,
\end{equation}
with the following parameters.
\begin{enumerate}
    \item Sparsity of $A$ is $3Ns$, where $s$ is the sparsity of $F_0$, $F_1$ and $F_2$.
    \item Dimension of $A$ is $D=O(d^N)$, where $d$ is the dimension of $F_1$.
    \item $C(A)\leq N$ as shown in \lem{C(A)_nlode}.
    \item A bound on $\|A\|$ can be found as follows. As before, split as $A=H_0+H_1+H_2$, where
    \begin{align}
        H_0 &= \sum_{j=2}^N \ket{j}\bra{j-1}\otimes A^j_{j-1}\\
        H_1 &= \sum_{j=1}^N \ket{j}\bra{j}\otimes A^j_j\\
        H_2 &= \sum_{j=1}^{N-1} \ket{j}\bra{j+1}\otimes A^j_{j+1}\,.
    \end{align}
    We can then write
    \begin{equation}
        \|A\|\leq \|H_0\|+\|H_1\|+\|H_2\|\leq N(\|F_0\| + \|F_1\| + \|F_2\|)\,.
    \end{equation}
    \item To bound the quantity $\|x(T)\|$, note that for all $t$
    
    \begin{equation}
        \begin{aligned}
        \|x(t)\|^2
        &= \sum_{j=1}^N\|x_j(t)\|^2
        \leq \sum_{j=1}^N(1+\delta)^2\|u^{\otimes j}(t)\|^2
        \le (1+\delta)^2N\|u(t)\|^2 \\
        \|x(t)\|^2 &= \sum_{j=1}^N\|x_j(t)\|^2 \ge \sum_{j=1}^N(1-\delta)^2\|u^{\otimes j}(t)\|^2\ge (1-\delta)^2\|u(t)\|^{2}\,.
        \label{eq:parallel_inequality}
        \end{aligned}
    \end{equation}
    \item Recall that the quantity $g$ is defined in \eq{g} as follows
    \begin{equation}
        g=\frac{\max_{t\in [0,T]}\|x(t)\|}{\|x(T)\|}\,.
    \end{equation}
    Using \eq{parallel_inequality} above and that $\max_{t\in [0,T]}\|u(t)\|=\|u_{in}\|$ when $R<1$, we get
    \begin{equation}
        g\leq \frac{1+\delta}{1-\delta}\sqrt{N}g_u\,,
    \end{equation}
    where (as above)
    \begin{equation}
        g_u=\frac{\|u_{in}\|}{\|u(T)\|}\,.
    \end{equation}
    When $\delta\leq 1/2$, we have
    \begin{equation}
        g\leq 3\sqrt{N}g_u\,.
    \end{equation}
\end{enumerate}
The initial state preparation is the same as in Lemma 5 of \cite{Liue2026805118}, where it is shown how to prepare the normalized version of
\begin{equation}
    \ket{z_{in}}\propto \sum_{j=1}^N\ket{j}\otimes \ket{u_{in}^{\otimes j}}\otimes \ket{0^{N-j}}\,,
\end{equation}
where $j$ labels the truncation level. This state is $x_0$ in the algorithm from \sec{algorithm} and is used to prepare the initial state for that algorithm.

From \thm{main_thm}, we can use the algorithm from \sec{algorithm} to get a quantum state $y$ such that 
\begin{equation}
    \|y-x(T)\|\leq \delta' \|x(T)\|\leq \delta'(1+\delta)\sqrt{N}\|u(T)\|\,.
\end{equation}
Therefore, the component of $y$ corresponding to the truncation level $1$ also satisfies
\begin{equation}
    \|y_1-x_1(T)\|\leq \delta'(1+\delta)\sqrt{N}\|u(T)\|\,.
\end{equation}
Now, measuring the quantum register containing the truncation level of Carleman linearization gives outcome $1$ with probability 
\begin{equation}
    \frac{\|y_1(T)\|^2}{\|y(T)\|^2}\geq \frac{(1-\delta)^2(1-\delta')^2\|u(T)\|^2}{(1+\delta)^2(1+\delta')^2N\|u_{in}\|^2}\geq\frac{1}{81Ng_u^2}\,.
\end{equation} 
and the measured state is $y_1(T)/\|y_1(T)\|$. This is state is close to $u(T)/\|u(T)\|$. To see this, first note that
\begin{equation}
    \|y_1-u(T)\|\leq(\delta+(1+\delta)\delta'\sqrt{N})\|u(T)\|\,.
\end{equation}
Choose $\delta$ and $\delta'$ so that $\delta+(1+\delta)\delta'\sqrt{N}\leq \epsilon/2$. We now have (similar to the proof of \thm{main_thm})
\begin{align}
    &\Big\|\frac{y_1(T)}{\|y_1(T)\|}-\frac{u(T)}{\|u(T)\|}\Big\|\leq\Big\|\frac{y_1(T)}{\|u(T)\|}-\frac{u(T)}{\|u(T)\|}\Big\|+\Big\|\frac{y_1(T)}{\|y_1(T)\|}-\frac{y_1(T)}{\|u(T)\|}\Big\| \\
    &\leq \frac{\epsilon}{2} + \frac{\Big|\|u(T)\|-\|y_1(T)\|\Big|}{\|u(T)\|}\\
    &\leq \epsilon\,.
\end{align}
Now using \thm{main_thm}, we get the query and gate complexities as in the theorem statement.
\end{proof}

\section{Conclusions}\label{sec:conclusions}
We have presented a quantum algorithm to solve linear inhomogeneous ODEs. The gate and query complexity of our algorithm are bounded. Classes of matrices (namely those with non-positive log-norm but with a large condition number for the diagonalizing matrix) where the bounds derived here are exponentially better than in previous work \cite{Ber14, BCOW17} are discussed. Our algorithm is also able to solve the linear ODE for many classes of non-diagonalizable matrices and even singular matrices. We construct a different linear system here because it is easier to apply our analysis to it, but it is interesting to see if the linear system in \cite{BCOW17} has a small condition number (even when $\kappa_V$ is large). Our condition involving the matrix exponential will be particularly useful when considering block matrices such as in nonlinear ODEs. Bounding the matrix exponential of a block matrix in terms of the matrix exponentials of individual matrices is easier than checking for diagonalizability or other spectral conditions.

An interesting open question is whether the dependence on $\|A\|$ can be improved for stable matrices. For example, could $\|A\|$ be replaced by a quantity proportional to the spectral radius $\rho(A)$ of $A$, which can be considerably smaller than $\|A\|$ for some matrices. It is known that $\rho(A)\leq \|A\|$ for all matrices. But for stable matrices, Gelfand's theorem says that for large enough $k$, there exists an $\epsilon$ such that $\|A^k\|\leq (\rho+\epsilon)^k$. To be able to use this to pick the step size $h$, we need to find a quantitative bound relating $\epsilon$ and $k$. The theory of pseudospectra \cite{trefethen2005spectra} gives bounds on norms of powers of matrices. It would be useful to explore this further to see a relationship between the $\epsilon$-pseudosprectral radius $\rho_\epsilon(A)$ and the norm of $A$.


In the realm of nonlinear differential equations, we have exponentially improved the dependence on error of previous work \cite{Liue2026805118}. Our algorithm is also efficient for nonlinear differential equations when $F_1$ is non-normal (and even non-diagonalizable) if it has a negative log-norm. The linear ODE algorithm here can handle even more general cases (i.e., whenever $C(A)$ is bounded and $A$ is sparse). It would be interesting to extend our algorithm for nonlinear differential equations to these more general cases. For this, one needs to extend the definition of $R$ and prove convergence of Carleman linearization to these more general matrices.

\section*{Acknowledgements}
I would like to thank Nuno Loureiro, Andrew Childs, Jin-Peng Liu, Konstantina Trivisa, Paola Capellaro, Abtin Ameri, Herman K\o lden, Erika Ye and Matteo Lostaglio for useful discussions on quantum algorithms and differential equations. I am also grateful to the anonymous referees of CIMP for useful comments that helped improve the paper. An error in Lemma 16 is fixed in this version (independently noticed in \cite{jennings2025quantumalgorithmsgeneralnonlinear}). This material is based upon work supported by the US Department of Energy, Office of Science, Office of Fusion Energy Sciences under award number DE-SC0020264.
\bibliographystyle{ieeetr}
\bibliography{qdiff}

\appendix
\section*{Appendices}
\section[Proof of]{Proof of \protect\lem{Carleman}}\label{appx:appendix_A}
For completeness, in this appendix, we give the proof of the above lemma which is from \cite{Liue2026805118}. We make modifications to the proof to give bounds in terms of the log norm. 

\decrease*
\begin{proof}
Consider an instance of the quadratic ODE, and assume $R < 1$. Let
\begin{equation}
r_{\pm} \coloneqq \frac{-\mu(F_0)\pm\sqrt{\mu(F_0)^2-4\|F_2\|\|F_0\|}}{2\|F_2\|}.
\label{eq:roots}
\end{equation}
We first consider the derivative of $\|u(t)\|$ for which we have 
\begin{align}
    \frac{d\|u\|^2}{dt}&= u^\dag F_2(u\otimes u) + (u^\dag\otimes u^\dag)F_2^\dag u + u^\dag(F_1 + F_1^\dag)u + u^\dag F_0(t) + F_0(t)^\dag u,\nonumber\\
    &\leq  2\|F_2\|\|u\|^3 + 2\mu(F_0)\|u\|^2 + 2\|F_0\|\|u\|.
\end{align}
If $\|u\|\neq 0$, then
\begin{equation}
    \frac{d\|u\|}{dt}\leq \|F_2\|\|u\|^2 + \mu(F_1)\|u\| + \|F_0\|.
\end{equation}
Letting $a=\|F_2\|>0$, $b=\mu(F_1)<0$, and $c=\|F_0\|>0$, we consider a $1$-dimensional quadratic ODE
\begin{equation}
\frac{\d{x}}{\d{t}} = ax^2+bx+c, \qquad
x(0) = \|u_{\mathrm{in}}\|.
\end{equation}
Since $R<1 \Leftrightarrow -b>a\|u_{\mathrm{in}}\|+\frac{c}{\|u_{\mathrm{in}}\|}$, the discriminant satisfies
\begin{equation}
b^2-4ac > \biggl(a\|u_{\mathrm{in}}\|+\frac{c}{\|u_{\mathrm{in}}\|}\biggr)^2-4a\|u_{\mathrm{in}}\|\cdot\frac{c}{\|u_{\mathrm{in}}\|} \ge \biggl(a\|u_{\mathrm{in}}\|-\frac{c}{\|u_{\mathrm{in}}\|}\biggr)^2 \ge 0.
\end{equation}
Thus, $r_\pm$ defined in \eq{roots} are distinct real roots of $ax^2+bx+c$. Since $r_-+r_+ = -\frac{b}{a} > 0$ and $r_-r_+ = \frac{c}{a} \ge 0$, we have $0 \le r_- < r_+$. We can rewrite the ODE as
\begin{equation}
\frac{\d{x}}{\d{t}} = ax^2+bx+c = a(x-r_-)(x-r_+), \qquad
x(0) = \|u_{\mathrm{in}}\|.
\end{equation}
Letting $y=x-r_-$, we obtain an associated homogeneous quadratic ODE
\begin{equation}
\frac{\d{y}}{\d{t}} = -a(r_+-r_-)y+ay^2 = ay[y-(r_+-r_-)], \qquad
y(0) =\|u_{\mathrm{in}}\|-r_-.
\end{equation}
Since the homogeneous equation has the closed-form solution
\begin{equation}
y(t) = \frac{r_+-r_-}{1-e^{a(r_+-r_-)t}[1-(r_+-r_-)/(\|u_{\mathrm{in}}\|-r_-)]},
\end{equation}
the solution of the inhomogeneous equation can be obtained as
\begin{equation}
x(t) = \frac{r_+-r_-}{1-e^{a(r_+-r_-)t}[1-(r_+-r_-)/(\|u_{\mathrm{in}}\|-r_-)]} + r_-.
\end{equation}
Therefore we have
\begin{equation}
\|u(t)\| \le \frac{r_+-r_-}{1-e^{a(r_+-r_-)t}[1-(r_+-r_-)/(\|u_{\mathrm{in}}\|-r_-)]} + r_-.
\label{eq:solution-norm}
\end{equation}
Since $R < 1 \Leftrightarrow a\|u_{\mathrm{in}}\|+\frac{c}{\|u_{\mathrm{in}}\|}<-b \Leftrightarrow a\|u_{\mathrm{in}}\|^2+b\|u_{\mathrm{in}}\|+c<0$, $\|u_{\mathrm{in}}\|$ is located between the two roots $r_-$ and $r_+$, and thus $1-(r_+-r_-)/(\|u_{\mathrm{in}}\|-r_-)<0$. This implies $\|u(t)\|$ in \eq{solution-norm} decreases from $u(0) = \|u_{\mathrm{in}}\|$, so we have $\|u(t)\|<\|u_{\mathrm{in}}\|<r_+$ for any $t>0$.
\end{proof}

\section[Proof of]{Proof of \protect\lem{conditions}}\label{appx:appendix_B}
\conds*
\begin{proof}
As before, for notational simplicity, let $a=\|F_2\|$, $b=\mu(F_1)$ and $c=\|F_0\|$. Note that $b<0$ according to our assumption on the log-norm of $F_1$. Assume that we have an instance with the initial state $u(0)$ such that $\|u(0)\|=\|u_{in}\|$. For this, $R<1$ implies 
\begin{equation}\label{eq:Q_1}
    |b|>\frac{c}{\|u_{in}\|} + a\|u_{in}\|\,.
\end{equation} 
Now consider the following quadratic equation in $x\in \mathbb{R}$.
\begin{equation}
    Q(x)=ax^2 + bx + c\,.
\end{equation}
It has been shown in \lem{Carleman} (see also 3.A.1 in the supplemental information of \cite{Liue2026805118}) that this quadratic equation has non-negative unequal roots, say $r_+$ and $r_-$ i.e., $0\leq r_-<r_+$. In the regime $x\in [r_-,r_+]$, we have $Q(x)<0$. It is also clear that $x=\|u_{in}\|$ lies in between $r_-$ and $r_+$ since 
\begin{equation}
    a\|u_{in}\|^2 + b\|u_{in}\| + c<0\,.
\end{equation} 
The above equation follows from \eq{Q_1}, which is a restatement of the assumption $R<1$. The point $x=\|u_{in}\|$ cannot be one of the end points since $Q(\|u_{in}\|)\neq 0$. This means that there is a $\gamma\in (\|u_{in}\|,r_+)$ such that $Q(\gamma)<0$. Now, if we rescale the quadratic ODE such that the initial state has $\|y(0)\|=\|u_{in}\|/\gamma<1$, the new norms (denoted $\Tilde{a}$, $\tilde{b}$ and $\tilde{c}$) are 
\begin{equation}
    \Tilde{a} = a\gamma\,,\hspace{0.2in} \Tilde{b}=b\,,\hspace{0.2in} \Tilde{c} = \frac{c}{\gamma}\,.
\end{equation}
The fact that $Q(\gamma)<0$ implies that
\begin{equation}
    |b| > \frac{c}{\gamma} + \gamma a\,,
\end{equation}
which means
\begin{equation}
    |\Tilde{b}|>\Tilde{c} + \Tilde{a}\,,
\end{equation}
which satisfies condition (1) using \lem{C(A)_nlode}. As for condition (2), the norm of the initial state satisfies
\begin{equation}
    \|y(0)\|=\frac{\|u_{in}\|}{\gamma}<1\,.
\end{equation}
Finally, we still have $R<1$ since rescaling does not change $R$.
    
\end{proof}

\end{document}